\documentclass[10pt]{article}
\usepackage[T1]{fontenc}
\usepackage{geometry} \geometry{margin=1in}
\usepackage{graphicx}
\usepackage{amsmath,amssymb,amsthm,amsfonts,dsfont,mathtools}
\usepackage{enumitem}
\usepackage{relsize}
\usepackage[margin=1cm]{caption}
\usepackage{wrapfig}
\usepackage{frame,color}
\usepackage{environ,subfig}
\usepackage{hyperref}
\usepackage{xspace}
\usepackage{framed}
\usepackage{comment}
\usepackage{changepage}
\usepackage[linesnumbered,boxed]{algorithm2e}
\usepackage{algorithmic}
\usepackage{mathrsfs}

\makeatletter
\renewenvironment{framed}{%
 \def\FrameCommand##1{\hskip\@totalleftmargin
 \fboxsep=\FrameSep\fbox{##1}
     \hskip-\linewidth \hskip-\@totalleftmargin \hskip\columnwidth}%
 \MakeFramed {\advance\hsize-\width
   \@totalleftmargin\z@ \linewidth\hsize
   \@setminipage}}%
 {\par\unskip\endMakeFramed}
\makeatother

\newtheorem{theorem}{Theorem}
\newtheorem{lemma}{Lemma}

\newtheorem{definition}{Definition}

\newtheorem{remark}{Remark}

\newcommand{\ignore}[1]{}

\newcommand{\Expect}{\operatorname{E}}

\newcommand{\Prob}{\operatorname{Pr}}

\newcommand{\paren}[1]{\mathopen{}\left( #1 \right)\mathclose{}}

\newcommand{\sqbrack}[1]{\left[ #1 \right]}

\newcommand{\fr}[2]{\mbox{$\frac{#1}{#2}$}}

\newcommand{\poly}{{\operatorname{poly}}}
\newcommand{\polylog}{\poly \log}

\newcommand{\outdeg}{\operatorname{outdeg}}
\newcommand{\Nout}{N_{\operatorname{out}}}
\newcommand{\Nstar}{N^\star}
\newcommand{\dist}{\operatorname{dist}}

\newcommand{\SM}{\mathsf{S}}
\newcommand{\MD}{\mathsf{M}}
\newcommand{\LG}{\mathsf{L}}

\newcommand{\sparse}{\mathsf{sp}}
\newcommand{\sparseexp}{\mathsf{s}}
\newcommand{\denseexp}{\mathsf{d}}
\newcommand{\bad}{\mathsf{bad}}

\newcommand{\rb}[2]{\raisebox{#1 mm}[0mm][0mm]{#2}}
\newcommand{\istrut}[2][0]{\rule[- #1 mm]{0mm}{#1 mm}\rule{0mm}{#2 mm}}

\newcommand{\hcm}[1][1]{\hspace*{#1 cm}}

\newcommand{\Det}{\mathsf{Det}}
\newcommand{\Detd}{\Det_{\scriptscriptstyle d}}

\newcommand{\ID}{\operatorname{ID}}
\newcommand{\LOCAL}{\mathsf{LOCAL}}

\newcommand{\dense}{{\sf DenseColoringStep}}
\newcommand{\oneshot}{{\sf OneShotColoring}}
\newcommand{\trial}{{\sf ColorBidding}}

\NewEnviron{fcodebox}[1]%
{\fbox{%
\begin{minipage}{#1\linewidth}
\begin{codebox}
\BODY
\end{codebox}
\end{minipage}
}}

\title{An Optimal Distributed $(\Delta+1)$-Coloring Algorithm?\thanks{A preliminary version of this paper appeared in {\em Proceedings 50th Annual ACM SIGACT Symposium on Theory of Computing (STOC)}, pages 445--456, 2018.}}

\author{Yi-Jun Chang\\
University of Michigan
\and
Wenzheng Li\\
IIIS, Tsinghua University
\and
Seth Pettie\thanks{This work is supported by NSF grants CCF-1514383, CCF-1637546, and CCF-1815316.}\\
University of Michigan
}

\begin{document}
\date{}
\maketitle
\thispagestyle{empty}
\setcounter{page}{0}

\begin{abstract}
{\em Vertex coloring} is one of the classic symmetry breaking problems studied in distributed computing.
In this paper we present a new algorithm for $(\Delta+1)$-list coloring in the randomized $\LOCAL$
model running in $O(\Detd(\polylog n))$ time, 
where $\Detd(n')$ is the deterministic complexity
of $(\deg+1)$-list coloring on $n'$-vertex graphs.
(In this problem, each $v$ has a palette of size $\deg(v)+1$.)
This improves upon a previous randomized algorithm of Harris, Schneider, and Su [STOC 2016, \emph{J. ACM} 2018] 
with complexity $O(\sqrt{\log \Delta} + \log\log n + \Detd(\polylog n))$, and, for some range of $\Delta$, 
is much faster than the best known deterministic algorithm of Fraigniaud, Heinrich, and Kosowski [FOCS 2016] 
and
Barenboim, Elkin, and Goldenberg [PODC 2018], 
with complexity
$O(\sqrt{\Delta\log \Delta}\log^\ast \Delta + \log^* n)$.

Our algorithm \underline{\emph{appears to be}} optimal, 
in view of the
$\Omega(\Det(\polylog n))$ randomized lower bound due to 
Chang, Kopelowitz, and Pettie [FOCS 2016, \emph{SIAM J.~Comput.} 2019],
where $\Det$ is the deterministic complexity of $(\Delta+1)$-list coloring.
At present, the best upper bounds on $\Detd(n')$ and $\Det(n')$ 
are both $2^{O\left(\sqrt{\log n'}\right)}$ and use a black box application of network decompositions 
due to Panconesi and Srinivasan [\emph{J. Algorithms} 1996]. It is quite possible that the true deterministic complexities of both problems are the same, asymptotically, which would imply the randomized optimality of our $(\Delta+1)$-list coloring algorithm.
\end{abstract}
\newpage

\section{Introduction}\label{sect:intro}

Much of what we know about the $\LOCAL$ model has emerged from studying the complexity of four
canonical symmetry breaking problems and their variants:
maximal independent set (MIS), $(\Delta+1)$-vertex coloring,
maximal matching, and $(2\Delta-1)$-edge coloring.  The palette sizes ``$\Delta+1$'' and ``$2\Delta-1$''
are minimal to still admit a greedy sequential solution;
here $\Delta$ is the maximum degree of any vertex.

Early work~\cite{Linial92,Naor91,AwerbuchGLP89,PanconesiS96,Luby86,ABI86}
showed that
all the problems are reducible to MIS,
all four problems require $\Omega(\log^* n)$ time,
even with randomization;
all can be solved in $O(\poly(\Delta) + \log^* n)$ time
(optimal when $\Delta$ is constant),
or in $2^{O(\sqrt{\log n})}$ time for any $\Delta$.
Until recently, it was actually consistent with known results that all four problems had the same complexity.

Kuhn, Moscibroda, and Wattenhofer (KMW)~\cite{KuhnMW16} proved that the ``independent set'' problems (MIS and maximal matching)
require $\Omega\paren{\min\left\{\fr{\log\Delta}{\log\log\Delta},\, \sqrt{\fr{\log n}{\log\log n}}\right\}}$ time,
with or without randomization,
via a reduction from $O(1)$-approximate minimum vertex cover.
This lower bound provably separated MIS/maximal matching
from simpler symmetry-breaking problems like $O(\Delta^2)$-coloring, which can be solved in $O(\log^* n)$ time~\cite{Linial92}.

We now know the KMW lower bounds cannot be extended to the canonical coloring problems, nor to variants of MIS like
$(2,t)$-ruling sets, for $t\ge 2$~\cite{BishtKP14,BEPS16,Ghaffari16}.
Elkin, Pettie, and Su~\cite{ElkinPS15} proved that $(2\Delta-1)$-list edge coloring
can be solved by a randomized algorithm in $O(\log\log n + \Det(\polylog n))$ time, which shows that neither
the $\Omega\paren{\fr{\log\Delta}{\log\log\Delta}}$
nor $\Omega\paren{\sqrt{\fr{\log n}{\log\log n}}}$ KMW lower bound applied to this problem.
Here $\Det(n')$ represents the \emph{deterministic} complexity of the problem in question on $n'$-vertex graphs.
Improving on~\cite{BEPS16,SchneiderW10}, Harris, Schneider, and Su~\cite{HarrisSS18} proved a similar
separation for $(\Delta+1)$-vertex coloring.
Their randomized algorithm solves the problem in $O(\sqrt{\log \Delta} + \log\log n + \Detd(\polylog n))$ time,
where $\Detd$ is the complexity of $(\deg+1)$-list coloring.

The ``$\Det(\polylog n)$''  terms in the running times of~\cite{ElkinPS15,HarrisSS18} are a
consequence of the \emph{graph shattering}
technique applied to distributed symmetry breaking.
Barenboim, Elkin, Pettie, and Schneider~\cite{BEPS16} showed that all the classic
symmetry breaking problems could be reduced in $O(\log\Delta)$ or $O(\log^2\Delta)$ time, w.h.p.,
to a situation where we have independent subproblems of size $\polylog(n)$, which can then be solved with the best available
deterministic algorithm.\footnote{In the case of MIS, the subproblems actually have size $\poly(\Delta)\log n$, but satisfy
the additional property that they contain distance-$5$ dominating sets of size $O(\log n)$,
which is often just as good as having $\polylog(n)$ size. See~\cite[\S 3]{BEPS16} or \cite[\S 4]{Ghaffari16} for more discussion of this.}
Later, Chang, Kopelowitz, and Pettie (CKP)~\cite{ChangKP19} gave a simple proof illustrating \emph{why}
graph shattering is inherent to the $\LOCAL$ model: the randomized complexity of any locally checkable problem\footnote{See~\cite{Naor91,ChangP19,ChangKP19}
for the formal definition of the class of locally checkable labeling (LCL) problems.}
is at least its deterministic complexity on $\sqrt{\log n}$-size instances.

The CKP lower bound explains why the state-of-the-art randomized symmetry breaking algorithms have such strange stated running times:
they all depend on a randomized graph shattering routine (Rand.) and a deterministic (Det.) algorithm.
\begin{itemize}
\item
$O(\log\Delta + 2^{O(\sqrt{\log\log n})})$ for MIS \hfill (Rand. due to~\cite{Ghaffari16} and Det. to~\cite{PanconesiS96}),
\item
$O(\sqrt{\log\Delta} + 2^{O(\sqrt{\log\log n})})$ for $(\Delta+1)$-vertex coloring \hfill (Rand. due to~\cite{HarrisSS18} and Det. to~\cite{PanconesiS96}),
\item
$O(\log\Delta + (\log\log n)^3)$ for maximal matching \hfill (Rand. due to~\cite{BEPS16} and Det. to~\cite{FischerG17}),
\item
$O((\log\log n)^6)$ for $(2\Delta-1)$-edge coloring \hfill (Rand. due to~\cite{ElkinPS15} and Det. to~\cite{FischerGK17,GhaffariHK18}).
\end{itemize}
In each, the term that depends on $n$ is the complexity of the best
deterministic algorithm, scaled down to $\polylog(n)$-size instances.
In general, improvements in the deterministic complexities of these problems imply improvements to their randomized complexities,
but only if the running times are improved in terms of ``$n$'' rather than ``$\Delta$.''
For example, a recent line of research has improved
the complexity of $(\Delta+1)$-coloring in terms of $\Delta$, from $O(\Delta+\log^* n)$~\cite{BarenboimEK14},
to $\tilde{O}(\Delta^{3/4}) + O(\log^* n)$~\cite{Barenboim15},
to the state-of-the-art bound
of $O(\sqrt{\Delta\log\Delta}\log^\ast \Delta + \log^* n)$
due to Fraigniaud, Heinrich, and Kosowski~\cite{FraigniaudHK16},
as improved by Barenboim, Elkin, and Goldenberg~\cite{BarenboimEG18}.
These improvements do not have consequences for randomized coloring algorithms using
graph shattering~\cite{BEPS16,HarrisSS18} since we can only assume $\Delta = (\log n)^{\Omega(1)}$ in the shattered instances.
See Table~\ref{table:vertexcoloring} for a summary of lower and upper bounds for distributed $(\Delta+1)$-list coloring
in the $\LOCAL$ model.

\begin{table}\label{c}
\footnotesize
\centering
\begin{tabular}{|l|l|l|}
\multicolumn{1}{l}{} & \multicolumn{1}{l}{\bf Randomized}                   & \multicolumn{1}{l}{\bf Deterministic}\\\hline
             		& $O(\Detd(\polylog n))$    \hfill {\bf new} 	        & $O(\sqrt{\Delta \log \Delta}\log^{\ast}\Delta + \log^\ast n)$   \hfill\istrut[1.5]{4}
             	 \cite{BarenboimEG18}   \\\cline{2-3}
            & $O(\sqrt{\log\Delta}+ \log\log n + \Detd(\polylog n))$ \hcm  \hfill \cite{HarrisSS18} & $O(\sqrt{\Delta}\log^{5/2} \Delta + \log^\ast n)$ \hfill\istrut[1.5]{4}\cite{FraigniaudHK16}\\\cline{2-3}
            & $O(\log\Delta+\Detd(\polylog n))$  \hfill \cite{BEPS16}  	  &   $O(\Delta^{3/4}\log\Delta+\log^*n)$   \hcm \hfill \istrut[1.5]{4}\cite{Barenboim15}\\\cline{2-3}
        	& $O(\log\Delta+\sqrt{\log n})$  \hfill  \cite{SchneiderW10}  	&	$O(\Delta+\log^*n)$         		 \hfill \istrut[1.5]{4}\cite{BarenboimEK14}  \\\cline{2-3}
{Upper}		& $O(\Delta\log\log n)$                 \hfill \cite{KuhnW06}  &	$O(\Delta\log\Delta+\log^*n)$  	\hfill \istrut[1.5]{4}\cite{KuhnW06}  \\\cline{2-3}
{Bounds}	& $O(\log n)$        \hfill \cite{Luby86,ABI86,Johansson99}    &	$O(\Delta\log n)$			\hfill \istrut[1]{4}\cite{AwerbuchGLP89} \\\cline{2-3}
            &                                                              &	$O(\Delta^2+\log^*n)$  		\hfill \istrut[1.5]{4}\cite{GPS87,Linial92}  \\\cline{3-3}
            &                                                               &	$O(\Delta^{O(\Delta)}+\log^*n)$	 \hfill\istrut[1.5]{4}\cite{GS87}   \\\cline{3-3}	
            &                                                               &	$2^{O(\sqrt{\log n})}$      		 \hfill \istrut[1.5]{4}\cite{PanconesiS96}  \\\cline{3-3}
			&											                	&	$2^{O(\sqrt{\log n\log\log n})}$ 	\hfill \istrut[1.5]{4}\cite{AwerbuchGLP89}\\\hline\hline
{Lower}  		& $\Omega(\log^* n) $          \hfill    \cite{Naor91}      &                \rb{-3}{$\Omega(\log^* n)$}    \hfill\istrut[1.5]{4}\rb{-3}{\cite{Linial92}}    \\\cline{2-2}
{Bounds}		& $\Omega(\Det(\sqrt{\log n}))$  \hfill \cite{ChangKP19}    &										\istrut[1.5]{4}	\\\hline
\end{tabular}
\caption{\label{table:vertexcoloring}Development of lower and upper bounds for distributed $(\Delta+1)$-list coloring
in the $\LOCAL$ model.
The terms $\Det(n')$ and $\Detd(n')$ are the deterministic complexities
of $(\Delta+1)$-list coloring and $(\deg+1)$-list coloring on $n'$-vertex graphs.
All algorithms listed, except for~\cite{HarrisSS18} and ours, also solve the $(\deg+1)$-list coloring problem.}
\end{table}

\section{Technical Overview}
In the distributed $\LOCAL$ model, the undirected input graph $G=(V,E)$
and communications network are identical.
Each $v\in V$ hosts a processor that initially knows $\deg(v)$,
a unique $\Theta(\log n)$-bit $\ID(v)$,
and global graph parameters $n = |V|$ and $\Delta =\max_{v\in V} \deg(v)$.
In the $(\Delta+1)$-list coloring problem
each vertex $v$ also has a palette $\Psi(v)$ of allowable colors,
with $|\Psi(v)|\ge \Delta+1$. As vertices progressively commit
to their final color, we also use $\Psi(v)$ to denote $v$'s
available palette, excluding colors taken by its neighbors in $N(v)$.
Each processor
is allowed unbounded computation and has access to a private stream of unbiased random bits.
\emph{Time} is partitioned into synchronized rounds of communication, in which each processor
sends one unbounded message to each neighbor.
At the end of the algorithm, each $v$ declares its output label,
which in our case is a color from $\Psi(v)$ that is distinct from colors declared by all neighbors in $N(v)$.
Refer to~\cite{Linial92,Peleg00} for more on the $\LOCAL$ model and variants.

In this paper we prove that $(\Delta + 1)$-list coloring can be
solved in $O(\Detd(\polylog n))$ time w.h.p.
Our algorithm's performance is best contrasted with the $\Omega(\Det(\polylog n))$ randomized lower
bound of~\cite{ChangKP19}, where $\Det$ is the deterministic complexity of $(\Delta+1)$-list coloring.
Despite the syntactic similarity between the $(\deg+1)$- and $(\Delta+1)$-list coloring problems,
there is no hard evidence showing their complexities are the same, asymptotically.  On the other hand,
every deterministic algorithmic technique developed for $(\Delta+1)$-list coloring applies equally well to
$(\deg+1)$-list coloring~\cite{FraigniaudHK16,BarenboimEG18,Barenboim15,PanconesiS96,AwerbuchGLP89}.
In particular, there is only \emph{one} tool that yields upper bounds in terms of
$n$ (independent of $\Delta$), and that is network decompositions~\cite{AwerbuchGLP89,PanconesiS96}.

Intellectually, our algorithm builds on a succession of breakthroughs by
Schneider and Wattenhofer~\cite{SchneiderW10},
Barenboim, Elkin, Pettie, and Schneider~\cite{BEPS16},
Elkin, Pettie, and Su,~\cite{ElkinPS15}, and
Harris, Schneider, and Su~\cite{HarrisSS18},
which we shall now review.

\subsection{Fast Coloring using Excess Colors}

Schneider and Wattenhofer~\cite{SchneiderW10} gave the first evidence that
the canonical coloring problems may not be subject to the KMW lower bounds.  They showed that for any constants $\epsilon > 0$ and $\gamma > 0$, when $\Delta \geq \log^{1 + \gamma} n$ and the palette size is $(1+\epsilon)\Delta$,
vertex coloring can be solved w.h.p.~in just $O(\log^\ast n)$ time~\cite[Corollary~14]{SchneiderW10}.
The emergence of this log-star behavior in \cite{SchneiderW10} is quite natural.
Consider the case where the palette size of each vertex is
at least $k \Delta$, where $k \ge 2$.
Suppose each vertex $v$ selects $k /2$ colors at random from its palette.
A vertex $v$ can successfully color itself if one of its selected
colors is not selected by any neighbor in $N(v)$.
The total number of colors selected by vertices in $N(v)$ is at most $k \Delta / 2$.
Therefore, the probability that a color selected by $v$ is also selected by someone in $N(v)$ is at most $1/2$,
so $v$ successfully colors itself with probability at least $1 - 2^{-k/2}$.
In expectation, the degree of any vertex after this coloring procedure is at most
$\Delta' = \Delta/2^{k/2}$.
In contrast, the number of
\emph{excess colors}, i.e.,
the current available palette size minus the number of uncolored neighbors,
is non-decreasing over time.
It is still at least
$(k-1)\Delta = (k-1)2^{k/2}\Delta'$
Intuitively, repeating the above procedure for $O(\log^\ast n)$ rounds suffices to color all vertices.

Similar ideas have also been applied in other papers~\cite{SchneiderW10,ElkinPS15,ChangKP19}.
However, for technical reasons, we cannot directly apply the results in these papers.
The main difficulty in our setting is that we need to deal with \emph{oriented} graphs with widely varying out-degrees, palette sizes, and excess colors;
the guaranteed number of excess colors at a vertex depends on its out-degree,
\emph{not} the global parameter $\Delta$.

Lemma~\ref{lem:color-remain} summarizes the properties of our ultrafast coloring algorithm
when each vertex has many excess colors;
its proof appears in Section~\ref{sect:trial-detail}.
Recall that $\Psi(v)$ denotes the palette of $v$,
so $|\Psi(v)| - \deg(v)$ is the number of excess colors at $v$.

\begin{lemma}\label{lem:color-remain}
Consider a directed acyclic graph, where vertex $v$ is associated with a parameter $p_v \leq |\Psi(v)| -  \deg(v)$
We write $p^\star = \min_{v\in V} p_v$.
Suppose that there is a number $C = \Omega(1)$ such that all vertices $v$  satisfy $\sum_{u \in  N_{\operatorname{out}}(v)} 1/ p_u \leq 1/C$.
Let $d^\star$ be the maximum out-degree of the graph.
There is an algorithm that takes $O\left(1 + \log^\ast p^\star - \log^\ast C\right)$ time and achieves the following.
Each vertex $v$ remains uncolored with probability at most $\exp(-\Omega(\sqrt{p^\star})) + d^\star \exp(-\Omega(p^\star))$.
This is true even if the random bits generated outside a constant radius around $v$ are determined adversarially.
\end{lemma}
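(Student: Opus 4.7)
I would design an iterated color-bidding procedure that runs for $T = O(1+\logstar p^\star - \logstar C)$ rounds. In round $i$, each still-uncolored vertex $v$ samples a multiset $T_v^{(i)} \subseteq \Psi(v)$ of $q_i$ colors uniformly at random with replacement, forwards it to its neighbors, and commits to any color in $T_v^{(i)}$ that does not appear in $T_u^{(i)}$ for any $u \in \Nout(v)$. The batch size $q_i$ is coupled to a \emph{local slackness} parameter $C_i$, defined so that the invariant $\sum_{u \in \Nout(v)} 1/p_u \leq 1/C_i$ holds at every surviving $v$ at the start of round $i$; I set $q_i = \lfloor C_i/4 \rfloor$ and cap at $\lceil \sqrt{p^\star}\rceil$, starting from $C_0 = C$, and the crux is to show that $C_i$ grows tower-exponentially across rounds.

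\paragraph{Per-round analysis.}
For a fixed pick $c \in T_v^{(i)}$, out-neighbor $u$'s multiset contains $c$ with probability at most $q_i/|\Psi(u)| \leq q_i/p_u$, so
\[
\Pr[c \text{ is blocked}] \;\leq\; q_i \sum_{u \in \Nout(v)} 1/p_u \;\leq\; q_i/C_i \;\leq\; 1/4.
\]
Because $v$'s picks are conditionally independent of the out-neighbor bids $B = \bigcup_u T_u^{(i)}$, the number of $v$'s unblocked picks stochastically dominates $\mathrm{Bin}(q_i, 1-Y)$, where $Y = |B \cap \Psi(v)|/|\Psi(v)|$. A Chernoff bound, using independence across distinct out-neighbors, keeps $Y \leq 1/2$ outside an event of probability $|\Nout(v)| \exp(-\Omega(p^\star))$, so $v$ remains uncolored in round $i$ with probability at most $(1/2)^{q_i} = \exp(-\Omega(q_i))$ plus that bad-event term.

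\paragraph{Tower growth and conclusion.}
After the round, every surviving out-neighbor $u$ has $p_u$ nondecreasing: coloring one of $u$'s neighbors removes one color from $\Psi(u)$ and one vertex from $\deg(u)$ simultaneously. Hence, to bound the new sum at $v$ it suffices to count surviving out-neighbors. Each out-neighbor fails with probability $\exp(-\Omega(q_i))$, and these events are conditionally independent across distinct out-neighbors given their bids, so a one-sided Chernoff bound over the at most $d^\star$ terms gives $C_{i+1} \geq \exp(\Omega(q_i))$, with deviation mass again absorbed into the $d^\star \exp(-\Omega(p^\star))$ slack. Setting $q_{i+1} = \lfloor C_{i+1}/4 \rfloor$ yields $q_{i+1} \geq 2^{\Omega(q_i)}$, and $T = O(1+\logstar p^\star - \logstar C)$ rounds suffice to push $q_T$ past $\sqrt{p^\star}$. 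Summing per-round failure probabilities yields the claimed $\exp(-\Omega(\sqrt{p^\star})) + d^\star \exp(-\Omega(p^\star))$ bound.

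\paragraph{Main obstacle.}
The hard part is the local, adversary-robust maintenance of the slackness invariant: vertex $v$ needs a one-sided Chernoff estimate on its at most $d^\star$ out-neighbor outcomes that uses only random bits within a constant radius, matching the lemma's adversarial-randomness clause, and the per-round deviation events must aggregate cleanly into a single additive $d^\star \exp(-\Omega(p^\star))$ term rather than a telescoping union bound across $T$ rounds. Once that bookkeeping is in place, the per-round conflict calculation and the $\logstar$ iteration count are a fairly routine application of the excess-color technique pioneered by Schneider and Wattenhofer.
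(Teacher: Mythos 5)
Your overall strategy matches the paper's: iterate a Schneider--Wattenhofer--style color-bidding procedure, track a slackness parameter that grows tower-exponentially, cap it at $\sqrt{p^\star}$, and stop after $O(1 + \log^\ast p^\star - \log^\ast C)$ rounds. The detail of sampling $q_i$ colors with replacement versus the paper's Bernoulli inclusion with probability $C/(2|\Psi(v)|)$ is an immaterial variant. However, you explicitly flag the invariant-maintenance step as your ``main obstacle'' and leave it unresolved, and that step is precisely the content of the paper's Lemma~\ref{lem:shrink}, so the proof as written is incomplete. Two specific issues arise in your sketch of that step.

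First, you invoke ``a one-sided Chernoff bound over the at most $d^\star$ terms'' to show the new contention sum $\sum_{u\ \mathrm{uncolored}} 1/p_u$ concentrates. But the summands are not $0$--$1$ random variables; each is an indicator scaled by a weight $1/p_u$, and these weights can differ by large factors. Chernoff does not apply; the paper uses Hoeffding's inequality for variables in ranges $[0, 1/p_u]$, with $\sum_u (1/p_u)^2 \le 1/(Cp^\star)$, which is what produces the $\exp(-\Omega(p^\star \exp(-C/3)/C))$ tail and ultimately the $\exp(-\Omega(\sqrt{p^\star}))$ term once $C_{k+1}$ reaches $\sqrt{p^\star}$. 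Second, you assert that the failure events of distinct out-neighbors are ``conditionally independent given their bids.'' They are not: two out-neighbors $u, u'$ of $v$ may share common out-neighbors, so their failure events depend on overlapping randomness. The paper resolves this by processing $\Nout(v)$ in reverse topological order and arguing that, conditioned on all bids already revealed, each successive vertex still fails with probability at most $\exp(-C/6)$; this martingale-style conditioning is what licenses the Hoeffding application. You need an analogous argument, not mere independence.

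Finally, your concern that ``the per-round deviation events must aggregate cleanly into a single additive $d^\star\exp(-\Omega(p^\star))$ term rather than a telescoping union bound across $T$ rounds'' is unfounded: the paper in fact does take a plain union bound over all $k^\star = O(\log^\ast p^\star - \log^\ast C)$ rounds, and since $k^\star$ is so small the extra factor is absorbed into the constants hidden by $\Omega(\cdot)$. So that part is not an obstacle at all; the real work is the Hoeffding concentration and the topological-order conditioning you left open.
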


We briefly explain the intuition underlying Lemma~\ref{lem:color-remain}.
Consider the following coloring procedure. Each vertex selects $C/2$ colors from its available colors randomly.
Vertex $v$ successfully colors itself if at least one of its selected colors is not in conflict with
any color selected by vertices in $N_{\text{out}}(v)$.
For each color $c$ selected by $v$,
the probability that $c$ is also selected by some vertex in $N_{\text{out}}(v)$ is $ (C/2) \sum_{u \in  N_{\operatorname{out}}(v)} 1/ p_u
\leq 1/2$. Therefore, the probability that $v$ still remains uncolored after this procedure is $\exp(-\Omega(C))$, improving the gap between the number of excess colors and the out-degree (i.e., the parameter $C$) exponentially. We are done after repeating this procedure for
$O(1 + \log^\ast p^\star - \log^\ast C)$ rounds.
Lemma~\ref{lem:color-remain-simple} is a more user-friendly version of Lemma~\ref{lem:color-remain} for simpler situations.

\begin{lemma}\label{lem:color-remain-simple}
Suppose $|\Psi(v)| \geq (1+\rho)\Delta$ for each vertex $v$, and $\rho = \Omega(1)$.
There is an algorithm that takes $O\left(1 + \log^\ast \Delta - \log^\ast \rho\right)$ time and achieves the following.
Each vertex $v$ remains uncolored with probability at most $\exp(-\Omega(\sqrt{\rho \Delta}))$.
This is true even if the random bits generated outside a constant radius around $v$ are determined adversarially.
\end{lemma}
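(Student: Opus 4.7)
My plan is to derive Lemma~\ref{lem:color-remain-simple} as a straightforward corollary of Lemma~\ref{lem:color-remain} by instantiating its parameters appropriately. First I would acyclically orient the edges (for instance, from lower to higher ID), producing a DAG with maximum out-degree $d^\star \leq \Delta$. For each vertex $v$, I set $p_v = |\Psi(v)| - \deg(v) \geq \rho \Delta$, so $p^\star \geq \rho \Delta$.

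To invoke Lemma~\ref{lem:color-remain} I need to certify the hypothesis on the inverse-$p$ sum. Since every $p_u \geq \rho\Delta$, for any vertex $v$,
\[
\sum_{u \in N_{\operatorname{out}}(v)} \frac{1}{p_u} \;\leq\; \frac{\deg(v)}{\rho \Delta} \;\leq\; \frac{1}{\rho},
\]
so the hypothesis holds with $C = \rho = \Omega(1)$. Lemma~\ref{lem:color-remain} then gives an algorithm of runtime
\[
O\!\left(1 + \log^\ast p^\star - \log^\ast C\right) \;=\; O\!\left(1 + \log^\ast(\rho\Delta) - \log^\ast \rho\right) \;=\; O\!\left(1 + \log^\ast \Delta - \log^\ast \rho\right),
\]
where the last equality uses that $\log^\ast(\rho\Delta) \leq \log^\ast \Delta + O(1)$ in the relevant regime $\rho \leq \Delta$ (if $\rho > \Delta$ then every palette already has slack $\rho\Delta > \Delta^2$, far more than needed, and the statement is trivially achievable). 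Lemma~\ref{lem:color-remain} also guarantees that each vertex $v$ is uncolored with probability at most
\[
\exp\!\left(-\Omega(\sqrt{p^\star})\right) + d^\star \exp\!\left(-\Omega(p^\star)\right) \;\leq\; \exp\!\left(-\Omega(\sqrt{\rho\Delta})\right) + \Delta \exp\!\left(-\Omega(\rho\Delta)\right),
\]
and the locality property (adversarial randomness outside a constant radius) transfers verbatim, since orienting the graph by ID is a purely local operation.

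The one non-routine step is absorbing the $\Delta\exp(-\Omega(\rho\Delta))$ term into $\exp(-\Omega(\sqrt{\rho\Delta}))$, which I expect to be the main (mild) obstacle. Writing $\Delta \exp(-\Omega(\rho\Delta)) = \exp(\ln\Delta - \Omega(\rho\Delta))$, whenever $\rho\Delta \geq c\ln\Delta$ for a suitably large constant $c$, the linear term dominates the logarithmic one and the expression is at most $\exp(-\Omega(\rho\Delta)) \leq \exp(-\Omega(\sqrt{\rho\Delta}))$. In the complementary regime $\rho\Delta = O(\log\Delta)$, the target bound $\exp(-\Omega(\sqrt{\rho\Delta}))$ is already a positive constant, so the claimed inequality holds trivially. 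This case split, together with the $\log^\ast$ arithmetic noted above, completes the derivation; no new probabilistic argument is required beyond the one already encapsulated by Lemma~\ref{lem:color-remain}.
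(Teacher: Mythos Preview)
Your proposal is correct and follows essentially the same approach as the paper: orient the graph acyclically, set $p_v$ so that $p^\star \ge \rho\Delta$, take $C=\rho$ and $d^\star=\Delta$, and invoke Lemma~\ref{lem:color-remain}. The paper's proof is terser (it simply asserts the final time and probability bounds), whereas you spell out the $\log^\ast$ arithmetic and the absorption of the $\Delta\exp(-\Omega(\rho\Delta))$ term, but the argument is the same.
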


\begin{proof}
We apply Lemma~\ref{lem:color-remain}.
Orient the graph arbitrarily,
and then set $p_v = \rho\Delta$ for each $v$.
Use the parameters $C = \rho$, $p^\star = \rho\Delta$, and $d^\star = \Delta$.
The time complexity is $O\left(1 + \log^\ast p^\star - \log^\ast C\right)= O\left(1 + \log^\ast \Delta - \log^\ast \rho\}\right) $.
The failure probability is $\exp(-\Omega(\sqrt{p^\star})) + d^\star \exp(-\Omega(p^\star)) = \exp(-\Omega(\sqrt{\rho \Delta}))$.
\end{proof}

\subsection{Gaining Excess Colors}

Schneider and Wattenhofer~\cite{SchneiderW10} illustrated that vertex coloring can be performed
very quickly, given enough excess colors.  However,
in the $(\Delta+1)$-list coloring problem there \emph{is just one excess color} initially,
so the problem is how to \emph{create} them.
Elkin, Pettie, and Su~\cite{ElkinPS15} observed that
if the graph induced by $N(v)$ is not too dense, then $v$ can obtain a significant number of excess colors after
\emph{one} iteration of the following simple random coloring routine.
 Each vertex $v$, with probability $1/5$, selects a color $c$ from its palette $\Psi(v)$ uniformly at random;
then vertex $v$ successfully colors itself by $c$ if $c$ is not chosen by any vertex in $N(v)$.
Intuitively, if $N(v)$ is not too close to a clique, then
a significant number of \emph{pairs} of vertices in the neighborhood $N(v)$ get assigned
the same color.  Each such pair effectively reduces $v$'s palette size by 1 but its degree by 2,
thereby increasing the number of excess colors at $v$ by 1.

There are many \emph{global} measures of sparsity,
such as arboricity and degeneracy.
We are aware of two locality-sensitive ways to measure it: the \emph{$(1-\epsilon)$-local sparsity} of~\cite{AlonKS99,ElkinPS15,MolloyR97,Vu02},
and the \emph{$\epsilon$-friends} from~\cite{HarrisSS18},
defined formally as follows.

\begin{definition}[\cite{ElkinPS15}]\label{def-sparse-1}
A vertex $v$ is $(1-\epsilon)$-locally sparse if the subgraph induced by $N(v)$
has at most $(1-\epsilon){\Delta\choose 2}$ edges; otherwise $v$ is $(1-\epsilon)$-locally dense.
\end{definition}

\begin{definition}[\cite{HarrisSS18}]\label{def-sparse-2}
An edge $e=\{u,v\}$ is an $\epsilon$-friend edge if $|N(u) \cap N(v)| \geq (1-\epsilon)\Delta$.
We call $u$ an {$\epsilon$-friend} of $v$ if $\{u,v\}$ is an { $\epsilon$-friend edge}.
A vertex $v$ is {$\epsilon$-dense} if $v$ has at least $(1-\epsilon)\Delta$ $\epsilon$-friends,
otherwise it is {$\epsilon$-sparse}.
\end{definition}

Throughout this paper, we only use Definition~\ref{def-sparse-2}.
Lemma~\ref{lem:initial-color} shows that in $O(1)$ time we can create excess colors at all locally sparse vertices.

\begin{lemma}\label{lem:initial-color}
Consider the $(\Delta+1)$-list coloring problem.
There is an $O(1)$-time algorithm that colors a subset of vertices such that the following is true for each $v \in V$ with
$\deg(v) \geq (5/6)\Delta$.
\begin{enumerate}
\item[(i)] With probability  $1 - \exp(-\Omega(\Delta))$, the number of uncolored neighbors of $v$ is at least $\Delta / 2$.
\item[(ii)] 
    With probability  $1 - \exp(-\Omega(\epsilon^2 \Delta))$,
    $v$ has at least $\Omega(\epsilon^2 \Delta)$ excess colors,
    where $\epsilon$ is the highest value such that $v$ is $\epsilon$-sparse.
\end{enumerate}
\end{lemma}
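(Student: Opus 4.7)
The plan is to run one round of the standard \emph{one-shot coloring} procedure: each vertex $v$ independently, with probability $p = 1/20$, picks a uniformly random tentative color $c_v \in \Psi(v)$ (otherwise $c_v = \bot$), and commits to $c_v$ iff $c_v \neq \bot$ and no neighbor of $v$ chose the same color; this costs $O(1)$ rounds. For part~(i), I would dominate the number of successfully colored neighbors of $v$ by the number that even \emph{attempt} a color, which is $\mathrm{Bin}(\deg(v),p) \le \mathrm{Bin}(\Delta, p)$. A Chernoff bound yields that, except with probability $\exp(-\Omega(\Delta))$, at most $2p\Delta = \Delta/10$ neighbors are colored, leaving at least $\deg(v) - \Delta/10 \ge (5/6-1/10)\Delta > \Delta/2$ uncolored.

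Part~(ii) is the real content. Let $A$ be the set of colored neighbors of $v$ and $D$ the set of distinct colors they use; then the gain in the excess-color count $|\Psi(v)|-\deg(v)$ equals $|A| - |D\cap\Psi(v)| \ge \sum_{c\in\Psi(v)} (Y_c-1)^+$, where $Y_c$ is the number of neighbors of $v$ colored $c$. In particular every non-adjacent pair $u_1,u_2 \in N(v)$ that both commit to the same color inside $\Psi(v)$ contributes $\ge 1$ to the gain. My first subtask is the combinatorial claim that if $v$ is $\epsilon$-sparse with $\deg(v)\ge(5/6)\Delta$, then $N(v)$ contains $\Omega(\epsilon^2\Delta^2)$ non-edges. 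Writing $t_u=|N(u)\cap N(v)|$ for $u\in N(v)$, sparsity says that fewer than $(1-\epsilon)\Delta$ of the $t_u$'s exceed $(1-\epsilon)\Delta$, while trivially $t_u\le \deg(v)-1$; maximizing $\sum_u t_u$ subject to these constraints gives $\sum_u t_u \le (1-\epsilon^2)\Delta^2 + O(\Delta)$, so $\binom{\deg(v)}{2}-\tfrac12\sum_u t_u = \Omega(\epsilon^2\Delta^2)$ once $\epsilon^2\Delta$ exceeds a constant (and the complementary regime $\epsilon = O(1/\sqrt{\Delta})$ is handled trivially by the initial slack $|\Psi(v)|-\deg(v)\ge 1$). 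Each such non-edge $\{u_1,u_2\}$ then yields a same-color collision in $\Psi(v)$ with probability $\Omega(p^2\,|\Psi(v)|/(\Delta+1)^2)=\Omega(1/\Delta)$, after paying an $\Omega(1)$ survival factor for both endpoints, so linearity of expectation gives expected gain $\Omega(\epsilon^2\Delta)$.

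The hardest part will be the concentration. The gain is a nonlinear function of many dependent random choices in the $2$-hop neighborhood of $v$, so a direct Chernoff bound is unavailable and a coordinate-wise Azuma bound over $\Delta^{O(1)}$ variables would be too weak. I would finish by invoking Talagrand's inequality for certifiable Lipschitz functions: changing any single $c_u$ alters the gain by $O(1)$, and whenever the gain reaches a value $k$ that value is certified by the choices of $O(k)$ vertices; this yields the desired $\exp(-\Omega(\epsilon^2\Delta))$ tail and completes~(ii).
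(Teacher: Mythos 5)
Your outline follows the same intuition as the paper's (non-edges in $N(v)$ lead to same-color collisions among neighbors, producing excess colors), and your Part~(i) is essentially the paper's Lemma~\ref{lem:num-uncolored}. But for Part~(ii) there are two genuine gaps.

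\paragraph{The list-coloring issue.} Your collision estimate ``each non-edge $\{u_1,u_2\}$ yields a same-color collision in $\Psi(v)$ with probability $\Omega(p^2 |\Psi(v)|/(\Delta+1)^2)=\Omega(1/\Delta)$'' silently assumes that $\Psi(u_1)$, $\Psi(u_2)$, and $\Psi(v)$ overlap in $\Omega(\Delta)$ colors. In the \emph{list}-coloring setting this need not hold: $u_1$ and $u_2$ can have disjoint palettes, or palettes nearly disjoint from $\Psi(v)$, making the collision probability $0$. The paper's proof is organized around exactly this obstacle. It defines two quantities, $f_1(v)$ (neighbors of $v$ that successfully color themselves with a color \emph{outside} $\Psi(v)$) and $f_2(v)$ (colors in $\Psi(v)$ taken by $\ge 2$ neighbors), observes $f_1+f_2$ lower bounds the excess, and splits into two cases (Lemmas~\ref{lem:part1} and \ref{lem:part2}): either a constant fraction of $N(v)$ has palette far from $\Psi(v)$ --- in which case $f_1$ is large, with no collision needed --- or almost all neighbors have palettes close to $\Psi(v)$, in which case the collision-counting argument for $f_2$ goes through. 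Your proposal has no analogue of the $f_1$ branch, so it would be incorrect as written for general lists.

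\paragraph{The concentration step.} Your Talagrand invocation rests on the claim that ``changing any single $c_u$ alters the gain by $O(1)$,'' but the gain is not $O(1)$-Lipschitz in the worst case. Changing $c_u$ can flip the commitment status of every neighbor $w$ of $u$ with $c_w \in \{c_u, c_u'\}$ (conflicts appear or disappear), and there can be $\Theta(\Delta)$ such $w$'s, many of which lie in $N(v)$; thus the gain can jump by $\Theta(\Delta)$ on a single coordinate change. Plain Talagrand (or McDiarmid) therefore does not apply without substantial additional machinery (exceptional-event variants, etc.), which you have not supplied. The paper sidesteps this entirely: it orders a carefully chosen subset $S' \subseteq N(v)$ of size $\Theta(\epsilon\Delta)$ by ID, reveals colors sequentially in that order, and shows that each ``success'' indicator has constant conditional probability \emph{regardless of prior revelations} (using the events $E_i^{\text{good}}$, $E_i^{\text{bad}}$, $E_i^{\text{repeat}}$ and Lemmas~\ref{lem:aux}--\ref{lem:aux2}), so that the relevant sums are stochastically dominated by binomials and a vanilla Chernoff bound suffices. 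Your proposal needs to be reworked along one of these lines --- either the paper's sequential-revelation route, or a correct bounded-differences/Talagrand argument that honestly accounts for the $\Omega(\Delta)$ worst-case effect.

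A minor note: your combinatorial claim that $N(v)$ has $\Omega(\epsilon^2\Delta^2)$ non-edges is only valid once $\deg(v) > (1-\Theta(\epsilon))\Delta$ (otherwise $v$ could have $0$ $\epsilon$-friends with $N(v)$ a clique); your ``trivial slack'' fallback does cover that regime but should be stated in terms of $\deg(v)$ rather than just small $\epsilon$.
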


The algorithm behind Lemma~\ref{lem:initial-color} is the random coloring routine described above.
If a vertex $v$ is $\epsilon$-sparse, then there must be $\Omega(\epsilon^2 \Delta^2)$ pairs of vertices $\{u,w\} \subseteq N(v)$ such that $\{u,w\}$ is \underline{\emph{not}} an edge.
If $|\Psi(u) \cap \Psi(w)| = \Omega(\Delta)$,\footnote{If the condition is not met, then we have  $| (\Psi(u) \cup \Psi(w)) \setminus \Psi(v) | = \Omega(\Delta)$, and so with constant probability one of $u$ and $v$ successfully colors itself with a color not in $\Psi(v)$, and this also increases the number of excess colors at $v$ by $1$.} then
the probability that both $u$ and $v$ are colored by the same color is $\Omega(1/\Delta)$,  so the expected number of excess colors created at $v$ is at least  $\Omega\left(\frac{\epsilon^2 \Delta^2}{\Delta}\right) = \Omega(\epsilon^2 \Delta)$.

Similar but slightly weaker lemmas were proved in~\cite{ElkinPS15,HarrisSS18}.  The corresponding lemma from~\cite{ElkinPS15}
does not apply to \emph{list} coloring, and the corresponding lemma from~\cite{HarrisSS18}
obtains a high probability bound only if $\epsilon^4 \Delta = \Omega(\log n)$.
Optimizing this requirement is of importance, since this is the threshold about how locally
sparse a vertex needs to be in order to obtain excess colors.
Since this is not the main contribution of this work, the proof of Lemma~\ref{lem:initial-color} appears in Appendix~\ref{sect:oneshot-detail}.

The notion of local sparsity
is especially useful for addressing the $(2\Delta-1)$-edge coloring problem~\cite{ElkinPS15}, since it can be phrased as $(\Delta'+1)$-vertex
coloring the \emph{line graph} ($\Delta'= 2\Delta-2$), which is everywhere $(\frac12+o(1))$-locally sparse and is also everywhere $(\frac12 - o(1))$-sparse.

\subsection{Coloring Locally Dense Vertices}
In the vertex coloring problem we cannot count on any kind of local sparsity, so the next challenge
is to make local \emph{density} also work to our advantage.
Harris, Schneider, and Su~\cite{HarrisSS18} developed a remarkable new graph decomposition that can be computed in $O(1)$
rounds of communication.  The decomposition takes a parameter $\epsilon$, and partitions the vertices
into an $\epsilon$-{sparse} set, and several vertex-disjoint $\epsilon$-\emph{dense} components induced by the $\epsilon$-friend edges,
each with weak diameter at most 2.

Based on this decomposition, they designed a $(\Delta+1)$-list coloring algorithm that takes $O(\sqrt{\log \Delta} + \log\log n + \Detd(\polylog n)) = O(\sqrt{\log \Delta}) + 2^{O(\sqrt{\log \log n})}$ time. We briefly overview their algorithm, as follows.

\paragraph{Coloring $\epsilon$-Sparse Vertices.} By utilizing the excess colors, Harris et al.~\cite{HarrisSS18} showed that the $\epsilon$-sparse set can be colored in $O(\log\epsilon^{-1} + \log\log n + \Detd(\polylog n))$
time using techniques in~\cite{ElkinPS15} and~\cite{BEPS16}. More specifically, they applied the algorithm of~\cite[Corollary~4.1]{ElkinPS15} using the $\epsilon' \Delta = \Omega(\epsilon^2 \Delta)$ excess colors, i.e., $\epsilon' = \Theta(\epsilon^2)$. This takes $O\left(\log (\epsilon^{-1})\right) + T\left(n, O\left(\frac{\log^2 n}{\epsilon'}\right)\right)$ time, where $T(n', \Delta') = O(\log \Delta' + \log\log n' + \Detd(\polylog n'))$ is the time complexity of the $(\deg+1)$-list coloring algorithm of~\cite[Theorem~5.1]{BEPS16} on $n'$-vertex graphs of maximum degree $\Delta'$.

\paragraph{Coloring $\epsilon$-Dense Vertices.}
For $\epsilon$-dense vertices,
Harris et al.~\cite{HarrisSS18} proved that by
coordinating the coloring decisions within each
dense component, it takes only $O(\log_{1/\epsilon} \Delta + \log\log n + \Detd(\polylog n))$ time to color the dense sets,
i.e., the bound \emph{improves} as $\epsilon\rightarrow 0$.
The time for the overall algorithm is minimized by choosing $\epsilon = \exp(-\Theta(\sqrt{\log\Delta}))$.

The algorithm for coloring $\epsilon$-dense vertices first applies $O(\log_{1/\epsilon} \Delta)$ iterations of {\em dense coloring steps} to reduce the maximum degree to $\Delta' = O(\log n) \cdot 2^{O\left(\log_{1/\epsilon} \Delta\right)}$, and then apply the $(\deg+1)$-list coloring algorithm of~\cite[Theorem~5.1]{BEPS16} to color the remaining vertices in $O(\log \Delta' + \log\log n + \Detd(\polylog n)) = O(\log_{1/\epsilon} \Delta + \log\log n + \Detd(\polylog n))$ time.

In what follows, we informally sketch the idea behind the dense coloring steps.
To finish in $O(\log_{1/\epsilon} \Delta)$ iterations, it suffices that the maximum degree is reduced by a factor of $\epsilon^{-\Omega(1)}$ in each iteration. Consider an $\epsilon$-dense vertex $v$ in a component $S$ induced by the $\epsilon$-friend edges.
Harris et al.~\cite{HarrisSS18} proved that the number of $\epsilon$-dense neighbors of $v$ that are not in $S$ is at most
$\epsilon \Delta$.
Intuitively, if we let each dense component output a random coloring that has no conflict within the component, then the probability that the color choice of a vertex $v \in S$ is in conflict with an {\em external neighbor} of $v$ is $O(\epsilon)$.
Harris et al.~\cite{HarrisSS18} showed that this intuition can be nearly realized, and they developed a coloring procedure that is able to reduce the maximum degree by a factor of $O(\sqrt{\epsilon^{-1}})$
in each iteration.

\subsection{New Results}
In this paper we give a fast randomized algorithm for $(\Delta+1)$-vertex coloring.  It is based on a 
hierarchical
version of the Harris-Schneider-Su decomposition with
$\log\log\Delta-O(1)$ levels determined by an increasing
sequence of sparsity thresholds $(\epsilon_1,\ldots,\epsilon_\ell)$, with $\epsilon_i = \sqrt{\epsilon_{i+1}}$.
Following~\cite{HarrisSS18}, we
begin with a single iteration of the {\em initial coloring step} (Lemma~\ref{lem:initial-color}), in which a constant fraction of the vertices are colored.  The
guarantee of this procedure is that any vertex $v$ at the $i$th layer (which is $\epsilon_i$-dense but $\epsilon_{i-1}$-sparse),
has $\Omega(\epsilon_{i-1}^2\Delta)$ \emph{pairs} of vertices in its neighborhood $N(v)$ assigned the same color,
thereby creating that many excess colors in the palette of $v$.

At this point, the most natural way to proceed is to apply a Harris-Schneider-Su style {\em dense coloring step}
to each layer, with the hope that each will take roughly constant time.
Recall that (i) any vertex $v$ at the $i$th layer already has $\Omega(\epsilon_{i-1}^2\Delta)$ excess colors, and (ii) the dense coloring step reduces the maximum degree by a factor of $\epsilon^{-\Omega(1)}$ in each iteration.
Thus, in $O\left(\log_{1/\epsilon_i} \frac{\Delta}{\epsilon_{i-1}^{2.5} \Delta}\right)=O(1)$ time we should be able to
create a situation where any uncolored vertices have $O(\epsilon_{i-1}^{2.5}\Delta)$ uncolored neighbors but
$\Omega(\epsilon_{i-1}^2\Delta)$ excess colors in their palette.  With such a large gap,
a Schneider-Wattenhofer style coloring algorithm (Lemma~\ref{lem:color-remain-simple}) should complete in very few additional steps.

It turns out that in order to color $\epsilon_i$-dense components efficiently, we need to maintain relatively large
\emph{lower bounds} on the available palette and relatively small \emph{upper bounds} on the number
of external neighbors (i.e., the neighbors outside the $\epsilon_i$-dense component).  Thus, it is important that when we first
consider a vertex, we have not already colored too many of its neighbors.
Roughly speaking, our algorithm classifies the dense blocks
at layer $i$ into \emph{small}, \emph{medium}, and \emph{large}
based chiefly on the block size,
and partitions the set of all
blocks of all layers into $O(1)$ {\em groups}.
We apply the dense coloring steps {\em in parallel} for all blocks in the same group.
Whenever we process a block $B$, we need to make sure that all its vertices have a large enough  palette.
For large blocks, the palette size guarantee comes from the lower bound on the block size.
For small and medium blocks, the palette size guarantee comes from the ordering of the blocks being processed; we will show that whenever a small or medium block $B$ is considered, each vertex $v \in B$ has a sufficiently large number of neighbors that have yet to be colored.

All of the coloring steps outlined above finish in
$O(\log^\ast \Delta)$ time.
The  bottleneck procedure is the algorithm of Lemma~\ref{lem:color-remain-simple}, and the rest takes only $O(1)$ time.
Each of these coloring steps may not color all vertices it considers.
The vertices left uncolored are put in $O(1)$ classes,
each of which either induces a bounded degree graph
or is composed of $O(\polylog n)$-size components, w.h.p.
The former type can be colored deterministically
in $O(\log^\ast n)$ time and the latter
in $\Detd(\polylog n)$ time.
In view of Linial's lower bound~\cite{Linial92}
we have $\Detd(\polylog n) = \Omega(\log^\ast n)$ and
the running time of our $(\Delta+1)$-list coloring algorithm
is
\[
O(\log^\ast \Delta) + O(\log^\ast n) + O(\Detd(\polylog n)) = O(\Detd(\polylog n)).
\]

\paragraph{Recent Developments.}
After the initial publication of this work~\cite{ChangLP18},
our algorithm was adapted to solve $(\Delta+1)$-coloring
in several other models of computation,
namely the \emph{congested clique},
the \emph{MPC}\footnote{massively parallel computation} model,
and the \emph{centralized local computation} model~\cite{AssadiCK18,Parter18,ParterS18,ChangFGUZ18}.
Chang, Fischer, Ghaffari, Uitto, and Zheng~\cite{ChangFGUZ18},
improving~\cite{Parter18,ParterS18}, showed
that $(\Delta+1)$-coloring can be solved in the congested
clique in $O(1)$ rounds, w.h.p.
In the MPC model,
Assadi, Chen, and Khanna~\cite{AssadiCK18} solve
$(\Delta+1)$-coloring in $O(1)$ rounds using $\tilde{O}(n)$ memory per machine, whereas Chang et al.~\cite{ChangFGUZ18} solve
it in $O(\sqrt{\log\log n})$ time with just $O(n^\epsilon)$ memory per machine.  In the centralized local computation model,
Chang et al.~\cite{ChangFGUZ18} proved that $(\Delta+1)$-coloring queries can be answered with just \emph{polynomial}
probe complexity $\Delta^{O(1)}\log n$.

\paragraph{Organization.}
In Section~\ref{sect:hierarchy} we define a hierarchical decomposition based on~\cite{HarrisSS18}.
Section~\ref{sect:algo} gives a high-level description of the algorithm, which uses a variety of coloring routines
whose guarantees are specified by the following lemmas.
\begin{itemize}
\item Lemma~\ref{lem:color-remain} analyzes
the procedure \trial, which is a generalization of the Schneider-Wattenhofer coloring routing; it is proved in Section~\ref{sect:trial-detail}.
\item Lemma~\ref{lem:initial-color}
shows that the procedure \oneshot{} creates many excess colors; it is proved in Appendix~\ref{sect:oneshot-detail}.
\item Lemmas~\ref{lem:color-sm}--\ref{lem:color-top-lg} analyze two versions of an algorithm \dense,
which is a generalization of the Harris-Schneider-Su routine~\cite{HarrisSS18} for coloring locally dense vertices; they are proved in Section~\ref{sect:dense}.
\end{itemize}
Appendix~\ref{sect:tools} reviews all of the standard concentration inequalities that we use.


\section{Hierarchical Decomposition}\label{sect:hierarchy}

In this section, we extend the work of Harris, Schneider, and Su~\cite{HarrisSS18} to
define a hierarchical decomposition of the vertices based on local sparsity.
Let $G=(V,E)$ be the input graph, $\Delta$ be the maximum degree, and $\epsilon \in (0,1)$ be a parameter.
An edge $e=\{u,v\}$ is an {\em $\epsilon$-friend edge} if $|N(u) \cap N(v)| \geq (1-\epsilon)\Delta$.
We call $u$ an {\em $\epsilon$-friend} of $v$ if $\{u,v\}$ is an {\em $\epsilon$-friend edge}.
A vertex $v$ is called {\em $\epsilon$-dense} if $v$ has at least $(1-\epsilon)\Delta$ $\epsilon$-friends,
otherwise it is {\em $\epsilon$-sparse}.  Observe that it takes one round of communication to tell whether
each edge is an $\epsilon$-friend, and hence one round 
for each vertex to decide if it is 
$\epsilon$-sparse or $\epsilon$-dense.

We write $V_{\epsilon}^{\sparseexp}$ (and $V_{\epsilon}^{\denseexp}$) to be the set of $\epsilon$-sparse (and $\epsilon$-dense) vertices.
Let $v$ be a vertex in a set $S\subseteq V$ and $V'\subseteq V$.
Define $\bar{d}_{S,V'}(v) = |(N(v)\cap V') \setminus S|$ to be the {\em external degree} of $v$
with respect to $S$ and $V'$, and $a_S(v) = |S \setminus (N(v) \cup \{v\})|$ to be the {\em anti-degree} of $v$ with respect to $S$.
A connected component $C$ of the subgraph formed by the $\epsilon$-dense 
vertices and the $\epsilon$-friend edges is called an 
{\em $\epsilon$-almost clique}.
This term makes sense in the context of Lemma~\ref{lem:cluster-property} from~\cite{HarrisSS18},
which summarizes key properties of almost cliques.

\begin{lemma}[\cite{HarrisSS18}]\label{lem:cluster-property}
Fix any $\epsilon < 1/5$.
The following conditions are met for each $\epsilon$-almost clique $C$, and each vertex $v \in C$.
\begin{enumerate}
\item[(i)] $\bar{d}_{C,V_{\epsilon}^{\denseexp}}(v) \leq \epsilon \Delta$. (Small external degree w.r.t.~$\epsilon$-dense vertices.)
\item[(ii)] $a_C(v) < 3\epsilon \Delta$.  (Small anti-degree.)
\item[(iii)] $|C| \leq (1+3\epsilon)\Delta$.  (Small size, a consequence of (ii).)
\item[(iv)] $\dist_G(u,v) \leq 2$ for each $u,v \in C$.  (Small weak diameter.)
\end{enumerate}
\end{lemma}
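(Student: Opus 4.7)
My plan is to establish the four properties in order, since (iii) and (iv) both build on (ii). Property (i) is direct from the definitions. If $u \in (N(v) \cap V_\epsilon^{\denseexp}) \setminus C$, then $\{u,v\}$ cannot be an $\epsilon$-friend edge: were it so, $u$ and $v$ would both be $\epsilon$-dense and joined by a friend edge, placing $u$ in $v$'s component, i.e., $u \in C$. So every dense neighbor of $v$ outside $C$ must be a non-friend neighbor, and $|N(v) \setminus F(v)| \leq \Delta - (1-\epsilon)\Delta = \epsilon\Delta$ bounds the count, where $F(v)$ denotes the set of $\epsilon$-friends of $v$.

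The heart of the proof is (ii). I would bound $a_C(v) = |C \setminus N[v]|$ by a double-counting argument that exploits the friend structure. Two ingredients are key: first, for each $u \in F(v)$ we have $|N(u) \cap N(v)| \geq (1-\epsilon)\Delta$ and $v \in N(u)$, so the ``external slack'' $|N(u) \setminus N[v]|$ is at most $\epsilon\Delta - 1$; second, each $w \in C \setminus N[v]$ is itself $\epsilon$-dense and joined to $v$ by a friend-path, and iterating $|N(x_i) \cap N(x_{i+1})| \geq (1-\epsilon)\Delta$ inside the size-$\Delta$ set $N(x_i)$ shows $|N(w) \cap N(v)| \geq (1-k\epsilon)\Delta$ along a friend-path of length $k$, which after subtracting $\epsilon\Delta$ for non-friend slack makes $|N(w) \cap F(v)|$ a positive fraction of $\Delta$. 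Expanding the double sum $\sum_{u \in F(v)} |N(u) \cap (C \setminus N[v])| = \sum_{w \in C \setminus N[v]} |N(w) \cap F(v)|$ and bounding both sides delivers $a_C(v) < 3\epsilon\Delta$.

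Given (ii), property (iii) is immediate: $|C| = 1 + |C \cap N(v)| + a_C(v) \leq 1 + \Delta + (3\epsilon\Delta - 1) \leq (1+3\epsilon)\Delta$. For (iv), fix $u,v \in C$; by (ii) applied to each, $|C \cap N(u)|, |C \cap N(v)| > |C| - 3\epsilon\Delta$, and inclusion-exclusion within $C$ gives $|C \cap N(u) \cap N(v)| > |C| - 6\epsilon\Delta$, which is positive and produces a common neighbor in $G$ whenever $|C| > 6\epsilon\Delta$. When $|C|$ is too small for this argument, I fall back on the direct friend-path bound: along a shortest friend-path of length $k$ from $u$ to $v$, iterated inclusion-exclusion in the sets $N(x_i)$ yields $|N(u) \cap N(v)| \geq (1-k\epsilon)\Delta$, which is positive for $k \leq 4$ under the hypothesis $\epsilon < 1/5$, and longer paths are controlled by the structural bounds already established in (ii).

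The main obstacle is clearly (ii). Naive inclusion-exclusion along friend-paths degrades at rate $\epsilon\Delta$ per step, so bounding the anti-degree requires a global double-counting argument that aggregates the per-edge $\epsilon\Delta$ slack without blowing up under the union over friends. The threshold $\epsilon < 1/5$ is calibrated so that the two sides of the double sum balance to give exactly $3\epsilon\Delta$ anti-degree after accounting for sparse friends and overlap between their neighborhoods; any weaker sparsity hypothesis would require a genuinely different technique.
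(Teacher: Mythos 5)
Your part (i) is correct, and (iii) is a routine consequence of (ii). Note, however, that the paper itself does not prove this lemma---it is cited verbatim from Harris, Schneider, and Su---so there is no in-paper proof to compare against; I can only assess the proposal on its own terms.

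The genuine gap is in (ii), which you correctly identify as the heart of the matter but do not actually close. The double-counting identity
\[
\sum_{u \in F(v)} \bigl|N(u) \cap (C\setminus N[v])\bigr| \;=\; \sum_{w\in C\setminus N[v]} \bigl|N(w)\cap F(v)\bigr|
\]
is sound, and the per-friend bound $|N(u)\setminus N[v]| \le \epsilon\Delta - 1$ gives the left side $\le |F(v)|(\epsilon\Delta-1) < \epsilon\Delta^2$. To extract $a_C(v) < 3\epsilon\Delta$ you now need a \emph{uniform} lower bound on $|N(w)\cap F(v)|$ of roughly $\Delta/3$ for every $w \in C\setminus N[v]$. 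Your proposed route to this bound is inclusion-exclusion along a friend-path from $v$ to $w$ of length $k$, giving $|N(w)\cap N(v)| \ge (1-k\epsilon)\Delta$. This is vacuous once $k \ge 1/\epsilon$, and nothing in the proposal rules out long friend-paths inside $C$: the friend graph restricted to $C$ could a priori have large diameter, since $|N(u)\cap N(x)|\ge(1-\epsilon)\Delta$ and $|N(x)\cap N(w)|\ge(1-\epsilon)\Delta$ only force $|N(u)\cap N(w)|\ge(1-2\epsilon)\Delta$, not a friend edge. You acknowledge the degradation (``the main obstacle'') and invoke an unspecified ``global double-counting argument that aggregates the per-edge $\epsilon\Delta$ slack without blowing up,'' but you never exhibit it; the remark that $\epsilon<1/5$ is ``calibrated'' so that ``the two sides of the double sum balance'' asserts the conclusion rather than deriving it.

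The same difficulty reappears in (iv). Your inclusion-exclusion within $C$ needs $|C| > 6\epsilon\Delta + 2$, and your friend-path fallback needs path length at most roughly $1/\epsilon$; once $\Delta$ is at all large there is a nonempty regime (roughly $1/\epsilon \lesssim |C| \lesssim 6\epsilon\Delta$) covered by neither, and ``longer paths are controlled by the structural bounds already established in (ii)'' is again an assertion, not an argument---and (ii) itself has not been established. Until you supply a path-length-independent lower bound on how much any $w\in C$ sees of $F(v)$ (or of a kernel $N(u)\cap N(v)$), the proof of (ii), and hence of (iii) and (iv), is incomplete.
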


Lemma~\ref{lem:cluster-property}(iv) implies that \emph{any} sequential algorithm operating solely 
on $C$ can be simulated in $O(1)$ rounds in the $\LOCAL$ model.  The node in $C$ with minimum ID
can gather all the relevant information from $C$ in 2 rounds of communication, compute the output of the algorithm locally, and disseminate these results in another 2 rounds of communication.  For example, the \dense{} algorithm (versions 1 and 2) presented in Section~\ref{sect:dense} are nominally sequential algorithms but can be implemented in $O(1)$ distributed rounds.

\subsection{A Hierarchy of Almost Cliques}

Throughout this section, we fix
some increasing sequence of sparsity parameters $(\epsilon_1, \ldots, \epsilon_{\ell})$
and a subset of vertices $V^\star \subseteq V$, which, roughly speaking, are those left uncolored by the initial coloring procedure of Lemma~\ref{lem:initial-color} 
and also satisfy the two conclusions of Lemma~\ref{lem:initial-color}(i,ii).
The sequence $(\epsilon_1, \ldots, \epsilon_{\ell})$ always adheres to Definition~\ref{def:sparsity}.

\begin{definition}\label{def:sparsity}
A sequence $(\epsilon_1, \ldots, \epsilon_{\ell})$ is a valid \emph{sparsity sequence} if the following conditions are met:
(i) $\epsilon_i = \sqrt{\epsilon_{i-1}} = \epsilon_1^{2^{-(i-1)}}$,
 and
(ii) $\epsilon_\ell \leq 1/K$ for some sufficiently large $K$.
\end{definition}

\paragraph{Layers.} Define $V_1 = V^\star \cap V_{\epsilon_1}^{\denseexp}$
and $V_i = V^\star \cap (V_{\epsilon_i}^{\denseexp} \setminus V_{\epsilon_{i-1}}^{\denseexp})$, for $i > 1$.
Define $V_{\sparse} = V^\star \cap V_{\epsilon_\ell}^{\sparseexp} = V^\star \setminus (V_1 \cup \cdots \cup V_{\ell})$.
It is clear that $(V_1,\ldots, V_{\ell}, V_{\sparse})$ is a partition of $V^\star$.
We call $V_i$ the {\em layer-$i$} vertices, and call $V_{\sparse}$ the {\em sparse vertices}.
In other words, $V_i$ is the subset of $V^\star$ that are $\epsilon_i$-dense but  $\epsilon_{i-1}$-sparse.
Remember that the definition of sparsity is with respect to the entire graph $G=(V,E)$ not
the subgraph induced by $V^\star$.

\paragraph{Blocks.} The layer-$i$ vertices $V_i$ are partitioned into {\em blocks} as follows.
Let $\{C_1, C_2,\ldots\}$ be the set of $\epsilon_i$-almost cliques, and let
$B_j = C_j \cap V_i$.
Then $(B_1,B_2,\ldots)$ is a partition of $V_i$.
Each $B_j \neq \emptyset$ is called a \emph{layer-$i$ block}.
See Figure~\ref{fig:blocks} for an illustration; 
the shaded region indicates a layer-$i$ block $B$
and the hollow regions are those $\epsilon_{i-1}$-almost 
cliques.

A layer-$i$ block $B$ is a {\em descendant} of a layer-$i'$ block $B'$, $i<i'$,
if $B$ and $B'$ are both subsets of the same $\epsilon_{i'}$-almost clique.
Therefore, the set of all blocks in all layers naturally forms a rooted tree $\mathcal{T}$, where the root represents
$V_{\sparse}$, and every other node represents a block in some layer.
For example, in Figure~\ref{fig:blocks}, the blocks contained in 
$C_1,\ldots,C_k$ are at layers $1,\ldots,i-1$, and are all descendants
of $B$.

\begin{figure}
\begin{center}
\includegraphics[width=.6\linewidth]{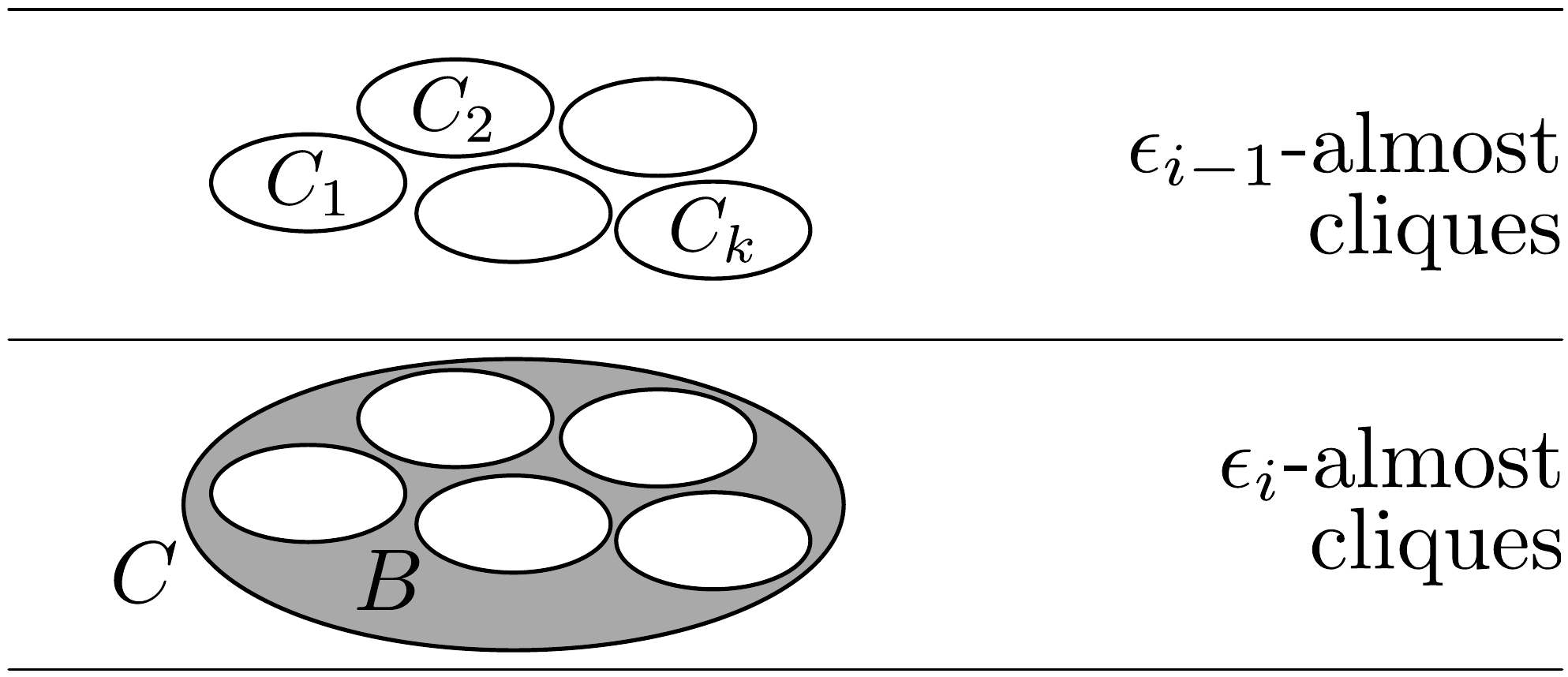}
\end{center}
\caption{Almost-cliques and blocks.}
\label{fig:blocks}
\end{figure}

\subsection{Block Sizes and Excess Colors}

We classify the blocks into three types: {\em small}, {\em medium}, and {\em large}. 
A block $B$ at layer $i$ is called \emph{large-eligible} if 
\[
|B| \ge \frac{\Delta}{\log (1/\epsilon_i)}.
\]

\begin{description}
\item[Large blocks.] The set of large blocks is a maximal set of unrelated, large-eligible
blocks, which prioritizes blocks by size, breaking ties by layer.  More formally, a large-eligible layer-$i$ block $B$ is large if and only if, for every large-eligible $B'$ at layer $j$ that is an ancestor or descendant of $B$, either $|B'| < |B|$ or $|B'|=|B|$ and $j<i$.

\item[Medium blocks.]  Every large-eligible block that is not large is a medium block.

\item[Small blocks.] All other blocks are small.
\end{description}

Define $V_i^{\SM}$,  $V_i^{\MD}$,  and $V_i^{\LG}$, to be, respectively,
the sets of all vertices in layer-$i$ small blocks, layer-$i$ medium blocks, and layer-$i$ large blocks.
For each $X \in \{\SM,\MD,\LG\}$, we write $V_{2+}^{X} = \bigcup_{i=2}^{\ell} V_i^X$ 
to be the set of all vertices of type $X$, excluding those in layer 1.

\paragraph{Overview of Our Algorithm.} The decomposition and $\mathcal{T}$ are trivially computed in $O(1)$ rounds of communication.  
The first step of our algorithm 
is to execute an $O(1)$-round coloring procedure (\oneshot)
which colors a small constant fraction of the vertices in $G$; the relevant guarantees
of this algorithm were stated in Lemma~\ref{lem:initial-color}.
Let $V^\star$ be the subset of uncolored vertices that, in addition, satisfy the conclusions of Lemma~\ref{lem:initial-color}(i,ii).
Once $V^\star$ is known, it can be partitioned into the following sets
\[
\left(V_1^{\SM}, \ldots, V_{\ell}^{\SM}, V_1^{\MD}, \ldots, V_{\ell}^{\MD}, V_1^{\LG}, \ldots, V_{\ell}^{\LG}, V_{\sparse}\right)
\]
These are determined by the hierarchical decomposition with respect to a particular 
sparsity sequence $(\epsilon_1, \ldots, \epsilon_\ell)$.\footnote{Note that the 
classification of vertices into small, medium, and large blocks can only 
be done \emph{after} \oneshot{} is complete.  Recall that if $C$ is an $\epsilon_i$-almost clique, $B = C\cap V_i$ is the subset of $C$ that is both
$\epsilon_{i-1}$-sparse \underline{and} uncolored by \oneshot.  Thus, whether
the layer-$i$ block in $C$ is large-eligible depends on how many vertices are 
successfully colored.}
We color the vertices of $V^\star \setminus V_{\sparse}$ in six stages
according to the ordering 
\[
\left(V_{2+}^{\SM}, V_{1}^{\SM}, V_{2+}^{\MD}, V_{1}^{\MD}, V_{2+}^{\LG}, V_{1}^{\LG}\right).
\]
As we argue below, coloring vertices in the order small, medium, large ensures
that when a vertex is considered, it has \emph{sufficiently many} remaining colors in its palette,
as formalized by Lemma~\ref{lem:palette-lb} below.
The reason for dealing with layer-1 vertices separately
stems from the fact that a vertex at layer $i>1$ is known to be $\epsilon_i$-dense
but $\epsilon_{i-1}$-sparse, but layer-1 vertices are not known to have 
any non-trivial sparsity. 
At the end of this process a small portion of vertices $U \subseteq V^\star \setminus V_{\sparse}$ may remain uncolored.
However, they all have sufficiently large palettes such that $U \cup V_{\sparse}$ 
can be colored efficiently in $O(\log^\ast n)$ time.

\begin{lemma}\label{lem:palette-lb}
For each layer $i \in [1,\ell]$, the following is true.
\begin{itemize}
\item For each $v \in V_i^{\SM}$ with $|N(v) \cap V^\star| \geq \Delta/3$, we have
$|N(v) \cap (V_{2+}^{\MD} \cup V_{1}^{\MD} \cup V_{2+}^{\LG} \cup V_{1}^{\LG} \cup V_{\sparse})| \geq \frac{\Delta}{4}$.
\item  For each $v \in V_i^{\MD}$, we have
$|N(v) \cap (V_{2+}^{\LG} \cup V_{1}^{\LG} \cup V_{\sparse})| \geq \frac{\Delta}{2\log (1/\epsilon_i)}$.
\end{itemize}
\end{lemma}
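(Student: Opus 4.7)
The strategy for both parts is to pinpoint a collection of ``non-small'' blocks that contain many of $v$'s neighbors, converting block sizes into neighbor counts via the anti-degree bound of Lemma~\ref{lem:cluster-property}(ii) and controlling leakage of neighbors outside $v$'s $\epsilon_j$-almost cliques via the external-degree bound of Lemma~\ref{lem:cluster-property}(i). Throughout, let $C^{(j)}$ denote the $\epsilon_j$-almost clique containing $v$ at each layer $j\geq i$, so that $C^{(i)}=C_B\subseteq C^{(i+1)}\subseteq\cdots\subseteq C^{(\ell)}$, and let $B^{(j)}=C^{(j)}\cap V_j$ denote the corresponding ``ancestor'' layer-$j$ block of $v$.

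\emph{Part 2.} Since $v$'s block $B$ is medium (large-eligible but not large), the priority-based definition of large blocks forces some large block $B^\star$ to lie in $B$'s ancestor-or-descendant set in $\mathcal{T}$: otherwise $B$ would itself be max-priority in its own set and would be selected as large. By the priority rule, $|B^\star|\geq|B|\geq\Delta/\log(1/\epsilon_i)$. If $B^\star$ is an ancestor at layer $j^\star>i$, then $v\in C^{(j^\star)}\supseteq B^\star$ and Lemma~\ref{lem:cluster-property}(ii) gives $|N(v)\cap B^\star|\geq|B^\star|-3\epsilon_{j^\star}\Delta$; if $B^\star$ is a descendant at layer $j^\star<i$, then $B^\star\subseteq C_B\ni v$ and the same lemma gives $|N(v)\cap B^\star|\geq|B^\star|-3\epsilon_i\Delta$. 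In both cases the correction $3\epsilon\Delta$ is at most $|B^\star|/2$, because $\epsilon\log(1/\epsilon)\leq 1/6$ whenever $\epsilon\leq\epsilon_\ell\leq 1/K$ with $K$ sufficiently large (Definition~\ref{def:sparsity}). Hence $|N(v)\cap B^\star|\geq\Delta/(2\log(1/\epsilon_i))$, and since $B^\star\subseteq V_{2+}^{\LG}\cup V_1^{\LG}$, Part 2 follows.

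\emph{Part 1.} The plan is to upper-bound $|N(v)\cap (V_{2+}^{\SM}\cup V_1^{\SM})|$ and combine with $|N(v)\cap V^\star|\geq\Delta/3$ to conclude. Decompose the small-block neighbors of $v$ into four groups: (A) small ancestor blocks $B^{(j)}$ for $j\geq i$; (B) layer-$j$ small blocks outside $C^{(j)}$ for $j\geq i$; (C) layer-$j$ small blocks outside $C_B$ for $j<i$; and (D) small descendant blocks of $B$ lying inside $C_B$ at layers $j<i$. Group (A) is bounded by $\sum_{j\geq i}\Delta/\log(1/\epsilon_j)$; using the relation $\log(1/\epsilon_j)=2^{\ell-j}\log(1/\epsilon_\ell)$, this telescopes to $O(\Delta/\log(1/\epsilon_\ell))=O(\Delta/\log K)$. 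Groups (B) and (C) are bounded by $\sum_{j\geq i}\epsilon_j\Delta=O(\Delta/K)$ and $\epsilon_i\Delta=O(\Delta/K)$ respectively, both via Lemma~\ref{lem:cluster-property}(i) applied to $v\in C^{(j)}$ or $v\in C_B$.

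\emph{Main obstacle.} The bottleneck is handling group (D). I plan a case split on $|C_B\cap V^\star|$. If it is small, the assumption $|N(v)\cap V^\star|\geq\Delta/3$ together with the external-degree bound of Lemma~\ref{lem:cluster-property}(i) forces a majority of $v$'s uncolored neighbors to lie outside $C_B$ and hence in $V_\sparse$, which contributes directly to the non-small count. If it is large, I would exploit the priority-based large-block selection at layers $<i$, combined with the anti-degree bound of Lemma~\ref{lem:cluster-property}(ii) inside $C_B$, to argue that a constant fraction of $v$'s neighbors in $C_B\cap V^\star$ lies in medium or large descendant blocks rather than small ones. Summing (A)--(D) with this case analysis yields $|N(v)\cap (V_{2+}^{\SM}\cup V_1^{\SM})|\leq |N(v)\cap V^\star|-\Delta/4$, establishing Part 1.
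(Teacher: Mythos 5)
Your Part 2 argument is correct and essentially the paper's: identify a large ancestor or descendant $B^\star$ with $|B^\star|\geq|B|$, apply the anti-degree bound inside the larger of the two enclosing almost-cliques, and absorb the $3\epsilon\Delta$ loss into a factor of $2$ in the denominator. The parameter accounting (using $\epsilon_{j^\star}$ in the ancestor case, $\epsilon_i$ in the descendant case, and the condition $6\epsilon\log(1/\epsilon)\leq 1$ for $\epsilon\leq\epsilon_\ell$) checks out.

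In Part 1, your decomposition into (A)--(D) is sound and your bounds on (A), (B), (C) are correct and align with the paper's bounds on $A_3$ and the high-layer portion of $A_2$ (the geometric sums $\sum_{j\geq i}\Delta/\log(1/\epsilon_j)=O(\Delta/\log(1/\epsilon_\ell))$ and $\sum_{j\geq i}\epsilon_j\Delta=O(\epsilon_\ell\Delta)$, plus $\epsilon_i\Delta$ for the external $\epsilon_i$-dense neighbors). However, group (D) is where the proof actually lives, and your plan for it does not close. Two concrete issues. First, in the ``$|C_B\cap V^\star|$ is small'' branch you claim a majority of $v$'s uncolored neighbors lie outside $C_B$ and hence in $V_{\sparse}$, but this inference is false: a neighbor outside $C_B$ need not be in $V_{\sparse}$, since it can be $\epsilon_i$-sparse merely by virtue of being at a layer $j>i$, and such a vertex may perfectly well sit in a small block and contribute to $V_{2+}^{\SM}\cup V_1^{\SM}$. (The branch is still salvageable by the trivial bound $|\text{(D)}|\leq|C_B\cap V^\star|$, but your stated reasoning is not what makes it work.) Second, and more fundamentally, in the ``large'' branch your cited tools --- anti-degree and large-block priority --- cannot rule out the scenario in which $B$ has \emph{many} children at the same layer, each of them individually too small to be large-eligible, yet together holding $\Omega(\Delta)$ vertices. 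Nothing in the anti-degree bound or the priority rule prevents this. The paper handles (D) with the dedicated structural result Lemma~\ref{lem:children-in-T}: any block in $\mathcal{T}$ either has a unique child, or the totality of its strict descendants has at most $7\epsilon\Delta$ vertices. Combined with the observation that along a chain of unique children the small-block sizes form a geometric sum (the paper's split of $A_2$ into $A_{2,\mathrm{high}}$ and $A_{2,\mathrm{low}}$ at the threshold layer $i^\star$), this bounds (D) \emph{unconditionally}, with no case split on $|C_B\cap V^\star|$ needed. Without Lemma~\ref{lem:children-in-T} or an equivalent pigeonhole argument about $\epsilon_{i-1}$-friend edges forcing the $\epsilon_{i-1}$-almost-cliques inside $C_B$ to merge, your Part 1 has a genuine gap.
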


In other words, regardless of how we proceed to \emph{partially} color the vertices in small blocks,
each $v \in V_i^{\SM}$ always has at least $\frac{\Delta}{4}$ available colors in its palette, 
due to the number of its (still uncolored) neighbors in medium and large blocks, and $V_{\sparse}$.
Similarly, regardless of how we partially color the vertices in small and medium blocks,
each $v \in V_i^{\MD}$ always has at least $\frac{\Delta}{2\log (1/\epsilon_i)}$ available colors in its palette.

Before proving Lemma~\ref{lem:palette-lb} we first 
establish a useful property that constrains the structure of the block hierarchy $\mathcal{T}$.
Intuitively, Lemma~\ref{lem:children-in-T} shows that a node (block) in $\mathcal{T}$
can have exactly one child of essentially any size, but if it has two or more children
then the union of all strict descendants must be very small.

\begin{lemma}\label{lem:children-in-T}
Let $C$ be an $\epsilon_i$-almost clique and $C_1,\ldots,C_l$ be the
$\epsilon_{i-1}$-almost cliques contained in $C$.  Either $l=1$
or $\sum_{j=1}^l |C_j| \le 2(3\epsilon_i + \epsilon_{i-1})\Delta$.
In particular, if $B$ is the layer-$i$ block contained in $C$,
either $B$ has one child in $\mathcal{T}$ or the number of vertices
in all strict descendants of $B$ is at most 
$2(3\epsilon_i + \epsilon_{i-1})\Delta < 7\epsilon_i\Delta$.
\end{lemma}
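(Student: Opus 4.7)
My plan is to prove the per-index inequality $\sum_{k \neq j} |C_k| \leq (3\epsilon_i + \epsilon_{i-1})\Delta$ for every $j \in [l]$, from which the lemma follows: when $l \geq 2$, choosing $j = 2$ gives $|C_1| \leq \sum_{k \neq 2}|C_k| \leq (3\epsilon_i + \epsilon_{i-1})\Delta$, and choosing $j = 1$ gives $\sum_{k \neq 1}|C_k| \leq (3\epsilon_i + \epsilon_{i-1})\Delta$; summing yields $\sum_{k=1}^l |C_k| \leq 2(3\epsilon_i + \epsilon_{i-1})\Delta$.

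To prove the per-$j$ bound, I fix an arbitrary $u \in C_j$ and control $|\bigcup_{k \neq j} C_k|$ by splitting it into neighbors and non-neighbors of $u$. The core structural observation is that no $\epsilon_{i-1}$-friend of $u$ lies in any $C_k$ with $k \neq j$: such a vertex $w$ would be $\epsilon_{i-1}$-dense (as a member of some $\epsilon_{i-1}$-almost clique) and joined to $u$ by an $\epsilon_{i-1}$-friend edge, placing $u$ and $w$ in the same connected component of the $\epsilon_{i-1}$-friend graph on $\epsilon_{i-1}$-dense vertices, contradicting $w \in C_k \neq C_j$. Since $u$ is $\epsilon_{i-1}$-dense it has at least $(1-\epsilon_{i-1})\Delta$ such friends among its $\leq \Delta$ neighbors, so at most $\epsilon_{i-1}\Delta$ of its neighbors lie in $\bigcup_{k \neq j} C_k$. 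For the non-neighbors, I use that $\bigcup_{k \neq j} C_k \subseteq C$ and $u \in C$ to conclude that any vertex of $\bigcup_{k \neq j} C_k$ not adjacent to $u$ is counted by $a_C(u) < 3\epsilon_i\Delta$ (Lemma~\ref{lem:cluster-property}(ii) applied to the $\epsilon_i$-almost clique $C$). Summing the two parts gives $|\bigcup_{k \neq j} C_k| \leq \epsilon_{i-1}\Delta + 3\epsilon_i\Delta$, as needed.

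The ``In particular'' consequence for the tree $\mathcal{T}$ is then immediate: every strict descendant of the layer-$i$ block $B \subseteq C$ is a block at some layer $j' < i$ inside the same $\epsilon_i$-almost clique $C$, and its vertices are $\epsilon_{j'}$-dense (hence $\epsilon_{i-1}$-dense) and thus lie in $\bigcup_m C_m$, so the total count is at most $\sum_m |C_m|$.

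The main obstacle is identifying the right quantity to bound. Natural pair-based approaches using the non-friend inequality $|N(u) \cap N(v)| < (1-\epsilon_{i-1})\Delta$ for $u \in C_j$, $v \in C_k$ combined with common-neighbor lower bounds inside $C$ only yield $|C| \leq (1 + O(\epsilon_i))\Delta$, which is far too weak. The correct viewpoint is to work from a single vertex $u \in C_j$ and exploit the fact that the $\epsilon_{i-1}$-friend component containing $u$ already ``absorbs'' all of $u$'s $\epsilon_{i-1}$-dense friends into $C_j$, leaving only $\epsilon_{i-1}\Delta$ of $u$'s neighbors free to lie in the other $C_k$.
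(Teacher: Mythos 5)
Your proposal is correct and uses essentially the same argument as the paper's proof. The core mechanism is identical in both: a vertex $u\in C_j$ is $\epsilon_{i-1}$-dense, so at most $\epsilon_{i-1}\Delta$ of its neighbors fail to be $\epsilon_{i-1}$-friends; any $\epsilon_{i-1}$-friend of $u$ must lie in $C_j$ itself (else $u$ and the friend would be in the same $\epsilon_{i-1}$-friend component); and the anti-degree bound $a_C(u)<3\epsilon_i\Delta$ of Lemma~\ref{lem:cluster-property}(ii) controls non-neighbors within $C$. The only difference is packaging: the paper argues by contradiction, picking the smallest $C_1$ so that the assumed lower bound on $\sum_j|C_j|$ forces more than $(3\epsilon_i+\epsilon_{i-1})\Delta$ vertices into $C_2\cup\cdots\cup C_l$, whereas you prove the cleaner per-index bound $\sum_{k\ne j}|C_k|\le(3\epsilon_i+\epsilon_{i-1})\Delta$ for every $j$ and then combine the cases $j=1$ and $j=2$. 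Your version is marginally stronger and avoids the WLOG-smallest step, but both boil down to exactly the same inequality applied to a single vertex of a single component.
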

\begin{proof}
Suppose, for the purpose of obtaining a contradiction, that
$l\ge 2$ and $\sum_{j=1}^l |C_j| > 2(3\epsilon_i + \epsilon_{i-1})\Delta$.
Without loss of generality, suppose $C_1$ is the smallest, so
$\sum_{j=2}^l |C_j| > (3\epsilon_i + \epsilon_{i-1})\Delta$.
Any $v\in C_1$ is $\epsilon_{i-1}$-dense and therefore has at
least $(1-\epsilon_{i-1})\Delta$ neighbors that are $\epsilon_{i-1}$-friends.
By the anti-degree property of Lemma~\ref{lem:cluster-property},
$v$ is adjacent to all but at most $3\epsilon_i\Delta$ vertices in $C$.
Thus, by the pigeonhole principle 
$v$ is joined by edges to more than $\epsilon_{i-1}\Delta$ 
members of $C_2 \cup\cdots\cup C_l$.
By the pigeonhole principle again, at least one of these edges is
one of the $\epsilon_{i-1}$-friend edges incident to $v$.
This means that $C_1$ \emph{cannot} be a connected component in 
the graph formed by $\epsilon_{i-1}$-dense vertices and $\epsilon_{i-1}$-friend edges.
\end{proof}

\begin{proof}[Proof of Lemma~\ref{lem:palette-lb}]
First consider the case of $v \in V_i^{\MD}$.
Let $B$ be the layer-$i$ medium block containing $v$.
Every medium block is large-eligible but not large, meaning it must have a large 
ancestor or descendant $B'$ with at least as many vertices.  
If $B'$ is a layer-$j$ block, then
\[
|B'| \:=\: \max\left\{\left|B'\right|,\, |B|\right\} \:\geq\: \frac{\Delta}{\log (1/\epsilon_k)}, \ \mbox{ where $k = \max\{i,j\}$}.
\]
Let $C$ be the layer-$k$ almost clique containing both $B$ and $B'$.
By Lemma~\ref{lem:cluster-property}, $v$ has at most 
$3\epsilon_k\Delta$ non-neighbors in $C$, which, since $B' \subseteq C$, means
that the number of neighbors of $v$ in $B'$ is at least
\begin{align*}
|B'| - 3\epsilon_{k}\Delta
 &\ge \frac{\Delta}{\log (1/\epsilon_k)} - 3\epsilon_{k}\Delta &\\
 &\ge \frac{\Delta}{2\log (1/\epsilon_k)}                 & \{\epsilon_k \le \epsilon_\ell \mbox{ sufficiently small}\}\\
 &\ge \frac{\Delta}{2\log (1/\epsilon_i)}                 & \{\log(1/\epsilon_k) \le \log(1/\epsilon_i)\}
\end{align*}
Therefore, $|N(v) \cap (V_{2+}^{\LG} \cup V_{1}^{\LG} \cup V_{\sparse})| \geq \frac{\Delta}{2\log (1/\epsilon_i)}$.

Now consider any vertex $v \in V_i^{\SM}$ with $|N(v)\cap V^\star| \ge \Delta/3$.
Let $B$ be the layer-$i$ small block containing $v$.
We partition the set $N(v)\cap V^\star$ into three groups 
$A_1 \cup A_2 \cup A_3$.
\begin{align*}
A_1 &= N(v) \cap \left(V_{2+}^{\MD} \cup V_{1}^{\MD} \cup V_{2+}^{\LG} \cup V_{1}^{\LG} \cup V_{\sparse}\right).\istrut[2]{0}\\
A_2 &= \mbox{the neighbors in all ancestor and descendant small blocks of $B$, including $B$.}\istrut[2]{0}\\
A_3 &= \mbox{the remaining neighbors.}
\end{align*}
To prove the lemma, it suffices to show that $|A_1| \geq \frac{\Delta}{4}$.
Since $|A_1\cup A_2\cup A_3| \ge \frac{\Delta}{3}$, we need 
to prove $|A_2\cup A_3| \leq \frac{\Delta}{12}$.
We first bound $|A_3|$, then $|A_2|$.

Note that $v$ is $\epsilon_j$-dense for $j\in [i,\ell]$,
so, according to Lemma~\ref{lem:cluster-property},
$v$ must have at least $(1-\epsilon_j)\Delta$ $\epsilon_j$-friends.
Let $u$ be any neighbor of $v$ not in an ancestor/descendant 
of $B$, which means that either 
(i) $u\in V_{\sparse}$ or 
(ii) for some $j\in [i,\ell]$, $v$ and $u$
are in distinct $\epsilon_j$-almost cliques.
In case (i) $u$ is counted in $A_1$.  In case (ii)
it follows that $u$ \emph{cannot} be an $\epsilon_j$-friend
of $v$.  Since, by Lemma~\ref{lem:cluster-property},
$v$ has at most $\epsilon_j\Delta$ $\epsilon_j$-non-friends,
\[
|A_3| \le \sum_{j=i}^{\ell} \epsilon_{j} \Delta < 2\epsilon_{\ell}\Delta.
\]
We now turn to $A_2$.  Define $i^\star \in [1,i-1]$ to be the largest
index such that $B$ has at least two descendants at layer $i^\star$, or
let $i^\star=0$ if no such index exists.
Let $A_{2,\text{low}}$ be the set of vertices in $A_2$ residing in blocks at 
layers $1, \ldots, i^\star$, and let $A_{2,\text{high}} = A_2 \setminus A_{2,\text{low}}$.
By the definition of small blocks,
\begin{align*}
|A_{2,\text{high}}| 
 &< \sum_{j= i^\star + 1}^\ell \frac{\Delta}{\log (1/\epsilon_j)}\\
 &< \frac{2\Delta}{\log (1/\epsilon_{\ell})}.
    & \{\mbox{geometric sum}\}
\end{align*}
If $i^\star=0$ then $A_{2,\text{low}}=\emptyset$. Otherwise, 
by Lemma~\ref{lem:children-in-T}, the number of vertices in $A_{2,\text{low}}$ is at most $7\epsilon_{i^\star + 1}\Delta \leq 7\epsilon_i \Delta \leq 7\epsilon_\ell \Delta$.
Since $\epsilon_\ell$ is a sufficiently small constant,
\[
|A_2 \cup A_3| < 2\epsilon_\ell\Delta + \frac{2\Delta}{\log (1/\epsilon_{\ell})} + 7\epsilon_\ell \Delta < \Delta/12,
\]
which completes the proof.
\end{proof}

\begin{remark}
In the preliminary version of this paper~\cite{ChangLP18}, the algorithm for coloring locally dense vertices consisted of $O(\log^\ast \Delta)$ stages. In this paper we improve the number of stages to $O(1)$.
This improvement does not affect the overall asymptotic time in the $\LOCAL$ model
but it simplifies the algorithm, and is critical to \emph{adaptations}
of our algorithm to models in which 
Linial's lower bound~\cite{Linial92} does not apply, 
e.g., the congested clique~\cite{AssadiCK18,ChangFGUZ18}.
\end{remark}

\begin{remark}
The reader might wonder why the definition of medium blocks is needed, as all layer-$i$ medium blocks already have the block size lower bound $\frac{\Delta}{\log (1/\epsilon_j)}$, which guarantees a sufficiently large palette size lower bound for the vertices therein.
It might be possible to consider all the medium blocks as large blocks, but this will destroy the property that for any two blocks $B$ and $B'$ in different layers, if $B$ is a descendant of $B'$, then $B$ and $B'$ cannot both be large; without this property, the coloring algorithm for large blocks will likely be more complicated.
\end{remark}

\section{Main Algorithm}\label{sect:algo}

Our algorithm follows the {\em graph shattering} framework~\cite{BEPS16}.
In each step of the algorithm, we specify an invariant that all vertices must satisfy in order to continue to participate.
Those {\em bad vertices} that violate the invariant are removed from consideration; they form connected components of size
$O(\polylog n)$ w.h.p., so we can color them later in $\Detd(\polylog n)$ time.
More precisely, the emergence of the small components is due to the following lemma~\cite{BEPS16,FischerG17}.

\begin{lemma}[The Shattering Lemma] \label{lem:shatter}
Consider a randomized procedure that generates a subset of vertices $B \subseteq V$.
Suppose that for each $v \in V$, we have $\Prob[v \in B] \leq \Delta^{-3c}$, and this holds even if the random bits not in
$N^{c}(v)$ are determined adversarially.
 With probability at least $1 - n^{- \Omega(c')}$, each connected component in the graph induced by
$B$ has size at most $(c'/c) \Delta^{2c} \log_{\Delta} n$.
\end{lemma}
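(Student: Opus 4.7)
The plan is to run the standard ``graph shattering'' argument. It rests on three ingredients: independence of far-apart events, extraction of a well-separated and tree-connected witness from any large component of $G[B]$, and enumeration of such witnesses via tree counting in a bounded-degree power graph. As a first observation, for any $T\subseteq V$ whose vertices are pairwise at $G$-distance strictly greater than $2c$, the $c$-neighborhoods $\{N^{c}(v):v\in T\}$ are pairwise disjoint. Applying the conditional bound $\Prob[v\in B\mid\text{bits outside }N^{c}(v)]\le\Delta^{-3c}$ iteratively---conditioning on the events for $v_{1},\ldots,v_{i-1}$, which live in the region outside $N^{c}(v_{i})$---gives $\Prob[T\subseteq B]\le\Delta^{-3c|T|}$.

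The main combinatorial step is witness extraction: any connected $S\subseteq V$ of size at least $K:=(c'/c)\Delta^{2c}\log_{\Delta}n$ contains a subset $T$ of size $t=\Omega(K/\Delta^{2c})=\Omega((c'/c)\log_{\Delta}n)$ that is (a) pairwise at $G$-distance $>2c$ and (b) connected in the power graph $G^{2c+1}$. I would build $T$ greedily from an arbitrary $v_{1}\in S$: iteratively add any $v\in S$ that is \emph{uncovered} (at $G$-distance $>2c$ from every current $T$-vertex) yet lies within $G$-distance $2c+1$ of some current $T$-vertex, and halt when no such $v$ exists. Property~(a) is immediate from the uncovered condition, and (b) holds because every addition contributes a new $G^{2c+1}$-edge into $T$. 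Connectivity of $S$ in $G$ forces the $2c$-balls around $T$ to cover $S$ at termination: otherwise a shortest $S$-path from an uncovered vertex to $T$ would produce a vertex at $G$-distance exactly $2c+1$ from $T$ and still uncovered, contradicting the stopping rule. Hence $|T|\ge K/(2\Delta^{2c})$.

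For the union bound I would use the classical fact that a graph of maximum degree $D$ has at most $(eD)^{t-1}$ connected $t$-vertex subsets containing any fixed vertex. Applied to $H=G^{2c+1}$, which has maximum degree at most $2\Delta^{2c+1}$, this bounds the number of candidate witnesses by $n\cdot(2e\Delta^{2c+1})^{t-1}$; combining with the per-witness bound $\Delta^{-3ct}$ and setting $t=\Theta((c'/c)\log_{\Delta}n)$ yields a total failure probability of order $n\cdot\Delta^{-(c-O(1))t}=n^{1-\Omega(c')}$, which is $n^{-\Omega(c')}$ once $c$ and $c'$ are sufficiently large. The subtle point---and the main obstacle---is ensuring the witness $T$ is connected in $G^{2c+1}$ rather than in some larger power: a maximal $2c$-independent set inside $S$ is only guaranteed connected in $G^{4c+1}$, whose enumeration term $\Delta^{(4c+1)t}$ would swamp the $\Delta^{-3ct}$ savings from independence. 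The greedy construction above is designed precisely to maintain the $G^{2c+1}$-connectivity invariant while still forcing $T$'s $2c$-balls to cover $S$, and this is what makes the two exponents balance.
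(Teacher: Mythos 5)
The paper states this lemma with a citation to \cite{BEPS16,FischerG17} and does not reprove it, so there is no in-paper proof to compare against; your argument is the standard shattering proof and it is correct.

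Two small remarks. First, your greedy witness extraction (grow $T$ by adding uncovered $v\in S$ with $d_G(v,T)=2c+1$) is exactly the right device: it simultaneously keeps $T$ a $(>2c)$-separated set and a connected subgraph of $G^{2c+1}$, and your covering argument correctly shows that at termination the $2c$-balls of $T$ cover $S$, giving $|T|\ge |S|/O(\Delta^{2c})$. You are also right that a lazier ``maximal $(>2c)$-separated subset of $S$'' would only be connected in $G^{4c+1}$, which does not balance against the $\Delta^{-3c}$ per-vertex saving; this is the one place where the standard proof has to be done carefully, and you handled it. Second, when you write ``conditioning on the events for $v_1,\ldots,v_{i-1}$, which live in the region outside $N^c(v_i)$,'' you are implicitly using that the event $\{v\in B\}$ is determined by the random bits in $N^c(v)$; the conditional-probability hypothesis alone does not force this (one can cook up counterexamples with perfectly correlated far-apart indicators that still satisfy the marginal conditional bound). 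This determinism assumption is also implicit in the paper's statement and in the way the paper subsequently discusses applying the lemma, so it is a shared convention rather than a gap in your proof, but it is worth making explicit if you were to write this up formally.
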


As we will see, some parts of our randomized algorithm consist of $t = O(\log^\ast \Delta)$ steps, and so whether a vertex $v$ is bad actually depends on the random bits in its radius-$t$ neighborhood. Nonetheless, we are still able to apply Lemma~\ref{lem:shatter}. The reason is that we
are able to show that, for any specified constant $k$, each vertex $v$ becomes bad in one particular step with probability at most $\Delta^{-k}$, and this is true regardless of the outcomes in all previous steps and the choices of random bits outside of a constant radius of $v$.

\paragraph{Sparsity Sequence.}
The sparsity sequence for our algorithm is defined by
$\epsilon_1 = \Delta^{-1/10}$,
$\epsilon_i = \sqrt{\epsilon_{i-1}}$ for $i>1$, and
$\ell = \log \log \Delta - O(1)$ is the largest index such that $\frac{1}{\epsilon_{\ell}} \geq  K$ for some sufficiently large constant $K$.

\subsection{Initial Coloring Step}
At any point in time, the number of {\em excess colors} at $v$ is the size of $v$'s remaining palette minus the
number of $v$'s uncolored neighbors.  This quantity is obviously non-decreasing over time.
In the first step of our coloring algorithm, we execute the algorithm of Lemma~\ref{lem:initial-color},
which in $O(1)$ time colors a portion of the vertices.
This algorithm has the property that each remaining
uncolored vertex gains a certain number of excess colors, 
which depends on its local sparsity.
In order to proceed a vertex must satisfy both conditions.
\begin{itemize}
\item If $v$ is $\epsilon_\ell$-dense,  the number of uncolored neighbors of $v$ is at least $\Delta/2$.
\item if $v$ is $\epsilon_i$-sparse, $v$ must have $\Omega(\epsilon_{i}^2 \Delta)$ excess colors.
\end{itemize}

If either condition fails to hold, $v$ is put in the set $V_{{\bad}}$.
We invoke the conditions of Lemma~\ref{lem:initial-color} only with $\epsilon \geq \epsilon_1 = \Delta^{-1/10}$.
Thus, if $\Delta = \Omega(\log^2 n)$, then with high probability (i.e., $1 - 1/\poly(n)$),  $V_{{\bad}} = \emptyset$.
Otherwise, each component of $V_{{\bad}}$ must, by Lemma~\ref{lem:shatter}, have size $O(\poly(\Delta) \cdot \log n) = O(\polylog n)$, w.h.p.
We do not invoke a deterministic algorithm to color $V_{{\bad}}$ just yet.
In subsequent steps of the algorithm, we will continue to add bad vertices to $V_{{\bad}}$.
These vertices will be colored at the end of the algorithm.

\subsection{Coloring Vertices by Layer}

By definition, $V^\star$ is the set of all vertices that
remain uncolored after the initial coloring step \emph{and}
are not put in $V_{{\bad}}$.
The partition 
$V^\star = V_{2+}^{\SM} \cup V_{1}^{\SM} \cup  V_{2+}^{\MD} \cup V_{1}^{\MD} \cup V_{2+}^{\LG} \cup V_{1}^{\LG} \cup V_{\sparse}$
is computed in $O(1)$ time.
In this section, we show how we can color \emph{most} 
of the vertices in 
$V_{2+}^{\SM} \cup V_{1}^{\SM} \cup  V_{2+}^{\MD} \cup V_{1}^{\MD} \cup V_{2+}^{\LG} \cup V_{1}^{\LG}$, in that order,
leaving a small portion of uncolored vertices.

Consider the moment we begin to color $V_{2+}^{\SM}$.  We claim
that each layer-$i$ vertex $v \in V_{2+}^{\SM}$ must have at least 
$\Delta / 6 > \frac{\Delta}{2\log (1/\epsilon_i)}$
\emph{excess colors w.r.t.~$V_{2+}^{\SM}$}. That is, its palette size minus the number of its neighbors in $V_{2+}^{\SM}$ is large.  There are two relevant cases to consider.
\begin{itemize}
\item If the condition  $|N(v) \cap V^\star| \geq \Delta/3$ in Lemma~\ref{lem:palette-lb} is already met, then $v$ has at least $\Delta / 4 > \Delta / 6$ excess colors w.r.t. $V_{2+}^{\SM}$.
\item  Suppose $|N(v) \cap V^\star| < \Delta/3$.
One criterion for adding $v$ to $V_{{\bad}}$ is that $v$ 
is $\epsilon_\ell$-dense but has less than $\Delta/2$ uncolored
neighbors after the initial coloring step.  We know $v$ is $\epsilon_\ell$-dense
and not in $V_{{\bad}}$ (because it is in $V_{2+}^{\SM}$), so it must
have had at least $\Delta/2$ uncolored neighbors after initial coloring.
If $|N(v) \cap V^\star| < \Delta/3$ then at least $(\Delta/2 - \Delta/3)=\Delta/6$
of $v$'s uncolored neighbors must have joined $V_{{\bad}}$, which provide
$v$ with $\Delta/6$ excess colors w.r.t.~$V_{2+}^{\SM}$.
\end{itemize}
Similarly, for the sets $V_{1}^{\SM}$, $V_{2+}^{\MD}$, and $V_{1}^{\MD}$, we have the same excess colors guarantee $\frac{\Delta}{2\log (1/\epsilon_i)}$ for each layer-$i$ vertex therein.

We apply the following lemmas to color the locally dense vertices $V^\star \setminus V_{\sparse}$;
refer to Section~\ref{sect:dense} for their proofs.
For small and medium blocks, we use Lemma~\ref{lem:color-sm} to color $V_{2+}^{\SM}$ and $V_{2+}^{\MD}$,
and use Lemma~\ref{lem:color-top-sm} to color $V_{1}^{\SM}$ and $V_{1}^{\MD}$.

The reason that the layer-1 blocks need to be treated differently is that layer-1 vertices do not obtain excess colors from the initial coloring step (Lemma~\ref{lem:initial-color}).
For comparison, for $i > 1$, each layer-$i$ vertex $v$ is $\epsilon_{i-1}$-sparse, and so $v$ must have $\Omega(\epsilon_{i-1}^2 \Delta) = \Omega(\epsilon_{i}^4 \Delta)$ excess colors. If we reduce the degree of $v$ to $\epsilon_i^5 \Delta$, then we obtain a sufficiently big gap between the excess colors and degree at $v$.

\begin{lemma}[Small and medium blocks; layers other than 1] \label{lem:color-sm}
Let $S = V_{2+}^{\SM}$ or $S = V_{2+}^{\MD}$.
Suppose that each layer-$i$ vertex $v \in S$ has at least $\frac{\Delta}{2\log (1/\epsilon_i)}$ excess colors w.r.t.~$S$.
There is an $O(1)$-time algorithm that colors a subset of $S$ meeting the following condition.
For each vertex  $v \in V^\star$, and for each $i \in [2, \ell]$,
with probability at least $1 - \exp(-\Omega(\poly(\Delta)))$,
the number of uncolored layer-$i$ neighbors of $v$ in $S$ is at most $\epsilon_i^5 \Delta$.
Vertices that violate this property join the set $V_{\bad}$.
\end{lemma}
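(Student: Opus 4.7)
The plan is to adapt the Harris-Schneider-Su \dense{} coloring step and run it in parallel on every block of $S$ for a constant number $T=O(1)$ of synchronous iterations. By Lemma~\ref{lem:cluster-property}(iv), every block $B\subseteq C$ at layer $i$ has weak diameter at most $2$, so its minimum-ID vertex can gather all palettes and uncolored statuses within $B$, run a random \emph{block-proper} partial coloring (each uncolored $u\in B$ receives a tentative color $c_u\in\Psi(u)$ with the $c_u$'s pairwise distinct and marginally near-uniform), and broadcast the tentative colors in $O(1)$ rounds. Each $u\in B$ commits to $c_u$ iff no external neighbor in another block of $S$ tentatively chose the same color in the same iteration. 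Because the hypothesis gives $|\Psi(u)|-|N(u)\cap S\text{ uncolored}|\ge \Delta/(2\log(1/\epsilon_i))$ and this invariant is preserved whenever an $S$-neighbor of $u$ commits, $|\Psi(u)|\ge \Delta/(2\log(1/\epsilon_i))$ throughout all $T$ iterations.

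\textbf{Per-vertex success and concentration.} For $u\in B\subseteq V_i\cap S$, the external neighbors of $u$ that can produce a conflict lie in $S\setminus B$. Using Lemma~\ref{lem:cluster-property}(i) at layer $i$—which caps the number of $\epsilon_i$-dense neighbors of $u$ outside its $\epsilon_i$-almost clique by $\epsilon_i\Delta$—together with symmetric applications from the perspective of neighbors in strictly higher layers (where $u$ is the external $\epsilon_j$-dense vertex) and Lemma~\ref{lem:children-in-T} to control strictly descendant blocks, one shows that the expected number of external tentative selections of any fixed color is $O(\epsilon_i\log(1/\epsilon_i))$, whence $\Prob[u\text{ fails to commit in one iteration}]=O(\sqrt{\epsilon_i})$. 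Fresh randomness makes the iterations independent, so after a suitable constant $T$,
\[
\Prob[u\text{ still uncolored}]\ \le\ O(\sqrt{\epsilon_i})^T\ \le\ \epsilon_i^{11}.
\]
Now fix $v\in V^\star$ and $i\in[2,\ell]$, and group the at-most-$\Delta$ layer-$i$ neighbors of $v$ in $S$ by block. Distinct blocks use independent randomness, while by Lemma~\ref{lem:cluster-property}(iii) each block contributes at most $(1+3\epsilon_i)\Delta$ vertices; a bounded-difference / read-$r$ Chernoff inequality applied to the independent block random seeds then gives
\[
\Prob\!\left[\,|\text{uncolored layer-}i\text{ neighbors of }v\text{ in }S|\,>\,\epsilon_i^5\Delta\right]\ \le\ \exp\!\bigl(-\Omega(\epsilon_i^5\Delta)\bigr),
\]
which is $\exp(-\Omega(\poly(\Delta)))$ because $\epsilon_i\ge \epsilon_1=\Delta^{-1/10}$ forces $\epsilon_i^5\Delta\ge \Delta^{1/2}$.

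\textbf{Main obstacle.} I expect the real work to be the one-iteration external-conflict analysis: a neighbor of $u$ outside $B$ may sit in any block at any layer of $S$, each with its own sparsity parameter $\epsilon_j$ and palette lower bound $\Delta/(2\log(1/\epsilon_j))$, so bounding the expected collision count requires a careful layer-by-layer decomposition that invokes Lemma~\ref{lem:cluster-property}(i) from both $u$'s side and the outside vertex's side, and uses Lemma~\ref{lem:children-in-T} to handle the case in which $u$'s block has many strict descendants. Once that single-iteration success bound is in place, the choice $T=O(1)$ and the concentration step above are standard, and the vertices that do violate the guarantee are simply placed in $V_{\bad}$ and handled by the later deterministic post-shattering phase.
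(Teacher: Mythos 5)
Your high-level plan (a constant number of synchronized iterations of a Harris--Schneider--Su style coordinated coloring step, followed by a concentration argument per vertex $v$ and per layer $i$) matches the paper's, but the two places you flag as ``the main obstacle'' are exactly where a gap opens.

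\textbf{The clustering is wrong.} You propose running the coordinated coloring within each layer-$i$ block $B$, and treating all of $S\setminus B$ as the external conflict set. For $u\in B$ at layer $i$, the out-neighbors of $u$ in descendant blocks of $B$ (i.e., layers $<i$ inside the same $\epsilon_i$-almost clique) are then external, and Lemma~\ref{lem:children-in-T} only bounds these when $B$ has \emph{at least two} children. If $B$ has a single child $B'$ that is a medium block in $S$, then $|B'|\ge\Delta/\log(1/\epsilon_{i-1})$ can be $\Omega(\Delta/\log(1/\epsilon_\ell))$, and $u$ can be adjacent to almost all of it; your $D_u$ is then not $O(\epsilon_i\Delta)$ and $\delta_u=D_u/Z_u$ is not $o(1)$, so the per-iteration failure probability does not shrink. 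The paper avoids this by choosing the clusters for $S=V_{2+}^{\SM}$ or $V_{2+}^{\MD}$ to be $S\cap C$ for each $\epsilon_\ell$-almost clique $C$; then every descendant block is inside the cluster, the weak-diameter-$2$ property still holds by Lemma~\ref{lem:cluster-property}(iv), and the external out-degree bound $D_v=\epsilon_i\Delta$ follows directly from Lemma~\ref{lem:cluster-property}(i). No application of Lemma~\ref{lem:children-in-T} is needed, and neither is ``marginally near-uniform'' color selection: the procedure (\dense{}, version 1) just marches through the cluster in increasing layer-then-ID order and lets each vertex choose uniformly from its remaining palette.

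\textbf{The concentration step is under-justified.} You invoke a ``bounded-difference / read-$r$ Chernoff inequality applied to the independent block random seeds,'' but whether $u$ in block $B$ commits depends not only on $B$'s seed but also on the seeds of every block containing an external out-neighbor of $u$, and those seeds are shared across many of $v$'s layer-$i$ neighbors; the read structure is not obviously small or even constant. The paper's Lemma~\ref{lem:dense-100} replaces this with a cleaner argument: expose the chosen colors of all of $S$ in increasing layer-then-ID order; when $v_k\in T$ is reached, its external out-neighbors have all been revealed, it still has $\ge Z_{v_k}$ colors available, at most $D_{v_k}$ of which can conflict, so it fails with probability $\le\delta_{v_k}\le\delta$ conditioned on everything earlier. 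This gives stochastic domination by $\operatorname{Binomial}(|T|,\delta)$, after which the paper applies a Chernoff bound to a geometric sequence of thresholds $t_k=\max\{(2\delta)t_{k-1},\epsilon_i^5\Delta\}$ over $6$ iterations. Your per-vertex-then-concentrate route could perhaps be salvaged, but you would need a genuine negative-association or martingale argument; the domination lemma is what makes the six-iteration inductive degree reduction go through cleanly.
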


\begin{lemma}[Small and medium blocks; layer 1] \label{lem:color-top-sm}
Let $S = V_1^{\SM}$ or $S = V_1^{\MD}$.
Suppose that each vertex $v \in S$ has at least $\frac{\Delta}{2\log (1/\epsilon_1)}$ excess colors w.r.t.~$S$.
There is an $O(1)$-time algorithm that colors a subset of $S$ meeting the following condition.
Each $v \in S$ is colored with probability at least $1 - \exp(-\Omega(\poly(\Delta)))$;
all uncolored vertices in $S$ join $V_{\bad}$.
\end{lemma}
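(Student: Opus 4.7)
\emph{Plan.} The strategy is to run $O(1)$ iterations of the Harris-Schneider-Su dense-coloring step (the same routine that proves Lemma~\ref{lem:color-sm}) to drive the effective degree inside $S$ well below the excess-color count, and then invoke Lemma~\ref{lem:color-remain-simple} (one call to \trial{}) to finish. Any vertex that fails at any sub-step is sent to $V_{\bad}$.

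\emph{Phase 1 (degree reduction).} Each $v\in S$ is $\epsilon_1$-dense and hence lies in an $\epsilon_1$-almost clique $C$ with $B=C\cap V_1\ni v$. By Lemma~\ref{lem:cluster-property} $v$ has at most $\epsilon_1\Delta=\Delta^{9/10}$ external $\epsilon_1$-dense neighbors and at most $3\epsilon_1\Delta$ anti-neighbors in $C$. Coordinating random palette choices inside each block, one dense-coloring iteration shrinks the number of uncolored $S$-neighbors of $v$ by a factor $\epsilon_1^{-\Omega(1)}=\Delta^{\Omega(1)}$ in expectation, with concentration $1-\exp(-\Omega(\poly(\Delta)))$ from a Chernoff/Talagrand argument over independent intra-block choices (the same analysis underlying Lemma~\ref{lem:color-sm}). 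The excess-colors-w.r.t.-$S$ quantity $|\Psi(v)|-|N(v)\cap S|\ge \frac{\Delta}{2\log(1/\epsilon_1)}=\Theta(\Delta/\log\Delta)$ is nondecreasing, so the hypothesis persists across iterations. After a sufficiently large constant number of iterations, each surviving $v\in S$ has at most $\Delta^{1/3}$ uncolored $S$-neighbors, apart from a small set of vertices added to $V_{\bad}$ whose per-vertex failure probability is $\exp(-\Omega(\poly(\Delta)))$.

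\emph{Phase 2 (finish by \trial{}).} In the residual subproblem the max uncolored $S$-degree is $\Delta'\le\Delta^{1/3}$ and each remaining $v$ still has palette size $\ge\Delta'+\Theta(\Delta/\log\Delta)$. Apply Lemma~\ref{lem:color-remain-simple} with $\rho=\Theta(\Delta^{2/3}/\log\Delta)$: the running time is $O(1+\log^*\Delta'-\log^*\rho)=O(1)$, and each $v$ fails with probability at most $\exp(-\Omega(\sqrt{\rho\Delta'}))=\exp(-\Omega(\sqrt{\Delta/\log\Delta}))=\exp(-\Omega(\poly(\Delta)))$. A union bound over the $O(1)$ dense iterations and the single \trial{} call gives the advertised per-vertex success probability, and every vertex that ever fails joins $V_{\bad}$.

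\emph{Main obstacle.} Unlike Lemma~\ref{lem:color-sm}, layer-$1$ vertices inherit no sparsity-driven excess colors from the initial \oneshot{} step (there is no $\epsilon_0$), so the entire degree-reduction analysis must rely on the single weaker bound $|\Psi(v)|-|N(v)\cap S|\ge \Delta/(2\log(1/\epsilon_1))$ guaranteed by Lemma~\ref{lem:palette-lb}. The saving grace is that $\epsilon_1=\Delta^{-1/10}$ is exceptionally small, so each dense-coloring iteration introduces very few external conflicts and the palette does not erode fast enough to invalidate subsequent iterations. The only quantitative check is that the per-iteration degree-reduction factor and the per-iteration palette-erosion rate compose so that the final gap $\rho\Delta'=\Theta(\Delta/\log\Delta)$ is polynomial in $\Delta$ --- which is exactly what is needed for the $\exp(-\Omega(\poly(\Delta)))$ failure bound demanded by the lemma statement.
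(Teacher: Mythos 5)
Your proposal is correct and takes essentially the same approach as the paper: run \dense{} (version 1) on the layer-1 blocks to drive the residual degree well below the excess-color lower bound $Z_v=\Theta(\Delta/\log\Delta)$, then finish with a single call to Lemma~\ref{lem:color-remain-simple} and send any failures to $V_{\bad}$. The only difference is quantitative: the paper runs \emph{one} \dense{} iteration (already reducing the degree to $O(\Delta^{9/10}\log\Delta)$, which gives $\rho = \Delta^{\Omega(1)}$), whereas you run $O(1)$ iterations to push the degree to $\Delta^{1/3}$ before invoking \trial{} --- harmless but unnecessary.
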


The following lemmas consider large blocks.
Lemma~\ref{lem:color-lg} colors $V_{2+}^{\LG}$
and has guarantees similar to Lemma~\ref{lem:color-sm},
whereas Lemma~\ref{lem:color-top-lg} colors nearly all of $V_1^{\LG}$ and partitions the remaining uncolored vertices
among three sets, $R,X,$ and $V_{\bad}$, with certain guarantees.

\begin{lemma}[Large blocks; layer other than 1] \label{lem:color-lg}
There is an $O(1)$-time algorithm that colors a subset of $V_{2+}^{\LG}$ 
meeting the following condition.
For each $v \in V^\star$ and each layer number $i \in [2, \ell]$,
with probability at least $1 - \exp(-\Omega(\poly(\Delta)))$,
the number of uncolored layer-$i$ neighbors of $v$ in $V_{2+}^{\LG}$ 
is at most $\epsilon_i^5 \Delta$.
Vertices that violate this property join the set $V_{\bad}$.
\end{lemma}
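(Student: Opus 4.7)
The plan is to mirror the proof of Lemma~\ref{lem:color-sm}, with the palette/excess-color guarantee for $v$ in a layer-$i$ large block $B$ coming from the \emph{internal} structure of $B$ (where $|B|\ge \Delta/\log(1/\epsilon_i)$) rather than from uncolored neighbors in other block types. I would run the \dense{} procedure in parallel across all layer-$i$ large blocks for every $i\in[2,\ell]$; Lemma~\ref{lem:cluster-property}(iv) ensures each $\epsilon_i$-almost clique $C\supseteq B$ has weak diameter at most $2$, so the minimum-ID vertex of $C$ can simulate a sequential coloring of $C$ in $O(1)$ rounds.

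The key estimates inside a layer-$i$ large block $B$ are: (i) $|\Psi(v)|\ge\Delta+1$, and the gap $|\Psi(v)|-|N(v)\cap(\text{still uncolored})|$ is monotone non-decreasing as neighbors get colored; (ii) the external degree of $v$ out of $C$ is at most $\epsilon_i\Delta$ by Lemma~\ref{lem:cluster-property}(i); and (iii) since $v$ lives in layer $i\ge 2$ it is $\epsilon_{i-1}$-sparse, so Lemma~\ref{lem:initial-color}(ii) has already equipped $v$ with $\Omega(\epsilon_{i-1}^2\Delta)=\Omega(\epsilon_i^4\Delta)$ excess colors, and this slack is preserved across subsequent iterations by (i). One round of \dense{} selects, inside each large block, a coordinated almost-proper coloring; a vertex $v$ keeps its tentative color unless one of its $\le\epsilon_i\Delta$ external neighbors picks the same color, which occurs with probability $O(\epsilon_i)$ since the tentative color is drawn from a pool of size $\Omega(\Delta)$. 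Iterating a constant number of times $k=\Theta(1)$ drives the expected number of survivors per block down to $O(\epsilon_i^{k}|B|)\le\epsilon_i^{6}\Delta$, and a Chernoff or Talagrand bound upgrades the expectation to a high-probability statement with failure probability $\exp(-\Omega(\epsilon_i^{k}|B|))\le\exp(-\Omega(\poly(\Delta)))$, using $\epsilon_i\ge\epsilon_1=\Delta^{-1/10}$ and $|B|\ge\Delta/\log(1/\epsilon_i)$. Finally, for any fixed $v\in V^\star$, its layer-$i$ neighbors in $V_{2+}^{\LG}$ either share $v$'s $\epsilon_i$-almost clique (contributing at most $(1+3\epsilon_i)\Delta$ candidates) or lie among the $\le\epsilon_i\Delta$ external neighbors of that clique; a union bound over these candidates yields the claimed $\epsilon_i^{5}\Delta$ cap.

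The main obstacle I anticipate is controlling palette shrinkage across iterations while retaining enough independence for concentration: because \dense{} runs simultaneously on large blocks at \emph{different} layers that may be joined by an external edge, one must show that the coordinated color pools inside two such blocks are large enough and drawn with enough randomness that the $O(\epsilon_i)$ external-collision bound still holds. This is exactly where the monotone-excess-color property (i) together with the HSS-style coordination inside each almost clique pays off, letting the $\Omega(\epsilon_i^4\Delta)$ slack inherited from Lemma~\ref{lem:initial-color} persist through all $O(1)$ rounds of \dense{} and supply the degree gap needed for the Chernoff step.
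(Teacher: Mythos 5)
Your proposal would not go through: the crucial mechanism you are missing is the \emph{hold-back} in \dense{} (version 2), and the excess-color source you invoke is the wrong one and gives a useless bound.

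For a layer-$i$ ($i\ge 2$) large block, the only excess colors inherited from Lemma~\ref{lem:initial-color}(ii) are $\Omega(\epsilon_{i-1}^2\Delta)=\Omega(\epsilon_i^4\Delta)$. If you then try to ``mirror Lemma~\ref{lem:color-sm}'' (i.e., run \dense{} version 1) with this slack, the per-iteration collision probability is governed by the ratio (external degree)/(excess colors) $=\epsilon_i\Delta/\Theta(\epsilon_i^4\Delta)=\Theta(\epsilon_i^{-3})>1$, which is vacuous. Your assertion that ``the tentative color is drawn from a pool of size $\Omega(\Delta)$'' is not justified: a vertex $v$ in a large block $B$ can be adjacent to essentially all of $B$, and when it is processed near the end of the internal permutation, almost all of its palette has been claimed by earlier cluster-mates, so without further structure its pool can shrink to only $\Omega(\epsilon_i^4\Delta)$ colors, not $\Omega(\Delta)$. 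Note also that Lemma~\ref{lem:color-sm} itself does \emph{not} get its palette slack from Lemma~\ref{lem:initial-color}; it uses the lower bound $\Delta/(2\log(1/\epsilon_i))$ from Lemma~\ref{lem:palette-lb}, which in turn relies on small and medium blocks being processed \emph{before} large ones. Large blocks enjoy no such guarantee, which is precisely why the paper cannot reuse version 1 here.

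The paper's proof instead runs \dense{} (version 2): within each cluster only a random $(1-\delta_j)$-fraction of vertices participate per round. The withheld $\delta_j|S_j|$ vertices guarantee that, because $|S_j|\ge L_j^{(1)}=\Delta/\log(1/\epsilon_i)$ by the definition of \emph{large}, every processed vertex (even the last one in the permutation) still sees about $\delta_j|S_j|\ge D_j\log(|S_j|/D_j)$ available colors (Lemma~\ref{lem:color-prob}). Combined with the external-degree/anti-degree cap $D_j=O(\epsilon_i\Delta)$ from Lemma~\ref{lem:cluster-property}, this yields a per-iteration shrinking rate $\delta_j=O(\epsilon_i\log^2(1/\epsilon_i))\ll 1$. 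Six iterations with carefully maintained invariants $\mathcal{D}_j^{(k)},\mathcal{U}_j^{(k)},\mathcal{L}_j^{(k)}$, plus the tail bound of Lemma~\ref{lem:dense-101} (a combinatorial $\binom{|T|}{t}(O(\delta))^t$ bound, not a straight Chernoff bound, precisely because the exposure must respect the cluster ordering), drive the uncolored layer-$i$ neighbor count of any fixed $v$ down to $\epsilon_i^5\Delta$ with failure probability $\exp(-\Omega(\poly(\Delta)))$. So the gap in your proposal is not a detail about concentration; it is the absence of the withholding trick and the cluster-size lower bound, without which the collision-probability estimate for large blocks simply does not hold.
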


Remember that our goal is to show that the bad vertices $V_{\bad}$ induce connected components of size $O(\poly \log n)$. However, if in a randomized procedure each vertex is added to $V_{\bad}$ with probability $1 - 1/\poly(\Delta)$, then the shattering lemma only guarantees that the size of each connected component of  $V_{\bad}$ is $O(\poly(\Delta)\log n)$, which is not necessarily $\poly\log n$.
This explains why Lemma~\ref{lem:color-top-lg} has two types of guarantees.

\begin{lemma}[Large blocks; layer 1] \label{lem:color-top-lg}
Let $c$ be any sufficiently large constant.
Then there is a constant time (independent of $c$) algorithm that colors a 
subset of $V_1^{\LG}$ while satisfying one of the following cases.  
\begin{itemize}
    \item The uncolored vertices of $V_1^{\LG}$ are partitioned among $R$ or $V_{\bad}$.
    The graph induced by $R$ has degree $O(c^2)$; each vertex joins $V_{\bad}$ with 
    probability $\Delta^{-\Omega(c)}$.
    \item If $\Delta > \log^{\alpha c} n$, where $\alpha > 0$ is some universal constant,
    then the uncolored vertices of $V_1^{\LG}$ are partitioned among $R$ and $X$,
    where the graph induced by $R$ has degree $O(c^2)$ and the components 
    induced by $X$ have size $\log^{O(c)} n$, w.h.p.
\end{itemize}
\end{lemma}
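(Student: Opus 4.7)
The plan is to run a small constant number of iterations of \dense{} (the block-coordinated coloring routine analyzed in Section~\ref{sect:dense}) on the layer-1 large blocks. For each layer-1 large block $B$, every vertex $v \in B$ has anti-degree in $B$ at most $3\epsilon_1\Delta$ and external degree to $V_{\epsilon_1}^{\denseexp}$ at most $\epsilon_1\Delta$ by Lemma~\ref{lem:cluster-property}, and $|B| \ge \Delta/\log(1/\epsilon_1) = \Omega(\Delta/\log\Delta)$ since $B$ is large. Within each block the vertices jointly commit to an internally conflict-free random assignment of colors drawn from their palettes, so the only source of failure for $v$ is a conflict with an external neighbor (either an external dense vertex outside $B$, with $\le \epsilon_1\Delta$ of them, or a sparse neighbor in $V_{\sparse}$, contributing a polynomial in $\epsilon_1$ conflict probability). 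Because $\epsilon_1 = \Delta^{-1/10}$, a single iteration cuts the uncolored in-$V_1^{\LG}$ degree at $v$ by a factor $\epsilon_1^{-\Omega(1)} = \Delta^{\Omega(1)}$ with probability $1 - \exp(-\Delta^{\Omega(1)})$.

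After $O(1)$ such iterations, I would declare $R$ to be the uncolored vertices whose residual degree in the uncolored subgraph is at most $O(c^2)$; by construction $R$ induces a graph of max-degree $O(c^2)$. A vertex $v \in V_1^{\LG}$ fails to land in $R$ only if one of the $O(1)$ iterations produces an abnormally large set of conflicts at $v$. Amplifying the analysis of \dense{} by a factor of $c$ (e.g., increasing by a factor of $c$ the effective number of random trials each vertex makes per iteration, so that $v$ is internally uncolored by \dense{} only if $c$ independent Bernoulli events all succeed) boosts the per-vertex failure probability to $\Delta^{-\Omega(c)}$ per iteration, hence $\Delta^{-\Omega(c)}$ overall by a union bound over $O(1)$ rounds. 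Crucially, this bad event depends only on random bits within a constant radius of $v$, because \dense{} is executed in $O(1)$ distributed rounds via Lemma~\ref{lem:cluster-property}(iv).

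This establishes Case~1 directly: every uncolored vertex of $V_1^{\LG}$ is either in $R$ (with the claimed degree bound) or placed in $V_{\bad}$, and the placement into $V_{\bad}$ occurs with probability $\Delta^{-\Omega(c)}$ under adversarial random bits outside a constant radius. For Case~2, where $\Delta > \log^{\alpha c} n$ with a sufficiently large universal constant $\alpha$, I would reroute the vertices that would have gone into $V_{\bad}$ into a new set $X$ and invoke the shattering lemma (Lemma~\ref{lem:shatter}) on $X$ with parameter $c/3$; since the per-vertex probability is $\Delta^{-\Omega(c)}$ and the relevant radius is $O(1)$, the lemma's hypothesis is met and we obtain components in $X$ of size $\Delta^{O(1)} \log_\Delta n = \log^{O(c)} n$ w.h.p., provided $\Delta$ exceeds a sufficiently large polynomial in $\log n$ (which is exactly what the threshold $\Delta > \log^{\alpha c} n$ provides).

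The main obstacle I anticipate is the absence at layer~1 of the $\Omega(\epsilon_{i-1}^2 \Delta)$ excess-color head start enjoyed by layer $i > 1$ vertices; without it, the dense coloring steps cannot simply be inherited from Lemma~\ref{lem:color-lg}. Instead, the argument must exploit precisely the large-block guarantee $|B| = \Omega(\Delta/\log\Delta)$ and block-coordinated randomness: because the entire block commits to a joint conflict-free coloring, the only conflicts arise externally, and Lemma~\ref{lem:cluster-property}(i,ii) bounds those externally. A secondary subtlety will be choosing the amplification parameter uniformly across iterations so that the single tunable parameter $c$ simultaneously controls both the degree bound $O(c^2)$ for $R$ and the bad-probability exponent, and so that the shattering lemma can be triggered exactly when $\Delta$ surpasses the stated polylogarithmic threshold.
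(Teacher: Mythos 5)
Your high-level blueprint (run $O(1)$ iterations of \dense{} on the layer-1 large blocks, declare $R$ to be low-residual-degree leftovers, push failures into $V_{\bad}$ or $X$, and call the shattering lemma) matches the paper's structure, and you correctly identify the core obstacle: no excess-color head start at layer 1, so the argument must lean on $|B|\ge\Delta/\log(1/\epsilon_1)$ and block-coordinated randomness. But the two technical mechanisms you propose to get the stated quantitative guarantees are not sound.

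First, the ``amplification'' step. You propose to boost the per-vertex failure probability to $\Delta^{-\Omega(c)}$ by ``increasing by a factor of $c$ the effective number of random trials each vertex makes per iteration.'' There is no such knob in \dense{} (version 2): each vertex selects one color according to a single random permutation, and the failure analysis (Lemma~\ref{lem:dense-101}) is a union-bound argument over subsets of the block, not a repeated-trial Chernoff argument. The paper obtains $\Delta^{-\Omega(c)}$ by a different device: in the penultimate iteration it deliberately uses a \emph{non-default}, artificially slow shrinking rate $\delta^{(10)}=\Delta^{-1/20}$ so that the cluster-size lower bound $L^{(11)}$ remains $\Delta^{\Omega(1)}$, and then the Lemma~\ref{lem:dense-101} union bound for a size-$c^2$ witness set contributes a factor $(O(\delta^{(11)}))^{c^2+c}$ with $\delta^{(11)}=\tilde{O}(\Delta^{-1/20})$, which is what gives $f=\Delta^{-\Omega(c)}$. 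Your sketch also compresses the preliminary degree reduction to a vague ``$O(1)$ iterations''; the paper needs nine iterations just to bring $D$ down from $\Theta(\Delta^{9/10})$ to $\operatorname{poly}\log\Delta$ at shrinking rate $\tilde\Theta(\Delta^{-1/10})$ per step, and the invariants $\mathcal{D}^{(k)},\mathcal{U}^{(k)},\mathcal{L}^{(k)}$ must be tracked at every step because invariant violations are what feed $V_{\bad}$.

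Second, the shattering step for Case~2 is incorrect as written. You apply Lemma~\ref{lem:shatter} with the original degree $\Delta$, and then assert $\Delta^{O(1)}\log_\Delta n=\log^{O(c)}n$ ``provided $\Delta>\log^{\alpha c} n$.'' This goes the wrong way: when $\Delta$ is large (e.g.\ $\Delta=n^{\Theta(1)}$), $\Delta^{2c}\log_\Delta n$ is polynomial in $n$, not polylogarithmic. The fix---and what the paper actually does---is to first run enough iterations that the \emph{residual} graph $G'$ (with vertices being the still-uncolored members of $S$, and edges both same-cluster and $G$-adjacency) has maximum degree $\Delta'=O(\log^{5c}n)$, and only then apply Lemma~\ref{lem:shatter} to $G'$, giving components of size $\operatorname{poly}(\Delta')\log n=\log^{O(c)}n$. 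This requires two additional carefully tuned iterations beyond the low-degree case, with all of $D^{(i)},L^{(i)},U^{(i)}$ kept $\Omega(\log n)$ so that no vertex enters $V_{\bad}$ w.h.p.\ in early iterations; the condition $\Delta>\log^{\alpha c}n$ is what makes all the shrinking rates $\delta^{(i)}\ll 1$ valid, not what bounds the component size.
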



In our $(\Delta+1)$-list coloring algorithm, we apply Lemmas~\ref{lem:color-sm},~\ref{lem:color-top-sm},~\ref{lem:color-lg}, and~\ref{lem:color-top-lg} to color the vertices in $V^\star \setminus V_{\sparse}$, and they are processed in this order: $(V_{2+}^{\SM}, V_{1}^{\SM}, V_{2+}^{\MD}, V_{1}^{\MD}, V_{2+}^{\LG}, V_{1}^{\LG})$.

\paragraph{Coloring the Leftover Vertices $X$ and $R$.}
Notice that the algorithm for Lemma~\ref{lem:color-top-lg} generates a leftover uncolored subset $R$ which induces a constant-degree subgraph, and 
(in case $\Delta > \log^{\Theta(c)} n$)  
a leftover uncolored subset $X$ where
each connected component has size at most $O(\polylog n)$.
Remember that the vertices in $R$ and $X$ do not join $V_{\bad}$.
All vertices in $X$ are colored deterministically in 
$\Detd(\polylog n)$ time; the vertices in $R$ are 
colored deterministically in $O(\poly(\Delta') + \log^\ast n) = O(\log^\ast n)$ time~\cite{Linial92,FraigniaudHK16,BarenboimEG18}, with $\Delta' = O(c^2) = O(1)$.

\paragraph{The Remaining Vertices.}
Any vertex in $V^\star$ that violates at least one condition specified in the
lemmas is added to the set $V_{\bad}$.
All remaining uncolored vertices  join the set $U$.
In other words, $U$ is the set of all vertices in  
$V^\star \setminus (V_{\sparse} \cup V_{\bad} \cup R \cup X)$ that 
remain uncolored after applying the lemmas.

\subsection{Coloring the Remaining Vertices}

At this point all uncolored vertices are in $U\cup V_{\sparse} \cup V_{\bad}$.
We show that $U\cup V_{\sparse}$ can be colored efficiently in $O(\log^\ast \Delta)$ time  using  Lemma~\ref{lem:color-remain},
then consider $V_{\bad}$.

\paragraph{Coloring the Vertices in $U$.}
Let $G'$ be the directed acyclic graph
induced by $U$, where all edges are oriented from the sparser to the denser endpoint.
In particular, an edge $e=\{u,u'\}$ is oriented as $(u,u')$ if
$u$ is at layer $i$, $u'$ at layer $i'$, and $i > i'$,
or if $i=i'$ and $\ID(u)>\ID(u')$.
We write $\Nout(v)$ to denote the set of out-neighbors of $v$ in $G'$.

For each layer-$i$ vertex $v$ in $G'$, and each layer $j$,
the number of layer-$j$ neighbors of $v$ in $G'$ is at most $O(\epsilon_j^5 \Delta)$,
due to Lemmas~\ref{lem:color-sm} and~\ref{lem:color-lg}.
The out-degree of $v$ is therefore at most $\sum_{j=1}^{i} \epsilon_j^5 \Delta = O(\epsilon_i^5 \Delta) = O(\epsilon_{i-1}^{5/2} \Delta)$.

We write $\Psi(v)$ to denote the set of available colors of $v$.
The number of excess colors at $v$ is $|\Psi(v)| -  \deg(v) = \Omega(\epsilon_{i-1}^2 \Delta)$.
Thus, there is an $\Omega(1/\sqrt{\epsilon_{i-1}})$-factor gap between the palette size of $v$ and the out-degree of $v$.

Lemma~\ref{lem:color-remain} is applied to color nearly all vertices in $U$ in $O(\log^\ast \Delta)$ time,
with any remaining uncolored vertices added to $V_{\bad}$.
We use the following parameters of Lemma~\ref{lem:color-remain}.
In view of the above, there exists a constant $\eta > 0$ such that, for each $i \in [2,\ell]$
and each layer-$i$ vertex $v$ in $G'$, we set $p_v =  \eta \epsilon_{i-1}^2 \Delta \leq
|\Psi(v)| - \deg(v)$.
There is a constant $C > 0$ such that for each $i \in [2,\ell]$ and each layer-$i$ vertex $v \in U$, we have:
\[
\sum_{u \in  N_{\text{out}}(v)} 1 / p_u \,\leq\,  \sum_{j=2}^{i} O\paren{\frac{\epsilon_{j-1}^{5/2} \Delta}{\epsilon_{j-1}^2 \Delta}}  \,=\, \sum_{j=2}^{i} O({\epsilon_{j-1}^{1/2}}) \,<\,  1/C.
\]
The remaining parameters to Lemma~\ref{lem:color-remain} are
\begin{align*}
p^\star = \eta \epsilon_{1}^2 \Delta = \Omega(\Delta^{8/10}), \; \;
d^\star = \Delta, \; \;
C = \Omega(1).
\end{align*}
Thus, by Lemma~\ref{lem:color-remain} the probability that a vertex still remains uncolored (and is added to $V_{\bad}$) after the algorithm is
$$\exp(-\Omega(\sqrt{p^\star})) + d^\star \exp(-\Omega(p^\star)) =  \exp(-\Omega(\Delta^{2/5})).$$

\paragraph{Coloring the Vertices in $V_{\sparse}$.}  
The set $V_{\sparse}$ can be colored in a similar way using  Lemma~\ref{lem:color-remain}.
We let $G''$ be any acyclic orientation of the graph induced by $V_{\sparse}$, e.g., orienting each edge $\{u,v\}$ towards 
the vertex $v$ such that $\ID(u) > \ID(v)$.
The number of available colors of each $v \in V_{\sparse}$  minus its out-degree is at
least $\Omega(\epsilon_\ell^2 \Delta)$, which is at least $\gamma \Delta$, for some constant $\gamma > 0$, according to the way we select the sparsity sequence.
 We define $p_v = \gamma \Delta < |\Psi(v)| - \deg(v)$.
We  have $\sum_{u \in  N_{\text{out}}(v)} (1 / p_u) \leq \outdeg(v) / (\gamma \Delta) \leq 1/\gamma$.
Thus, we can apply Lemma~\ref{lem:color-remain} with $C = \gamma$.
Notice that both $p^\star$ and $d^\star$ are $\Theta(\Delta)$, and so the probability that a vertex still remains uncolored after the algorithm  (and is added to $V_{\bad}$)  is $\exp(-\Omega(\sqrt{\Delta}))$.

\paragraph{Coloring the Vertices in $V_{\bad}$.}
At this point, all remaining uncolored vertices are in $V_{\bad}$.
 If $\Delta \gg \polylog n$, then $V_{\bad} = \emptyset$, w.h.p., in view of the failure probabilities $\exp(-\Omega(\poly(\Delta)))$ specified in the lemmas used in the previous coloring steps. 
 Otherwise, $\Delta = O(\polylog n)$, and by Lemma~\ref{lem:shatter}, each connected component of
$V_{\bad}$ has size at most $\poly(\Delta)\log n  = O(\polylog n)$.
In any case, it takes $\Detd(\polylog n)$ to color all  vertices in $V_{\bad}$ deterministically.

\begin{figure}
\begin{center}
 \framebox{
\begin{minipage}{6in}
{\bf $(\Delta+1)$-List Coloring Algorithm}
\begin{enumerate}
    \item Determine the $\epsilon$-almost cliques, for $\epsilon\in\{\epsilon_1,\ldots,\epsilon_\ell\}$.  (Lemma~\ref{lem:cluster-property}.)
    
    \item Perform the initial coloring step using algorithm \oneshot{} (Lemma~\ref{lem:initial-color})
    and partition the remaining uncolored vertices into $V^\star$ and $V_{{\bad}}$.
    Further partition $V^\star$ into a sparse set $V_{\sparse}$ and a hierarchy $\mathcal{T}$ of small, medium, and large blocks.  Partition $V^\star \backslash V_{\sparse}$ into 6 sets: $V_{2+}^{\SM},V_1^{\SM},V_{2+}^{\MD},V_1^{\MD},
    V_{2+}^{\LG},V_1^{\LG}$.
 
    \item Color most of $V_{2+}^{\SM},V_1^{\SM},V_{2+}^{\MD},V_1^{\MD},
    V_{2+}^{\LG},V_1^{\LG}$ in six steps.
    \begin{enumerate}
     \item Color a subset of $V_{2+}^{\SM}$ using algorithm \dense{} (version 1). Any vertices that violate the conclusion of Lemma~\ref{lem:color-sm} are added to $V_{\bad}$.
     \item Color $V_1^{\SM}$ using algorithm \dense{} (version 1). Any remaining uncolored vertices are added to $V_{\bad}$ (Lemma~\ref{lem:color-top-sm}).
     \item Color a subset of $V_{2+}^{\MD}$ using algorithm \dense{} (version 1). Any vertices that violate the conclusion of Lemma~\ref{lem:color-sm} 
     are added to $V_{\bad}$.
     \item Color $V_1^{\MD}$ using algorithm \dense{} (version 1). Any remaining uncolored vertices are added to $V_{\bad}$ (Lemma~\ref{lem:color-top-sm}).
     \item Color a subset of $V_{2+}^{\LG}$ using algorithm \dense{} (version 2). 
     Any vertices that violate the conclusion of Lemma~\ref{lem:color-lg} are added to $V_{\bad}$.
     \item Color $V_1^{\LG}$ using algorithm \dense{} (version 2).  Each remaining uncolored vertex is added to one of $X,R,$ or $V_{\bad}$. (See Lemma~\ref{lem:color-top-lg}.)
    \end{enumerate}

    \item W.h.p.~$R$ induces a graph with constant maximum degree. 
    Color $R$ in $O(\log^\ast n)$ time deterministically using a standard algorithm~\cite{Linial92,FraigniaudHK16,BarenboimEG18}.
    
    \item W.h.p.~$X$ induces a graph whose components have size $\poly\log n$.
    Color $X$ in $O(\Detd(\poly\log n))$ time deterministically; see~\cite{PanconesiS96}.
    
    \item Color those uncolored vertices $U$ in 
    $\left(V_{2+}^{\SM}\cup V_{2+}^{\MD} \cup V_{2+}^{\LG}\right) \backslash V_{\bad}$
    in $O(\log^\ast \Delta)$ time using algorithm \trial{} (Lemma~\ref{lem:color-remain}).
    Any vertices in $U$ that remain uncolored are added to $V_{\bad}$.
    
    \item Color $V_{\sparse}$ in $O(\log^\ast \Delta)$ time using algorithm \trial{} (Lemma~\ref{lem:color-remain}). 
    Any vertices that remain uncolored are added to $V_{\bad}$.
    
    \item W.h.p.~$V_{\bad}$ induces components of size $\poly\log n$.
    Color $V_{\bad}$ in $O(\Detd(\poly\log n))$ time deterministically; see~\cite{PanconesiS96}.
\end{enumerate}
\end{minipage}
}
\end{center}
\caption{\label{fig:algorithm} Steps 1, 2, and 3(a--f) take constant time.
Steps 4, 6, and 7 take $O(\log^\ast n) = O(\Detd(\poly\log n))$ time~\cite{Linial92,Naor91}.  The bottleneck in the algorithm are
Steps 5 and 8, which take $O(\Detd(\poly\log n))$ time.  The algorithm
succeeds in the prescribed time, so long as the input to Steps 4, 5, and 8 are as they should be, i.e., inducing subgraphs with constant degree,
or $\poly\log n$-size components, respectively.  
(These are instances of $(\deg+1)$-list coloring.)
When $\Delta \gg \poly\log n$ is sufficiently large, the set
$V_{\bad}$ is empty, w.h.p., but $X$ may be non-empty, and induce
components with size $\poly\log n$.
}
\end{figure}

\medskip
See Figure~\ref{fig:algorithm} for a synopsis of every step
of the $(\Delta+1)$-list coloring algorithm. 

\subsection{Time Complexity}

The time for \oneshot{} (Fig.~\ref{fig:algorithm}, Step 2) is $O(1)$.
The time for processing each of $V_{2+}^{\SM}$, $V_{1}^{\SM}$, $V_{2+}^{\MD}$, $V_{1}^{\MD}$, $V_{2+}^{\LG}$, $V_{1}^{\LG}$ 
(Steps 3(a--f)) is $O(1)$.
Observe that each of Steps 2 and 3(a--f) may put vertices in $V_{\bad}$,
that Steps 3(a,c,e) leave some vertices uncolored, 
and that Step 3(f) also
puts vertices in special sets $X$ and $R$.
With high probability, $R$ induces components with constant degree,
which can be colored deterministically in $O(\log^\ast n)$ time (Step 4).
The uncolored vertices ($U$) from Steps 3(a,c,e) have a large gap 
between their palette size and degree, and can be 
colored in $O(\log^\ast \Delta)$ time using the \trial{} algorithm (Lemma~\ref{lem:color-remain}) in Step 6.  The same type of palette size-degree gap exists for $V_{\sparse}$ as well so \trial{} colors
it in $O(\log^\ast \Delta)$ time; for Step 7 we are applying Lemma~\ref{lem:color-remain} again, but with different parameters.

Finally, Steps 5 and 8 solve a $(\deg+1)$-list coloring problem on a graph whose components have size $\poly\log n$.  
Observe that $V_{\bad}$ is guaranteed to induce components with size $\poly(\Delta)\log n$, which happens to be $\poly\log n$ since
no vertices are added to $V_{\bad}$, w.h.p., 
if $\Delta \gg \poly\log n$ is sufficiently large.  
In contrast, in Step 5 $X$ can be non-empty even when $\Delta$ is large, but it still induces components with size $\poly\log n$.

Since $\log^\ast \Delta \le \log^\ast n = O(\Detd(\poly\log n))$~\cite{Linial92}, the bottleneck in the algorithm is solving $(\deg+1)$-list coloring, in Steps 5 and 8.

\begin{theorem}\label{thm:main}
In the $\LOCAL$ model, the $(\Delta+1)$-list coloring problem 
can be solved, w.h.p., in $O(\Detd(\polylog n))$ time, where
$\Detd(n')$ is the deterministic complexity 
of $(\deg+1)$-list coloring on $n'$-vertex graphs.
\end{theorem}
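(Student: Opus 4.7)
The plan is to execute the algorithm summarized in Figure~\ref{fig:algorithm} and account for the round complexity and success probability of each step, using the lemmas already established. First I would build the hierarchical decomposition with respect to the sparsity sequence $(\epsilon_1,\ldots,\epsilon_\ell)$ defined by $\epsilon_1 = \Delta^{-1/10}$ and $\epsilon_i = \sqrt{\epsilon_{i-1}}$: identifying $\epsilon$-friend edges, $\epsilon$-almost cliques, and the classification into small/medium/large blocks at each layer all take $O(1)$ rounds, appealing to Lemma~\ref{lem:cluster-property}(iv). Then I would run \oneshot{} (Lemma~\ref{lem:initial-color}), place any vertex that fails its two conclusions into $V_{\bad}$, and partition the uncolored survivors $V^\star$ into the seven sets $V_{2+}^{\SM}, V_1^{\SM}, V_{2+}^{\MD}, V_1^{\MD}, V_{2+}^{\LG}, V_1^{\LG}, V_{\sparse}$.

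Next I would process the six block sets in the prescribed order. For each set, the palette-size hypothesis demanded by the relevant dense-coloring lemma must be verified before invoking it. For layer-$i$ vertices with $i\geq 2$ this follows from Lemma~\ref{lem:palette-lb} together with the excess-color bookkeeping argument in the text: either $|N(v)\cap V^\star|\geq \Delta/3$ and the lemma applies directly, or enough $\epsilon_\ell$-dense neighbors of $v$ joined $V_{\bad}$ to contribute $\Delta/6$ excess colors. For the layer-$1$ cases Lemma~\ref{lem:color-top-sm} and the second clause of Lemma~\ref{lem:color-top-lg} are used instead. Each of Steps 3(a--f) runs in $O(1)$ rounds and either moves residual vertices into $V_{\bad}$, or, for $V_1^{\LG}$, into the auxiliary sets $R$ and $X$.

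Once the block stages are done I would finish the uncolored remnants. The set $R$ has constant maximum degree w.h.p., so Linial's algorithm colors it in $O(\log^\ast n)$ rounds. The set $X$ and the final $V_{\bad}$ induce components of size $\polylog n$ (either because they arose from a shattering argument via Lemma~\ref{lem:shatter}, or because Lemma~\ref{lem:color-top-lg} guarantees $X$-components of that size directly), and so each is solved in $\Detd(\polylog n)$ time. For the uncolored leftovers $U$ coming out of Steps 3(a,c,e) and for $V_{\sparse}$, I would apply \trial{} via Lemma~\ref{lem:color-remain} with the parameters computed in the text: $p_v = \eta\epsilon_{i-1}^2\Delta$ on $U$, yielding $\sum_{u\in N_{\operatorname{out}}(v)} 1/p_u$ dominated by a geometric sum $\sum_j O(\epsilon_{j-1}^{1/2}) < 1/C$, and $p_v=\gamma\Delta$ on $V_{\sparse}$. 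Both invocations run in $O(\log^\ast \Delta)$ rounds with failure probability $\exp(-\Omega(\Delta^{2/5}))$ per vertex, and survivors are moved to $V_{\bad}$.

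The main obstacle is pinning down the behavior of $V_{\bad}$ across the two regimes of $\Delta$. In the regime $\Delta \gg \polylog n$, every failure probability of the form $\exp(-\Omega(\poly(\Delta)))$ is negligible, so a union bound forces $V_{\bad}=\emptyset$ w.h.p.; when $\Delta = O(\polylog n)$, I need to verify that each event triggering ``$v\in V_{\bad}$'' depends only on random bits within a constant radius of $v$ (or can be split into $O(\log^\ast\Delta)$ constant-radius events, as the text notes before Lemma~\ref{lem:shatter}) and occurs with probability at most $\Delta^{-k}$ for an arbitrarily large $k$, so that Lemma~\ref{lem:shatter} yields $V_{\bad}$-components of size $\poly(\Delta)\log n = \polylog n$. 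With this in hand, $V_{\bad}$ is finished in $\Detd(\polylog n)$ rounds. Summing the contributions, the $O(1)$ and $O(\log^\ast \Delta)\leq O(\log^\ast n) = O(\Detd(\polylog n))$ terms (the latter by Linial's lower bound) are dominated by the two $\Detd(\polylog n)$ calls, giving the claimed bound.
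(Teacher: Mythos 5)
Your proposal is correct and follows essentially the same route as the paper's own proof: it is a faithful reconstruction of the algorithm in Figure~\ref{fig:algorithm} and the accounting in Section~\ref{sect:algo}, including the palette-size verification via Lemma~\ref{lem:palette-lb}, the \trial{} parameter settings for $U$ and $V_{\sparse}$, the two-regime treatment of $V_{\bad}$ via Lemma~\ref{lem:shatter}, and the final dominance of the $\Detd(\polylog n)$ term using Linial's lower bound. No gaps.
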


Next, we argue that if the palettes have $\poly\log n$ extra colors initially, we can list color the graph in $O(\log^\ast \Delta)$ time.

\begin{theorem}\label{thm:excess}
There is a universal constant $\gamma > 0$ such that 
the $(\Delta + \log^{\gamma} n)$-list coloring problem can be solved in the $\LOCAL$ model, w.h.p., in $O(\log^\ast \Delta)$ time.
\end{theorem}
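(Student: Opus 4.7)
My plan is to dispatch the theorem by cases based on whether $\Delta$ is below or above $\log^{\gamma} n$, where $\gamma$ is a sufficiently large universal constant to be fixed at the end of the analysis.

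First, in the easy case $\Delta \le \log^{\gamma} n$, the palette size satisfies $|\Psi(v)| \ge \Delta + \log^{\gamma} n \ge (1+\rho)\Delta$ with $\rho = \log^{\gamma} n / \Delta \ge 1$, so I would apply Lemma~\ref{lem:color-remain-simple} directly to the whole graph. This runs in $O(\log^\ast \Delta)$ rounds, and the probability that any given vertex remains uncolored is $\exp(-\Omega(\sqrt{\rho\Delta})) = \exp(-\Omega(\log^{\gamma/2} n))$; for $\gamma$ a sufficiently large constant this is $1/\poly(n)$, so a union bound over the $n$ vertices finishes the case.

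Second, in the interesting case $\Delta > \log^{\gamma} n$, I would run the main algorithm of Theorem~\ref{thm:main} essentially verbatim (Figure~\ref{fig:algorithm}), with the goal of eliminating the two $O(\Detd(\polylog n))$ bottlenecks. Because $\Delta$ is polylogarithmically large, all of the $\exp(-\Omega(\poly(\Delta)))$ and $\Delta^{-\Omega(c)}$ failure bounds appearing in Lemmas~\ref{lem:color-sm}--\ref{lem:color-top-lg} and in Steps~6 and~7 yield $V_{\bad} = \emptyset$ w.h.p., making the deterministic Step~8 vacuous. Steps~3(a--f), 4, 6, and 7 already fit inside the target $O(\log^\ast \Delta)$ budget. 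The only remaining issue is Step~5, which colors the leftover set $X$ produced by Lemma~\ref{lem:color-top-lg}. Since $X$ consists of components of size $\log^{O(c)} n$, the induced subgraph on $X$ has maximum degree at most $\log^{O(c)} n$, while each $v\in X$ retains at least $\log^{\gamma} n$ available colors (at most $\Delta$ can be in use by already-colored neighbors). I would therefore replace the $\Detd(\polylog n)$ call by one more application of Lemma~\ref{lem:color-remain-simple} within each component of $X$, now with $\rho_X = \Omega(\log^{\gamma - O(c)} n) = \Omega(1)$. This finishes $X$ in $O(\log^\ast \log^{O(c)} n) = O(\log^\ast \Delta)$ rounds with per-vertex failure $\exp(-\Omega(\log^{\gamma/2} n))$, which union-bounds to a w.h.p.~guarantee.

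The main obstacle is choosing a single universal $\gamma$ that makes both cases go through simultaneously. Case~A requires $\log^{\gamma/2} n$ to dominate $\log n$, i.e.\ roughly $\gamma \ge 2$; Case~B requires $\gamma$ to exceed the constant implicit in Lemma~\ref{lem:color-top-lg}'s component-size exponent $\log^{O(c)} n$, where $c$ is itself fixed by the correctness needs of the main algorithm. Both constraints are linear in constants fixed independently of $n$, so any sufficiently large universal $\gamma$ handles both, and the total running time is $O(\log^\ast \Delta)$ in either case.
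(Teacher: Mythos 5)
Your proposal follows essentially the same two-case strategy as the paper: split on whether $\Delta$ exceeds a polylogarithmic threshold, apply Lemma~\ref{lem:color-remain-simple} directly using the $\log^\gamma n$ excess colors when $\Delta$ is small, and when $\Delta$ is large, run the main algorithm and replace the deterministic cleanup step on $X$ (the paper actually handles $R\cup X$ together, which is equivalent) by one more pass of Lemma~\ref{lem:color-remain-simple}. One imprecision worth flagging: when you assert that ``all of the $\exp(-\Omega(\poly(\Delta)))$ and $\Delta^{-\Omega(c)}$ failure bounds \ldots yield $V_{\bad}=\emptyset$ w.h.p.,'' the $\Delta^{-\Omega(c)}$ bound by itself does \emph{not} give a $1/\poly(n)$ failure probability when $\Delta$ is merely $\polylog n$ --- the right reason $V_{\bad}$ stays empty at that step is \emph{structural}, not probabilistic: once $\gamma$ is chosen so that $\Delta > \log^{\alpha c} n$, the high-degree case of Lemma~\ref{lem:color-top-lg} applies and, by construction, it routes failures to $X$ rather than to $V_{\bad}$, so only the $\exp(-\Omega(\poly(\Delta)))$ bounds are actually in play for $V_{\bad}$. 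Your subsequent handling of $X$ is otherwise consistent with this and the overall argument goes through.
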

\begin{proof}
For all parts of our $(\Delta+1)$-list coloring algorithm, 
except the first case of Lemma~\ref{lem:color-top-lg},
the probability that a vertex $v$ joins $V_{\bad}$ 
is $\exp(-\Omega(\poly (\Delta)))$.
Let $\alpha$ and $c$ be the constants in  Lemma~\ref{lem:color-top-lg}
and $k_1 = \Theta(c) \geq \alpha c$ be such that if $\Delta > \log^{k_1} n$, 
then the probability that a vertex $v$ joins $V_{\bad}$ in our $(\Delta+1)$-list coloring algorithm 
is $\exp(-\Omega(\poly (\Delta))) = 1/\poly(n)$.
Note that when  $\Delta > \log^{k_1} n$, 
no vertex is added to $V_{\bad}$ in  Lemma~\ref{lem:color-top-lg}.

Let $R' = R \cup X$ be the leftover vertices in Lemma~\ref{lem:color-top-lg} for the case $\Delta > \log^{k_1} n$.
There exists a constant $k_2 > 0$ such that the subgraph induced by
$R'$ has maximum degree $\log^{k_2} n$.
We set $\gamma = \max\{k_1, k_2\}+1$.
Now we show how to solve the $(\Delta + \log^{\gamma} n)$-list coloring problem in $O(\log^\ast \Delta)$ time.

If $\Delta \leq \log^{\gamma - 1} n$, then we apply the algorithm of Lemma~\ref{lem:color-remain-simple} directly, 
with $\rho = \frac{\log^{\gamma} n}{\Delta} - 1 = \Omega(\log n)$. The algorithm takes $O(1 + \log^\ast \Delta - \log^\ast \rho) = O(1)$ time,
and the probability that a vertex $v$ is not colored is $\exp(-\Omega(\sqrt{\rho \Delta})) = \exp(-\Omega(\log^{\gamma/2} n)) \ll 1/ \poly(n)$.

If $\Delta > \log^{\gamma - 1} n$, then we apply Steps 1,2,3,6, and 7 of our $(\Delta+1)$-list coloring algorithm.
Due to the lower bound on $\Delta$, we have $V_{\bad} = \emptyset$, w.h.p., which obviates the need to implement Step 8.

This algorithm takes $O(\log^\ast \Delta)$ time, and produces an 
uncolored subgraph $R' = R\cup X$ that has maximum degree 
$\Delta' \leq \log^{k_2} n$.  In lieu of Steps 4 and 5,
we apply the algorithm of Lemma~\ref{lem:color-remain-simple} to color $R'$ in $O(1 + \log^\ast \Delta' - \log^\ast \rho) = O(1)$ time, 
where $\rho = \frac{\log^\gamma n}{\Delta'} - 1 = \Omega(\log n)$.
\end{proof}

If every vertex is $\epsilon$-sparse,
with $\epsilon^2 \Delta$ sufficiently large, then the algorithm of Lemma~\ref{lem:initial-color}
gives \emph{every} vertex $\Omega(\epsilon^2 \Delta)$ excess colors, w.h.p.
Combining this observation with Theorem~\ref{thm:excess}, we have the following result, which shows that the
$(\Delta+1)$-list coloring problem can be solved very efficiently when all vertices are sufficiently locally sparse.

\begin{theorem}\label{thm:sparse}
There is a universal constant $\gamma > 0$ 
such that the following holds.
Suppose $G$ is a graph with maximum degree $\Delta$ in which each vertex is $\epsilon$-sparse, where $\epsilon^2\Delta > \log^\gamma n$.
A $(\Delta+1)$-list coloring of $G$ can be computed in the $\LOCAL$ model, w.h.p., in $O(\log^\ast \Delta)$ time.
\end{theorem}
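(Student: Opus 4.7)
The plan is to chain \oneshot{} (Lemma~\ref{lem:initial-color}) with the algorithm behind Theorem~\ref{thm:excess}. First, I would run \oneshot{} for $O(1)$ rounds. Since every vertex $v$ is $\epsilon$-sparse and $\epsilon^2 \Delta > \log^{\gamma} n$, part~(ii) of Lemma~\ref{lem:initial-color} gives $v$ at least $\Omega(\epsilon^2 \Delta) = \Omega(\log^{\gamma} n)$ excess colors with failure probability $\exp(-\Omega(\epsilon^2 \Delta)) \leq \exp(-\Omega(\log^{\gamma} n))$. Choosing $\gamma$ large enough, this is $n^{-\omega(1)}$; a union bound then guarantees that every uncolored vertex has at least $\log^{\gamma_0} n$ excess colors w.h.p., where $\gamma_0$ is the universal constant appearing in Theorem~\ref{thm:excess}.

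Next, I would invoke (the algorithm inside) Theorem~\ref{thm:excess} on the residual instance, whose per-vertex excess-color invariant $|\Psi(v)| - \deg(v) \geq \log^{\gamma_0} n$ is exactly the guarantee that Theorem~\ref{thm:excess}'s algorithm maintains after its own initial step. The hierarchical block-coloring routines followed by \trial{} (Lemma~\ref{lem:color-remain}), or, in the low-degree regime, Lemma~\ref{lem:color-remain-simple} applied directly with $\rho = \Omega(\log n)$, then finishes the job in $O(\log^\ast \Delta)$ time w.h.p. The total running time is $O(1) + O(\log^\ast \Delta) = O(\log^\ast \Delta)$.

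The main subtlety lies in re-using Theorem~\ref{thm:excess} here: it is formally stated for uniform palettes of size $\Delta + \log^{\gamma_0} n$, whereas our residual instance only satisfies the weaker per-vertex invariant $|\Psi(v)| \geq \deg(v) + \log^{\gamma_0} n$. The fix is to observe that nothing in Theorem~\ref{thm:excess}'s proof uses more than this invariant: Lemma~\ref{lem:color-remain-simple} in the small-$\Delta$ case only needs $|\Psi(v)| \geq \log^{\gamma_0} n \geq (1+\rho)\Delta$, and the hierarchical machinery in the large-$\Delta$ case is driven entirely by the excess-color budget. Equivalently, one can bypass the black-box invocation and simply split into cases on the residual maximum degree, mirroring the proof of Theorem~\ref{thm:excess} verbatim.
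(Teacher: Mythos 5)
Your proposal matches the paper's proof, which is given only in passing just before the theorem: run \oneshot{} (Lemma~\ref{lem:initial-color}) once so that every $\epsilon$-sparse vertex gains $\Omega(\epsilon^2\Delta)\ge\log^{\gamma_0} n$ excess colors w.h.p., then finish by invoking the machinery of Theorem~\ref{thm:excess}. The mismatch you flag---the residual instance carries a per-vertex excess-color invariant $|\Psi(v)|\ge \deg(v)+\log^{\gamma_0} n$ rather than a uniform $(\Delta+\log^{\gamma_0} n)$-palette---is genuine and silently elided in the paper, and your resolution (excess colors are monotone non-decreasing, and the proof of Theorem~\ref{thm:excess} only draws on that budget, here necessarily via its high-degree branch since $\epsilon\le 1$ forces $\Delta>\log^{\gamma} n$, ending with Lemma~\ref{lem:color-remain-simple} on the low-degree leftover $R'=R\cup X$) is the right way to close the gap.
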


\begin{remark}
Theorem~\ref{thm:sparse} insists on every vertex being
\emph{$\epsilon$-sparse} according to Definition~\ref{def-sparse-2}.
It is straightforward to show connections between this definition of sparsity and
others standard measures from the
literature.  For example, such a graph is
\emph{$(1-\epsilon')$-locally sparse},
where $\epsilon'=\Omega(\epsilon^2)$, according to Definition~\ref{def-sparse-1}.
Similarly, any $(1-\epsilon')$-locally sparse graph is $\Omega(\epsilon')$-sparse.
Graphs of \emph{degeneracy} $d\le (1-\epsilon')\Delta$
or \emph{arboricity} $\lambda \le (1/2 - \epsilon')\Delta$ are trivially
$(1-\Omega(\epsilon'))$-locally sparse. 
\end{remark}

\begin{remark}
We have made no effort to minimize the constant $\gamma$ in Theorems~\ref{thm:excess} and \ref{thm:sparse}, 
and it is impractically large.  It would be useful to know
if these theorems remain true when $\gamma$ is small, say 1,
i.e., is $(\Delta+\log n)$-coloring solvable in 
$O(\log^\ast \Delta)$ time, w.h.p.?
\end{remark}

\section{Fast Coloring using Excess Colors}\label{sect:trial-detail}

In this section, we prove Lemma~\ref{lem:color-remain}.
Consider a directed acyclic graph $G=(V,E)$, where each vertex $v$ has a palette $\Psi(v)$.
Each vertex $v$ is associated with a parameter $p_v \leq |\Psi(v)| - \deg(v)$, i.e., $p_v$ is a lower bound on the number of excess colors at $v$.
All vertices agree on values 
$p^\star \le \min_{v\in V} p_v$,
$d^\star \ge \max_{v \in V} \outdeg(v)$,
and $C = \Omega(1)$, such that the following is satisfied for all $v$.
\begin{equation}\label{eqn:Cpu}
\sum_{u \in  N_{\text{out}}(v)}  1/ p_u \leq 1/C.
\end{equation}
Intuitively, the sum $\sum_{u \in  N_{\text{out}}(v)}  1/ p_u$ measures the amount of ``contention'' at a vertex $v$.
In the \trial{} algorithm each vertex $v$ 
selects each color $c \in \Psi(v)$ with probability $\frac{C}{2  |\Psi(v)|} < \frac{C}{2  p_v}$ and permanently colors itself if it selects a color
not selected by any out-neighbor.
\begin{framed}
\noindent {\bf Procedure} \trial.
\begin{enumerate}
\item Each color $c \in \Psi(v)$ is added to $S_v$ independently with probability $\frac{C}{2  |\Psi(v)|}$.
\item If there exists a color 
$\displaystyle c^\star \in S_v\backslash \left(\bigcup_{u\in \Nout(v)} S_u\right)$, $v$ permanently colors itself $c^\star$.
\end{enumerate}
\end{framed}
In Lemma~\ref{lem:shrink} we present an analysis of \trial.
We show that after an iteration of \trial, the amount of ``contention'' at a vertex $v$ decreases by (roughly) an  $\exp(C/6)$-factor, with very high probability.

\begin{lemma}\label{lem:shrink}
Consider an execution of \trial.
Let $v$ be any vertex. Let $D$ be the summation of $1/p_u$ over all vertices $u$ in $\Nout(v)$ that remain uncolored after \trial.
Then the following holds.
\begin{align*}
\Prob[\text{ $v$ remains uncolored }] &\leq \exp(-C/6) + \exp(-\Omega(p^\star)). \\
\Prob[D \geq (1+\lambda)  \exp(-C/6) / C ] &\leq \exp\left(-2 \lambda^2   p^\star \exp(-C/3) / C \right) +d^\star\exp(-\Omega(p^\star)).
\end{align*}
\end{lemma}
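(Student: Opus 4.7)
I would analyze the blocked colors at $v$ directly. Define $B_v = \Psi(v) \cap \bigcup_{u \in \Nout(v)} S_u$, so that $v$ remains uncolored iff $S_v \subseteq B_v$. Since coin flips for distinct colors are independent, the indicators $\{\mathbf{1}[c \in B_v]\}_{c \in \Psi(v)}$ are mutually independent Bernoullis with
\[
\Prob[c \in B_v] \;\leq\; \sum_{u \in \Nout(v)} \frac{C}{2|\Psi(u)|} \;\leq\; \sum_{u \in \Nout(v)} \frac{C}{2 p_u} \;\leq\; \frac{1}{2}.
\]
A Chernoff--Hoeffding bound gives $\Prob[|B_v| \geq \tfrac{2}{3} |\Psi(v)|] \leq \exp(-\Omega(|\Psi(v)|)) \leq \exp(-\Omega(p^\star))$. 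On the complementary event, the unblocked palette $G_v = \Psi(v)\setminus B_v$ satisfies $|G_v| \geq |\Psi(v)|/3$, and conditional on $B_v$, the probability that $S_v$ avoids $G_v$ is at most $\bigl(1 - \tfrac{C}{2|\Psi(v)|}\bigr)^{|G_v|} \leq \exp(-C/6)$. Summing these two contributions yields the first inequality.

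\textbf{Plan for the second bound.} Write $D = \sum_{u \in \Nout(v)} X_u/p_u$ with $X_u = \mathbf{1}[u \text{ remains uncolored}]$. Let $E = \bigcap_{u \in \Nout(v)} \{|B_u| \leq \tfrac{2}{3} |\Psi(u)|\}$; applying the first-bound argument to each $u$ and union-bounding gives $\Prob[\overline{E}] \leq d^\star \exp(-\Omega(p^\star))$, which accounts for the additive term $d^\star \exp(-\Omega(p^\star))$ in the statement. On $E$, after conditioning on enough external randomness to determine each $G_u$, the variable $X_u$ depends only on the independent selection $S_u$ and satisfies $\Prob[X_u = 1 \mid G_u] \leq \exp(-C/6)$. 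Under this conditioning, $D$ becomes a weighted sum of (conditionally) independent $[0,1/p_u]$-valued random variables, and Hoeffding's inequality gives
\[
\Prob\!\left[D - \E[D \mid E] \geq t \;\bigm|\; E\right] \;\leq\; \exp\!\left(-\frac{2t^2}{\sum_{u} (1/p_u)^2}\right).
\]
Using $\sum_u 1/p_u^2 \leq (1/p^\star)\sum_u 1/p_u \leq 1/(Cp^\star)$ and setting $t = \lambda \exp(-C/6)/C$ produces the claimed factor $\exp(-2\lambda^2 p^\star \exp(-C/3)/C)$; combining with $\Prob[\overline{E}]$ finishes the proof.

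\textbf{Main obstacle.} The principal subtlety is executing the conditional independence step cleanly: the DAG may contain directed edges inside $\Nout(v)$, so two indicators $X_u, X_{u'}$ can share the random selection $S_w$ of a common successor $w \in \Nout(u) \cap \Nout(u')$ that may itself lie in $\Nout(v)$. My plan is to reveal all coin flips $S_w$ for $w \notin \Nout(v)$, which fixes the ``external'' blocking at each $u$, and then handle the residual intra-$\Nout(v)$ dependence by processing the $S_u$'s in a reverse topological order of the restricted DAG. In this order, once $S_u$ is revealed the status of $X_u$ becomes measurable, so a Doob martingale argument (or, equivalently, an ordered revelation justifying independence in the Hoeffding moment-generating function) delivers the $\sum (1/p_u)^2$ denominator. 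Some care is required to verify that the per-vertex bound $\exp(-C/6)$ survives the conditioning --- this relies on the good event $E$, which guarantees $|G_u| \geq |\Psi(u)|/3$ unblocked colors at every $u \in \Nout(v)$ before the $S_u$'s are drawn.
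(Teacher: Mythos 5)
Your proposal is correct and follows the paper's proof almost step for step: the first bound is obtained exactly as in the paper (Chernoff on the blocked set $B_v$, then the $(1-\frac{C}{2|\Psi(v)|})^{|\Psi(v)|/3}\le\exp(-C/6)$ computation on the complementary event), and the second bound uses the same decomposition into the union of the per-neighbor bad events and a Hoeffding bound. The reverse-topological revelation you propose to justify Hoeffding despite intra-$\Nout(v)$ dependence is precisely the argument in the paper's footnote, so the ``main obstacle'' you identify is the same one the authors flag and resolve in the same way.
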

\begin{proof}
For each vertex $v$, we define the following two events.
\begin{description}
\item $E_v^{\text{good}} \, :$  $v$ selects a color that is not selected by any vertex in $N_{\text{out}}(v)$.
\item $E_v^{\text{bad}}\, :$  the number of colors in $\Psi(v)$ that are selected by some vertices in $N_{\text{out}}(v)$ is at least $\frac23 \cdot |\Psi(v)|$.
\end{description}
Notice that $E_v^{\text{good}}$ is the event where $v$ successfully colors itself.
We first show that $\Prob[E_v^{\text{bad}}] =  \exp(-\Omega(p^\star))$. Fix any color $c \in \Psi(v)$.
The probability that $c$ is selected by some vertex in $N_{\text{out}}(v)$ is
\[
1 - \prod_{u\in N_{\text{out}}(v)} \paren{1 - \fr{C}{2 |\Psi(u)|}}
\le
1 - \prod_{u\in N_{\text{out}}(v)} \paren{1 - \fr{C}{2p_u}} \le \sum_{u\in N_{\text{out}}(v)} \fr{C}{2p_u} \le \fr{1}{2},
\]
where the last inequality follows from (\ref{eqn:Cpu}).
Since these events are independent for different colors, 
$\Prob[E_v^{\text{bad}}] \leq \Prob[\text{Binomial}(n', p') \geq \frac{2 n'}{3}]$ with $n' = |\Psi(v)| \geq p_{v}$ and $p' = \frac12$. 
By a Chernoff bound, we have:
\[
\Prob\left[E_u^{\text{bad}}\right] \leq \exp(-\Omega(n' p')) =  \exp(-\Omega(p^\star)).
\]
Conditioned on $\overline{E_v^{\text{bad}}}$, $v$ will color itself unless it fails to choose \emph{any} of
$|\Psi(v)|/3$ specific colors from its palette.
Thus,
\begin{equation}\label{eqn:Egood}
\Prob\left[\left.\overline{E_v^{\text{good}}} \;\right|\; \overline{E_v^{\text{bad}}}\right]
\leq \paren{1 - \fr{C}{2|\Psi(v)|}}^{|\Psi(v)|/3} \leq \exp\paren{{\fr{-C}{6}}}.
\end{equation}

\medskip

We are now in a position to prove the first inequality of the lemma.
The probability that $v$ remains uncolored is 
at most $\Prob\left[\left.\overline{E_v^{\text{good}}} \;\right|\; \overline{E_v^{\text{bad}}}\right]
+ \Prob\left[E_v^{\text{bad}}\right]$, which is at most $\exp(-C/6) + \exp(-\Omega(p^\star))$.

\medskip

Next, we prove the second inequality, on the upper
tail of the random variable $D$.
Let $N_{\text{out}}(v) = (u_1, \ldots, u_k)$.
Let $E_i^{\text{bad}}$ and $E_i^{\text{good}}$ be short for $E_{u_i}^{\text{bad}}$ and $E_{u_i}^{\text{good}}$,
and let $\mathcal{E}$ be the event $\bigcup_i E_i^{\text{bad}}$.
By a union bound,
\[
\Prob\sqbrack{\mathcal{E}} \leq \outdeg(v) \cdot \exp(-\Omega(p^\star)) \leq d^\star \cdot \exp(-\Omega(p^\star)).
\]

Let ${X} = \sum_{i=1}^k X_i$, where 
$X_i = 1/{p_{u_i}}$ if  \emph{either}
$\overline{E_i^{\text{good}}}$ or $E_i^{\text{bad}}$ occurs, 
and $X_i = 0$ otherwise.
Observe that if we condition on 
$\overline{\mathcal{E}}$,
then ${X}$ is exactly $D$, 
the random variable we want to bound.

By linearity of expectation,
\begin{align*}
    \mu = \Expect[X \;|\; \overline{\mathcal{E}}] 
    &= \sum_i \Expect[X_i \;|\; \overline{\mathcal{E}}]\\
    &\leq \sum_i \frac{1}{p_{u_i}}\cdot \Prob\left[\left.\overline{E_i^{\text{good}}} \;\right|\; \overline{E_i^{\text{bad}}}\right]\\
    &\leq \sum_i \exp(-C/6)/p_{u_i} & \mbox{Equation~(\ref{eqn:Egood})}\\
    &\leq \exp(-C/6)/C     & \mbox{Equation~(\ref{eqn:Cpu})}
\end{align*}

Each variable $X_i$ is within the range $[a_i, b_i]$, where $a_i = 0$ and $b_i = 1/{p_{u_i}}$.
We have $\sum_{i=1}^k (b_i - a_i)^2 \leq  \sum_{u \in  N_{\text{out}}(v)} 1/ (p_u \cdot p^\star) \leq 1/ (C p^\star)$.
By Hoeffding's inequality,\footnote{The variables $\{X_1, \ldots, X_k\}$ are not independent, but we are still able to apply Hoeffding's inequality. The reason is as follows. Assume that $N_{\text{out}}(v) = (u_1, \ldots, u_k)$ is sorted in reverse topological order, and so for each $1 \leq j \leq k$, we have $N_{\text{out}}(u_j)\cap \{u_{j}, \ldots, u_k\} = \emptyset$. Thus, conditioning on (i) $\overline{E_i^{\text{bad}}}$ and (ii) {\em any} colors selected by vertices in $\bigcup_{1 \leq j < i} N_{\text{out}}(u_j) \cup \{u_j\}$, the probability that $\overline{E_i^{\text{good}}}$ occurs is still at most $\exp({\frac{-C}{6}})$.}
we have
\begin{align*}
\Prob[X \geq (1+\lambda) \exp(-C/6) / C \;|\; \overline{\mathcal{E}}]
&\leq \Prob[X  \geq (1+\lambda)\mu \;|\; \overline{\mathcal{E}}] \\
&\leq \exp\left(\frac{-2(\lambda \mu)^2}{\sum_{i=1}^k (b_i - a_i)^2}\right) \\
& \leq \exp\left(-2 (\lambda \exp(-C/6) / C)^2 (p^\star C)\right) \\
& = \exp\left(-2 \lambda^2  p^\star \exp(-C/3) / C\right).
\end{align*}

Thus,
\begin{align*}
\Prob[D \geq (1+\lambda) \exp(-C/6) / C ]
&\leq
\Prob[X \geq (1+\lambda) \exp(-C/6) / C \;|\; \overline{\mathcal{E}}] + 
\Prob[\mathcal{E}]\\
&\leq 
\exp\left(-2 \lambda^2  p^\star \exp(-C/3) / C\right) + 
d^\star\exp(-\Omega(p^\star)). 
\qedhere
\end{align*}
\end{proof}

\begin{proof}[Proof of Lemma~\ref{lem:color-remain}]
In what follows, we show how Lemma~\ref{lem:shrink} can be used to derive  Lemma~\ref{lem:color-remain}.
Our plan is to apply \trial\ for 
$k^\star = \log^\ast p^\star - \log^\ast C + O(1)$ iterations.
For the $k$th iteration we use the parameter $C_k$, 
which is defined as follows:
\begin{align*}
    C_1 &= \min\{\sqrt{p^\star},\; C\},\\
    C_k &= \min\left\{\sqrt{p^\star},\; \frac{C_{k-1}}{(1+ \lambda)\exp(-C_{k-1}/6)}\right\}\\
    k^\star &= \min\{k \;|\; C_k = \sqrt{p^\star}\} & \mbox{(the last iteration)}
\end{align*}
Here $\lambda > 0$ must be selected to be a sufficiently small constant so that $(1+\lambda)\exp(-C_{k-1}/6) < 1$. 
This guarantees that the sequence $(C_k)$ is strictly increasing.
For example, if $C \ge 6$ initially, 
we can fix $\lambda=1$ throughout.

We analyze each iteration of  \trial{} 
using the same (initial) vector of $(p_v)$ values, i.e., we do not
count on the number of excess colors at any vertex increasing over time.

At the \emph{end} of the $k$th iteration, $k\in [1,k^\star]$, we have the following invariant $\mathcal{H}_k$ that we expect all vertices to satisfy:
\begin{itemize}
\item If $k\in [1,k^\star)$, $\mathcal{H}_k$ stipulates that
for each uncolored vertex $v$ after the $k$th iteration, 
the summation of $1/p_u$ over all uncolored $u\in \Nout(v)$ is 
less than $1 / C_{k+1}$.
\item $\mathcal{H}_{k^\star}$ stipulates that all vertices are colored at the end of the $k^\star$th iteration.
\end{itemize}
The purpose of $\mathcal{H}_k$, $k\in [1,k^\star)$, 
is to guarantee that $C_{k+1}$ is a valid parameter for the 
$(k+1)$th iteration of \trial.
For each $k\in [1,k^\star]$, at the end of the $k$th iteration we remove from consideration all vertices violating $\mathcal{H}_k$,
and add them to the set $V_{\bad}$. Thus, by definition of $\mathcal{H}_{k^\star}$, after the last iteration, all vertices other than the ones in $V_{\bad}$ have been colored.

To prove the lemma, it suffices to show that 
the probability of $v$ joining $V_{\bad}$ is 
at most $\exp(-\Omega(\sqrt{p^\star})) + d^\star \exp(-\Omega(p^\star))$,
and this is true even if the randomness outside a 
constant radius around $v$ is determined adversarially.
By Lemma~\ref{lem:shrink}, the probability that a vertex is removed at the end of the $k$th iteration, 
where $k\in [1,k^\star)$,
is at most
\begin{align*}
& \exp(\Omega(p^\star / C_{k+1})) + d^\star \exp(-\Omega(p^\star))
\leq \exp(-\Omega(\sqrt{p^\star})) + d^\star \exp(-\Omega(p^\star)).
\end{align*}
The probability that a vertex is removed at the end of the $k^\star$th iteration is at most
$\exp(-C_{k^\star}/6) + \exp(-\Omega(p^\star)) \leq \exp(-\Omega(\sqrt{p^\star}))$.
By a union bound over all 
$k^\star = \log^\ast p^\star - \log^\ast C + O(1)$ iterations, 
the probability that a vertex joins $V_{\bad}$ 
is $\exp(-\Omega(\sqrt{p^\star})) + d^\star \exp(-\Omega(p^\star))$.
\end{proof}

\section{Coloring Locally Dense Vertices}\label{sect:dense}

Throughout this section, we consider the following setting.
We are given a graph $G=(V,E)$, where some vertices are already colored.  We are also given a
\emph{subset}  $S$  of the uncolored vertices, which
is partitioned into $g$ disjoint {\em clusters} $S = S_1 \cup S_2 \cup \cdots \cup S_g$, each with weak diameter 2.
(In particular, this implies that otherwise sequential algorithms
can be executed on each cluster in $O(1)$ rounds in the $\LOCAL$ model.)
Our goal is to color a large fraction of the vertices in $S$ 
in only constant time.

We assume that the edges within $S$ are oriented from the sparser to the denser endpoint, breaking ties by comparing IDs.
In particular, an edge $e=\{u,u'\}$ is oriented as $(u,u')$ if
$u$ is at layer $i$, $u'$ at layer $i'$, and $i > i'$,
or if $i=i'$ and $\ID(u)>\ID(u')$.
Notice that this orientation is {\em acyclic}.
We write $\Nout(v)\subseteq S$ to denote the set of 
out-neighbors of $v$ in $S$.

In Section~\ref{sect:manyexcesscolors} we describe a procedure \dense{} (version 1)
that is efficient when each vertex has many excess colors w.r.t.~$S$.  It is analyzed
in Lemma~\ref{lem:dense-100}, which is then used to prove Lemmas~\ref{lem:color-sm} and \ref{lem:color-top-sm}.
In Section~\ref{sect:noexcesscolors} we describe a procedure \dense{} (version 2),
which is a generalization of 
Harris, Schneider, and Su's~\cite{HarrisSS18} procedure.
It is analyzed in Lemma~\ref{lem:dense-101}, which is then used 
to prove Lemmas~\ref{lem:color-lg} and \ref{lem:color-top-lg}.

\subsection{Version 1 of \dense\ --- Many Excess Colors are Available}\label{sect:manyexcesscolors}

In this section we focus on the case where  each vertex $v \in S$ has many excess colors w.r.t.~$S$.
We make the following assumptions about the vertex set $S$.
\begin{description}
\item[Excess colors.] Each $v \in S$ is associated with a parameter $Z_v$, which indicates a lower bound on the number of excess colors of $v$ w.r.t.~$S$. That is, the palette size of $v$ minus $|N(v) \cap S|$ is at least $Z_{v}$.
\item[External degree.]  For each cluster $S_j$, each vertex $v \in S_j$ is associated with a parameter $D_v$ such that $|\Nout(v) \cap (S \setminus S_j)| \leq D_v$. 
\end{description}
The ratio of these two quantities plays an important role in the analysis.  Define $\delta_v$ as 
\[
\delta_v = D_v / Z_v.
\]
We briefly explain how we choose the clustering  $S = S_1 \cup S_2 \cup \cdots \cup S_g$ and set these parameters in the settings of Lemma~\ref{lem:color-sm} and Lemma~\ref{lem:color-top-sm}.
For Lemma~\ref{lem:color-top-sm}, $S$ is either $V_1^{\SM}$ or $V_1^{\MD}$, and each cluster of $S$ is the intersection of $S$ and an $\epsilon_1$-almost clique (a layer-1 block).
For Lemma~\ref{lem:color-sm}, $S$ is either $V_{2+}^{\SM}$ or $V_{2+}^{\MD}$, and each cluster of $S$ 
is the intersection of $S$ and an $\epsilon_{\ell}$-almost clique.
In all cases, clusters have weak diameter 2.
All vertices in the same layer adopt the same $D$- and $Z$-values. 
A layer-$i$ vertex $v$ takes 
\begin{align*}
Z_{v} &= \frac{\Delta}{2\log (1/\epsilon_i)}, \\
\mbox{ and } D_v &= \epsilon_i\Delta.
\end{align*}
The choices of these parameters are valid in view of the excess colors implied by Lemma~\ref{lem:palette-lb} 
and the external degree upper bound of Lemma~\ref{lem:cluster-property}.

\begin{framed}
\noindent {\bf Procedure} \dense~(version 1).
\begin{enumerate}
\item  Let $\pi : \{1,\ldots,|S_j|\} \rightarrow S_j$ be the \underline{unique} permutation that lists
$S_j$ in increasing order by layer number, breaking ties
(within the same layer) by ID.
For $q$ from $1$ to $|S_j|$, the vertex $\pi(q)$ selects a color $c(\pi(q))$ uniformly at random from
\[
\Psi(\pi(q)) \setminus \left\{c(\pi(q')) \ |\ q'<q \mbox{ and } \{\pi(q),\pi(q')\}\in E(G)\right\}.
\]
 \item Each $v \in S_j$ permanently colors itself $c(v)$ if $c(v)$ is not selected by any vertices in $\Nout(v)$.
\end{enumerate}
\end{framed}

Notice that $\pi$ is a reverse topological ordering of $S_j$, 
i.e., if $\pi(q')$ precedes $\pi(q)$, then 
$\pi(q) \notin \Nout(\pi(q'))$.
Because each $S_j$ has weak diameter 2, Step 1 of $\dense$ can be simulated with only $O(1)$ rounds of communication.
Intuitively, the probability that a vertex $v \in S$ remains uncolored after \dense~(version 1) is at most $\delta_v$,
since it is \emph{guaranteed} not to have any conflicts with neighbors in the same cluster. 
The following lemma gives us the probabilistic guarantee of the \dense~(version 1).

\begin{lemma}\label{lem:dense-100}
Consider an execution of \dense~(version 1). Let $T$ be any subset of $S$, and let $\delta = \max_{v \in T} \delta_v$.
For any $t\ge 1$, the number of uncolored vertices in $T$ is at least $t$ 
with probability at most $\Prob[\operatorname{Binomial}(|T|, \delta) \geq t]$.
\end{lemma}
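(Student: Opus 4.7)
The plan is to establish stochastic domination by sequentially revealing the random colors $c(u)$ in a carefully chosen order, showing that the conditional failure probability is at most $\delta$ at each step, and then invoking the standard Bernoulli-coupling lemma. Let $X_v\in\{0,1\}$ indicate that $v$ remains uncolored after $\dense$. Because the per-cluster ordering $\pi$ of any cluster $S_j$ is a reverse-topological ordering of the acyclic orientation on $S_j$, every in-cluster out-neighbor of $v\in S_j$ precedes $v$ in $\pi$ and its color is explicitly removed from $\Psi(v)$ when $c(v)$ is sampled in Step~1. Hence $v$ can fail only by colliding with an \emph{external} out-neighbor; writing $F_v := \{c(u) : u\in \Nout(v)\setminus S_j\}$ for $v\in S_j$, we have $X_v = 1$ iff $c(v)\in F_v$, and $|F_v|\le D_v$.

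Next I would fix a global revelation order. List $S = (u_1,\ldots,u_{|S|})$ in increasing layer, breaking ties by $\ID$; this is a reverse-topological ordering of the acyclic orientation on $S$, and its restriction to any cluster $S_j$ coincides with the per-cluster ordering $\pi$. Because the random bits used by $\dense$ are independent across clusters, revealing $c(u_1), c(u_2), \ldots$ one at a time in this global order is a faithful implementation of the procedure. Writing $\mathcal{F}_q := \sigma(c(u_1),\ldots,c(u_q))$ and $\tau(v)$ for the position of $v$ in the global order, the crucial bound is that for every $v\in T\cap S_j$,
\[
\Prob\!\left[X_v=1 \;\middle|\; \mathcal{F}_{\tau(v)-1}\right] \;\le\; \delta_v \;\le\; \delta.
\]
Indeed, $\mathcal{F}_{\tau(v)-1}$ determines (i)~the eligible palette $P_v = \Psi(v)\setminus\{c(u) : u\in \Nout(v)\cap S_j\}$, of size at least $|\Psi(v)|-|N(v)\cap S|\ge Z_v$; (ii)~the forbidden set $F_v$, of size at most $D_v$ (every $u\in\Nout(v)$ precedes $v$ in the global order); and (iii)~by the independence of per-cluster randomness, the conditional distribution of $c(v)$ is uniform on $P_v$. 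Hence $\Prob[c(v)\in F_v\mid\mathcal{F}_{\tau(v)-1}]\le |F_v|/|P_v|\le D_v/Z_v = \delta_v$.

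To finish, enumerate $T = (v_1,\ldots,v_{|T|})$ in increasing $\tau$-order and set $\mathcal{G}_i := \mathcal{F}_{\tau(v_i)}$. Since $\mathcal{G}_{i-1}\subseteq \mathcal{F}_{\tau(v_i)-1}$, the tower property yields $\Prob[X_{v_i}=1\mid\mathcal{G}_{i-1}]\le\delta$, with $X_{v_i}$ being $\mathcal{G}_i$-measurable. The standard inductive coupling---use fresh auxiliary randomness to inflate each $X_{v_i}$ to an i.i.d.\ $\operatorname{Bernoulli}(\delta)$ variable $Y_{v_i}$ with $X_{v_i}\le Y_{v_i}$---then delivers $\sum_{v\in T}X_v \preceq \operatorname{Binomial}(|T|,\delta)$, which is exactly the lemma. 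The most delicate step will be~(iii): verifying that conditional on the prefix $\mathcal{F}_{\tau(v)-1}$ of colors (mixing information from $S_j$ and from other clusters), the law of $c(v)$ is \emph{still} uniform over the now-determined set $P_v$. This combines the definition of the per-cluster sequential sampler with the independence of random choices across distinct clusters, and it is what justifies treating the global-order revelation as equivalent to the original parallel-across-clusters procedure.
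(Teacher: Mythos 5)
Your proof is correct and takes essentially the same approach as the paper's: expose colors of $S$ in the global increasing-layer-then-ID order, observe that conditional on the colors revealed so far the failure probability of each $v\in T$ is at most $D_v/Z_v=\delta_v\le\delta$ (since the palette available when $v$ is sampled has size at least $Z_v$ while the set of dangerous colors from external out-neighbors has size at most $D_v$), and conclude stochastic domination by $\operatorname{Binomial}(|T|,\delta)$. Your write-up is somewhat more explicit than the paper's about the filtration $\mathcal{F}_{\tau(v)-1}$, the fact that the restriction of the global order to each cluster reproduces $\pi$, and the cross-cluster independence that keeps the conditional law of $c(v)$ uniform on $P_v$ --- points the paper's proof leaves implicit in the phrase ``independent of the choices made by higher priority vertices.''
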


\begin{proof}
Let $T=\{v_1, \ldots, v_{|T|}\}$ be listed in increasing order by layer number, breaking ties by vertex ID.
Remember that vertices in $T$ can be spread across multiple clusters in $S$.
Imagine exposing the color choices of all vertices in $S$, one by one, in this order $v_1, \ldots, v_{|T|}$.
The vertex $v_k$ in cluster $S_j$ will successfully color itself if it chooses any color not already selected by a vertex in
$\Nout(v_k) \cap (S\setminus S_j)$.  Since $|\Nout(v_k)\cap (S \setminus S_j)| \le D_{v_k}$ and $v_k$ has at least $Z_{v_k}$ colors to choose from at this moment, 
the probability that it fails to be colored is at most $D_{v_k}/Z_{v_k} = \delta_{v_k} \le \delta$,
\emph{independent of the choices made by higher priority vertices $v_1, \ldots, v_{k-1}$}.
Thus, for any $t$, the number of uncolored vertices in $T$ is stochastically dominated by 
the binomial variable $\operatorname{Binomial}(|T|, \delta)$.
\end{proof}

\begin{proof}[Proof of Lemma~\ref{lem:color-sm}]
We execute \dense~(version 1) for 6 iterations, where each
participating vertex $x\in S$ 
uses the same (initial) values of $Z_x$ and $D_x$, 
namely $Z_{x} = \frac{\Delta}{2\log (1/\epsilon_i)}$ and $D_x = \epsilon_i\Delta$ if $x$ is at layer $i$.

Consider any vertex $v \in V^\star$, 
and any layer number $i \in [2,\ell]$.
Let $T$ be the set of layer-$i$ neighbors of $v$ in $S$.
To prove Lemma~\ref{lem:color-sm}, it suffices to show that after 6 iterations of \dense~(version 1), with probability $1 - \exp(-\Omega(\poly(\Delta)))$, the number of uncolored vertices in $T$ is  at most $\epsilon_i^5 \Delta$.

We define the following parameters.
\begin{align*}
\delta              &= \max_{u \in T}\{\delta_u\} \:=\:  2\epsilon_i\log(1/\epsilon_i),\\
t_1                 &= |T|,\\
\mbox{ and } t_k    &= \max\left\{(2\delta)t_{k-1}, \epsilon_i^5 \Delta\right\}.
\end{align*}
Since $(2\delta)^6 |T| \leq \epsilon_i^5 \Delta$, we have $t_7 = \epsilon_i^5 \Delta$.

Assume that at the beginning of the $k$th iteration, 
the number of uncolored vertices in $T$ is at most $t_{k}$.
Indeed for $k = 1$, we initially have $t_1 = |T|$.
By Lemma~\ref{lem:dense-100}, after the $k$th iteration,
the expected number of uncolored vertices in $T$ is at most $\delta t_{k} \leq t_{k+1}/2$.
By a Chernoff bound,
with probability at most
$\exp(-\Omega(t_{k+1})) \leq \exp(-\Omega(\epsilon_i^5 \Delta)) = \exp(-\Omega(\poly(\Delta)))$, the number of uncolored vertices in $T$ is more than $t_{k+1}$.

Therefore, after 6 iterations of \dense~(version 1), with probability $1 - \exp(-\Omega(\poly(\Delta)))$, the number of uncolored vertices in $T$ is  at most $t_7 = \epsilon_i^5 \Delta$, as required.
\end{proof}

\begin{proof}[Proof of Lemma~\ref{lem:color-top-sm}]
In the setting of Lemma~\ref{lem:color-top-sm} we only consider layer-1 vertices, but have the higher burden of coloring \emph{each} vertex with high enough probability. 
Since $\epsilon_1 = \Delta^{-1/10}$, we have
$Z_{v} = \frac{\Delta}{2\log (1/\epsilon_1)}$,  $D_v = \epsilon_1\Delta$, and $\delta_v = D_v / Z_{v} = 2\epsilon_1 \log (1 / \epsilon_1)$, 
for all vertices $v \in S$.

We begin with one iteration of \dense~(version 1).
By Lemma~\ref{lem:dense-100} and a Chernoff bound, for each $v \in S$, the number of uncolored vertices of $N(v) \cap S$ is at most  
$2 \delta_v \Delta = \Delta' < O(\Delta^{9/10}\log\Delta)$ with probability $1 - \exp(-\Omega(\poly(\Delta)))$.
Any uncolored vertex $v \in S$ that violates this property,
i.e., for which $|N(v)\cap S| > \Delta'$, is
added to $V_{\bad}$ and removed from further consideration.

Consider the graph $G'$ induced by the  remaining uncolored vertices in $S$.
The maximum degree of $G'$ is at most $\Delta'$.
Each vertex $v$ in $G'$ satisfies  $|\Psi(v)| \geq Z_{v} = \frac{\Delta}{2\log (1/\epsilon_1)} = (1+\rho)\Delta'$, 
where $\rho$ is $\Delta^{\Omega(1)}$.
We run the algorithm of Lemma~\ref{lem:color-remain-simple} on 
$G'$, and then put all vertices that still remain uncolored to 
the set  $V_{\bad}$.
By Lemma~\ref{lem:color-remain-simple}, the time for this procedure is $O(\log^\ast \Delta - \log^\ast \rho) = O(1)$, and the probability that a vertex $v$ remains uncolored and is added to $V_{\bad}$ is at most $\exp(-\Omega(\sqrt{\rho\Delta})) = \exp(-\Omega(\poly(\Delta)))$.
\end{proof}

\subsection{Version 2 of \dense\ --- No Excess Colors are Available}\label{sect:noexcesscolors}

In this section we focus on the case where there is no guarantee on the 
number of excess colors.
The palette size lower bound of each vertex $v \in S_j$ comes from the assumption
that $|S_j|$ is large, and $v$ is adjacent to all but a very small portion of vertices in $S_j$.
For the case $S = V_{2+}^{\LG}$ (Lemma~\ref{lem:color-lg}), 
each cluster $S_j$ is a large block in some layer $i\in [2,\ell]$.
For the case $S = V_{1}^{\LG}$ (Lemma~\ref{lem:color-top-lg}), 
each $S_j$ is a layer-$1$ large block.
For each $v \in S$, we define $\Nstar(v)$ to be the set of all vertices $u \in N(v) \cap S$ such that the layer number of $u$ 
is smaller than or equal to the layer number of $v$. 
Observe that $\Nout(v) \subseteq \Nstar(v)$ since 
$\Nout(v)$
excludes some vertices at $v$'s layer, 
depending on the ordering of IDs.
For the case of $S = V_{1}^{\LG}$, all clusters  $S_1,\ldots,S_g$  are layer-1 blocks, and so $\Nstar(v) = N(v) \cap S$.
We make the following assumptions.
\begin{description}
  \item[Identifiers.] List the clusters $S_1,\ldots,S_g$ 
  in non-decreasing order by layer number.
  We assume each cluster and each vertex within a cluster has an ID
  that is consistent with this order, in particular:
  \begin{align*}
      &\ID(S_1) < \cdots < \ID(S_g)\\
      &\max_{v\in S_j} \ID(v) < \min_{u\in S_{j+1}} \ID(u), \mbox{ for all $j\in [1,g)$}
  \end{align*}
   Given arbitrary IDs, it is straightforward to compute new IDs satisfying these properties in $O(1)$ time.
   (It is not required that each cluster $S_j$ to know the index $j$.)
  \item[Degree upper bounds.] Each cluster $S_j$ is associated with a parameter $D_j$ such that all $v \in S_j$ satisfy the following two conditions:
  \begin{enumerate}
      \item[(i)] $|S_j \setminus (N(v) \cup \{v\})| = |S_j \setminus (\Nstar(v) \cup \{v\})| \leq D_j$ (anti-degree upper bound),
      \item[(ii)] $|\Nstar(v) \setminus S_j| \leq D_j$ (external degree upper bound).
  \end{enumerate}
  \item[Shrinking rate.]  Each cluster $S_j$ is associated with a parameter  $\delta_j$  such that
  \[
   1/K \geq \delta_j \geq \frac{D_j \log(|S_j|/D_j)}{|S_j|},
  \]
  for some sufficiently large constant $K$.
\end{description}
The procedure
\dense{}~(version 2) aims to successfully color
a large  fraction of the vertices in each cluster $S_j$.
In Step 1, each cluster selects a $(1-\delta_j)$-fraction of its vertices
uniformly at random, permutes them randomly, and marches through
this permutation one vertex at a time.  As in \dense{} (version 1), 
when a vertex $v$ is processed it picks a random color $c(v)$ 
from its available palette that were 
not selected by previously processed vertices in $S_j$.  
Step 2 is the same: if $c(v)$ has not been selected by any neighbors of $\Nout(v)$ it permanently commits to $c(v)$.  There are only two reasons a vertex in $S_j$ may be left uncolored by \dense{} (version 2): 
it is not among the $(1-\delta_j)$-fraction of vertices
participating in Step 1, or it has a color conflict with an external neighbor in Step 2.  The first cause occurs with probability $\delta_j$
and, intuitively, the second cause occurs with probability about $\delta_j$ because vertices typically have \emph{many} options for colors when they are processed but \emph{few} external neighbors that can generate conflicts.  
Lemma~\ref{lem:dense-101} captures this formally; it is the culmination and corollary of Lemmas~\ref{lem:color-prob}--\ref{lem:uncolor1}, which
are proved later in this section.
Lemma~\ref{lem:dense-101} is used to prove Lemmas~\ref{lem:color-lg} and \ref{lem:color-top-lg}.

\begin{framed}
\noindent {\bf Procedure} \dense~(version 2).
\begin{enumerate}
\setlength{\itemsep}{-2pt}

\item  Each cluster $S_{j}$ selects $(1 - \delta_{j})|S_j|$ vertices u.a.r.~and generates
a permutation $\pi$ of those vertices u.a.r.
The vertex $\pi(q)$ selects a color $c(\pi(q))$ u.a.r.~from
\[
\Psi(\pi(q)) -  \left\{c(\pi(q')) \ |\ q'<q \mbox{ and } \{\pi(q),\pi(q')\}\in E(G)\right\}.
\]
 \item Each $v \in S_j$ that has selected a color $c(v)$
 permanently colors itself $c(v)$ if $c(v)$ is not selected by any vertices $u \in \Nout(v)$.
\end{enumerate}
\end{framed}

\medskip 

\begin{lemma}\label{lem:dense-101}
Consider an execution of \dense~(version 2).
Let $T$ be any subset of $S$, and let $\delta = \max_{j: S_j \cap T \neq \emptyset} \delta_j$.
For any number $t$,
the probability that the number of uncolored vertices in $T$ is at least $t$ is at most ${|T| \choose t} \cdot \left( O(\delta)\right)^{t}$.
\end{lemma}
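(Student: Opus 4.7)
The plan is to reduce the tail bound to a bound on the joint probability that a specific $t$-subset $T'\subseteq T$ is entirely uncolored. Since the event ``at least $t$ vertices of $T$ are uncolored'' is the union, over $t$-subsets $T'\subseteq T$, of the event ``every vertex of $T'$ is uncolored,'' a union bound gives
\[
\Pr[\text{at least $t$ vertices of $T$ uncolored}] \;\le\; \binom{|T|}{t}\cdot\max_{\substack{T'\subseteq T\\ |T'|=t}}\Pr\bigl[\text{every } v\in T' \text{ uncolored}\bigr],
\]
so it suffices to show the joint probability is at most $\bigl(O(\delta)\bigr)^{t}$ for an arbitrary fixed $T'$.

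The strategy for bounding the joint probability is a chain-rule argument driven by a per-vertex \emph{atomic bound}, which I expect is precisely what the preceding Lemmas~\ref{lem:color-prob}--\ref{lem:uncolor1} will supply: for any $v\in S_j$, conditional on arbitrary outcomes of (i) all Step~1 samples, permutations, and color selections in clusters processed before $S_j$, and (ii) the sample, permutation, and color choices made by vertices preceding $v$ in $S_j$, the probability that $v$ remains uncolored is $O(\delta_j)$. The two failure modes are: (a) $v$ is not among the $(1-\delta_j)|S_j|$ selected vertices of $S_j$, contributing $\delta_j$; and (b) $v$ is selected, but its random color $c(v)$ collides with a color already held by some vertex in $\Nout(v)$. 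The external-degree bound gives at most $D_j$ external neighbors in $\Nout(v)\setminus S_j$, whose colors are fixed under the conditioning, and one then argues that $c(v)$ is (essentially) uniform over an available sub-palette of size $\Omega(|S_j|/\log(|S_j|/D_j))$, so the collision probability is $O(D_j\log(|S_j|/D_j)/|S_j|)$, which is $O(\delta_j)$ by the shrinking-rate hypothesis. To chain the atomic bound over $T'$, I would order $T'=\{v_1,\ldots,v_t\}$ consistently with the global exposure schedule (first by cluster ID, then by within-cluster permutation position), apply the atomic bound to each $v_i$ conditional on all prior randomness, and multiply the resulting conditional probabilities to obtain $\Pr[\text{every } v\in T'\text{ uncolored}] \le \prod_i O(\delta_{j_i}) \le (O(\delta))^t$.

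The main obstacle will be establishing the atomic bound \emph{robustly}, i.e., uniformly over every conditioning on prior randomness---this is exactly where the random-permutation argument and the anti-degree bound must interact delicately. The anti-degree bound $|S_j\setminus(\Nstar(v)\cup\{v\})|\le D_j$ is what guarantees that $v$ has $|S_j|-1-D_j$ in-cluster neighbors to begin with; the random uniform position of $v$ within the permutation of the selected sample of $S_j$ is what yields, with failure probability absorbed into the $O(\delta_j)$ target, a sub-palette of size $\Omega(|S_j|/\log(|S_j|/D_j))$ available to $v$ at its turn. A subtle point is that when we condition on prior randomness for the chain-rule argument, the sample and permutation in $S_j$ are themselves partially exposed, so the atomic bound must be written to accommodate this: either by fixing the sample and permutation as part of the conditioning and arguing only over the residual randomness in $v$'s color choice, or by revealing $v$'s position only after integrating out the irrelevant data. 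Once these measurability issues are dispatched in Lemma~\ref{lem:color-prob}, the remaining work for Lemma~\ref{lem:dense-101} is the straightforward telescoping and binomial union bound described above.
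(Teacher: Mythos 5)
Your overall outline matches the paper's: reduce the tail bound to a joint-uncolored bound on a $t$-subset via a union bound, expose randomness cluster by cluster, and chain. But the per-vertex ``atomic bound'' at the heart of your chain-rule argument is false as stated, and this is not a measurability technicality; it is a substantive obstruction.

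You claim that conditional on all prior randomness --- including ``the sample, permutation, and color choices made by vertices preceding $v$ in $S_j$'' --- the probability that $v$ remains uncolored is $O(\delta_j)$. Conditioning on the permutation up to $v$'s turn reveals $v$'s rank $p$ in $S_j$'s processing order. When $v$'s turn comes, the available sub-palette has size at least (and typically about) $(M - p) + D_j(\log(|S_j|/D_j) - 1)$, where $M = (1-\delta_j)|S_j|$. If $p$ is close to $M$, this is only $\Theta\bigl(D_j\log(|S_j|/D_j)\bigr)$, so the probability that $v$'s chosen color collides with one of its at most $D_j$ external out-neighbors is as large as $\Theta\bigl(1/\log(|S_j|/D_j)\bigr)$. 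This vastly exceeds $\delta_j = \Theta\bigl(D_j\log(|S_j|/D_j)/|S_j|\bigr)$ whenever $|S_j| \gg D_j$, which is the regime of interest. So the conditional bound you would need is wrong for vertices landing late in the permutation, and naive vertex-by-vertex chaining with full prior conditioning cannot recover $(O(\delta_j))^{|T'\cap S_j|}$.

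What actually makes Lemma~\ref{lem:dense-101} work is a joint bound over all of $T'\cap S_j$ at once (Lemmas~\ref{lem:color-prob} and \ref{lem:uncolor2}), where one keeps the permutation positions random and takes the expectation over them. The crucial computation is
\[
\E_{p_1,\ldots,p_k}\!\left[\prod_{i=1}^k \frac{1}{(M-p_i) + D_j(\log(|S_j|/D_j)-1)}\right] \;=\; \left(O\!\left(\frac{\log(|S_j|/D_j)}{|S_j|}\right)\right)^{\!k},
\]
and this survives a chain over the $p_i$'s (not over the fully revealed permutation) because the conditional average over the remaining ranks is still a harmonic sum of the right magnitude. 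Your second suggested workaround (``revealing $v$'s position only after integrating out the irrelevant data'') points in this direction, but it is not an afterthought to be dispatched later: it is the entire argument, and adopting it means abandoning the per-vertex atomic bound in favor of the joint-over-cluster bound. The paper then chains over clusters (not vertices), using the fact that the joint decoloring probability for $T'\cap S_j$ is bounded even against adversarial color choices in earlier clusters, and separates the two failure modes into disjoint subsets $U_1$ (did not participate) and $U_2$ (decolored), absorbing the resulting $2^t$ factor into the $O(\delta)^t$.
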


Our assumption about the identifiers of clusters and vertices guarantees that for each $v \in S_j$, we have $\Nout(v) \subseteq \bigcup_{i=1}^{j} S_i$. Therefore, in the proof of Lemma~\ref{lem:dense-101}, we  expose the random bits of the clusters in the order $(S_1, \ldots, S_g)$. 
Once the random bits of $S_1, \ldots, S_j$ are revealed, we can determine whether any particular $v \in S_j$ successfully colors itself.

Our proofs of Lemmas~\ref{lem:color-lg} and \ref{lem:color-top-lg} are based on a constant number of iterations of \dense~(version 2).
In each iteration, 
the parameters $D_j$ and $\delta_j$ might be different.
In subsequent discussion, the term {\em anti-degree} of $v \in S_j$ refers to the number of uncolored vertices in $S_j \setminus (N(v) \cup \{v\})$, and the term {\em external degree} of $v \in S_j$ refers to the number of uncolored vertices in  $\Nstar(v) \setminus S_j$.
Suppose $S_j$ is a layer-$i$ large block. The parameters for $S_j$ in each iteration are as follows.
Let $\beta > 0$ be a sufficiently large constant to be determined.
\begin{description}
 \item[Degree upper bounds.] By Lemma~\ref{lem:cluster-property}, 
 $D_j^{(1)} = 3 \epsilon_i \Delta$ upper bounds the initial anti-degree and external degree. For $k>1$, the parameter $D_j^{(k)}$ is chosen such that  $D_j^{(k)} \geq \beta \delta_j^{(k-1)}  \cdot D_j^{(k-1)}$. We write $\mathcal{D}_j^{(k)}$ to denote the invariant that at the \emph{beginning} of the $k$th iteration, $D_j^{(k)}$ is an upper bound on the anti-degree and external degree of all uncolored vertices in $S_j\setminus V_{\bad}$.
 \item[Cluster size upper bounds.] By Lemma~\ref{lem:cluster-property},
 $U_j^{(1)} = (1 + 3 \epsilon_i)\Delta$ is an upper bound on the initial cluster size. For $k>1$, the parameter  $U_j^{(k)}$ is chosen such that $U_j^{(k)} \leq  \beta \delta_j^{(k-1)}  \cdot U_j^{(k-1)}$. We write $\mathcal{U}_j^{(k)}$ to denote the invariant that at the beginning of the $k$th iteration, the number of uncolored vertices in $S_j\setminus V_{\bad}$ is at most $U_j^{(k)}$.
 \item[Cluster size lower bounds.] $L_j^{(1)} = \frac{\Delta}{\log (1/\epsilon_i)}$. For $k>1$, the parameter $L_j^{(k)}$  is chosen such that  $L_j^{(k)} \geq  \delta_j^{(k-1)}  \cdot L_j^{(k-1)}$.  We write $\mathcal{L}_j^{(k)}$ to denote the invariant that at the beginning of the $k$th iteration, the number of uncolored vertices in $S_j \setminus V_{\bad}$ is at least $L_j^{(k)}$.  By the definition of \emph{large} blocks,
 $\mathcal{L}_j^{(1)}$ holds initially.
 \item[Shrinking rates.] For each $k$, the shrinking rate  $\delta_j^{(k)}$ of  cluster $S_j$ for the $k$th iteration is chosen such that
 \[
 1/K \geq \delta_j^{(k)} \geq \frac{D_j^{(k)} \log{(L_j^{(k)} / D_j^{(k)})}}{L_j^{(k)}}.
 \]
 Additionally, we require that 
 $\delta_1^{(k)} \leq \cdots \leq \delta_g^{(k)}$,
 with 
 $\delta_j^{(k)} = \delta_{j+1}^{(k)}$ if $S_j$ and $S_{j+1}$ are in the same layer.
\end{description}

Although the initial values of $D_j^{(1)}, U_j^{(1)}, L_j^{(1)}$ are determined, there is considerable freedom in choosing the remaining
values to satisfy the four rules above.  
We refer to the following equations involving 
$D_j^{(k)}, U_j^{(k)}, L_j^{(k)}$, and $\delta_j^{(k)}$
as the \emph{default settings} of these parameters. 
Unless stated otherwise, the proofs of Lemmas~\ref{lem:color-lg} and \ref{lem:color-top-lg} use the default settings.

\begin{align*}
D_j^{(k)} &= \beta \delta_j^{(k-1)}  \cdot D_j^{(k-1)}, &&& 
U_j^{(k)} &= \beta \delta_j^{(k-1)}  \cdot U_j^{(k-1)}, \\
L_j^{(k)} &=  \delta_j^{(k-1)}  \cdot L_j^{(k-1)},  &&&
\delta_j^{(k)} &= \frac{D_j^{(k)} \log{\left(L_j^{(k)} / D_j^{(k)}\right)}}{L_j^{(k)}}.
\end{align*}

\paragraph{Validity of Parameters.}
Before the first iteration the invariants 
$\mathcal{D}_j^{(1)}$,  $\mathcal{U}_j^{(1)}$, and  $\mathcal{L}_j^{(1)}$ are met initially, for each cluster $S_j$.
Suppose $S_j$ is a layer-$i$ large block.
Lemma~\ref{lem:cluster-property} shows that the initial value
of $D_j^{(1)}$ is a valid upper bound on the 
external degree (at most $\epsilon_i\Delta$) 
and anti-degree (at most $3\epsilon_i\Delta$).
We also have 
\[
U_j^{(1)} = (1 + 3 \epsilon_i)\Delta \geq
|S_j| \geq \frac{\Delta}{\log (1/\epsilon_i)} = L_j^{(1)},
\]
where the lower bound is from the definition of \emph{large}
and the upper bound is from Lemma~\ref{lem:cluster-property}.

For $k > 1$, the invariants $\mathcal{D}_j^{(k)}$ and  $\mathcal{U}_j^{(k)}$ might not hold naturally.
Before the $k$th iteration begins
we \emph{forcibly} restore them
by removing from consideration
all vertices in the clusters that violate either invariant, 
putting these vertices in $V_{\bad}$.
Notice that 
\dense\ (version 2) \emph{always} satisfies invariant $\mathcal{L}_j^{(k)}$.

\paragraph{Maintenance of Invariants.}
We calculate the probability for the invariants  
$\mathcal{D}_j^{(k+1)}$ and  $\mathcal{U}_j^{(k+1)}$ 
to naturally hold at a cluster $S_j$.
In what follows, we analyze the $k$th iteration of the algorithm, and assume that $\mathcal{D}_j^{(k)}$ and  $\mathcal{U}_j^{(k)}$ hold initially.
Let $T \subseteq S$ be a set of vertices that are uncolored at the beginning of the $k$th iteration, and
suppose $\delta_j^{(k)} = \max_{j': S_{j'} \cap T \neq \emptyset} \delta_{j'}^{(k)}$.
By Lemma~\ref{lem:dense-101}, after the $k$th iteration, the probability that the number of uncolored vertices in $T$ is at least
$t$ is at most ${|T| \choose t} \cdot \left( O(\delta_j^{(k)})\right)^{t}$.
Using this result, we derive the following bounds:
\begin{align*}
\Prob\mathopen{}\left[\mathcal{U}_j^{(k+1)}\mathclose{}\right] &\geq 1 - \exp\left(-\Omega(U_j^{(k+1)})\right),\\
\Prob\mathopen{}\left[\mathcal{D}_j^{(k+1)}\mathclose{}\right] &\geq 1 - O\mathopen{}\left(U_j^{(k)}\mathclose{}\right) \exp\left(-\Omega(D_j^{(k+1)})\right).
\end{align*}
We first consider $\Prob\mathopen{}\left[\mathcal{U}_j^{(k+1)}\mathclose{}\right]$.
We choose $T$ as the set of uncolored vertices in $S_j\setminus V_{\bad}$ at the beginning of the $k$th iteration, and set $t = U_j^{(k+1)}$.
We have $t = U_j^{(k+1)} = \beta \delta_j^{(k)}  \cdot U_j^{(k)} \geq \beta \delta_j^{(k)} |T|$, and this implies
$\delta_j^{(k)} |T| / t \leq 1/\beta$.
If we select $\beta$ to be a large enough constant, then
\[
1 - \Prob\mathopen{}\left[\mathcal{U}_j^{(k+1)}\mathclose{}\right] 
\leq {|T| \choose t} \cdot \left( O(\delta_j^{(k)})\right)^{t}
\leq \left( O\mathopen{}\left(\delta_j^{(k)}\mathclose{}\right) \cdot e |T| / t\right)^{t}
\leq \left( O(1/\beta) \right)^{t}
= \exp\left(-\Omega\mathopen{}\left(U_j^{(k+1)}\mathclose{}\right)\right).
\]
Next, consider $\Prob\mathopen{}\left[\mathcal{D}_j^{(k+1)}\mathclose{}\right]$.
For each vertex $v \in S_j\setminus V_{\bad}$ that is uncolored at the beginning of the $k$th iteration, define $\mathcal{E}_v^a$ (resp., $\mathcal{E}_v^e$) as the event that the anti-degree (resp., external degree) of $v$ at the end of the $k$th iteration is higher than $D_j^{(k+1)}$.
If we can show that both $\Prob[\mathcal{E}_v^a]$ and $\Prob\mathopen{}\left[\mathcal{E}_v^e\mathclose{}\right]$ are at most $\exp\left(-\Omega\mathopen{}\left(D_j^{(k+1)}\mathclose{}\right)\right)$, then we conclude $\Prob[\mathcal{D}_j^{(k+1)}] \geq 1 - O\mathopen{}\left(U_j^{(k)}\mathclose{}\right) \exp\left(-\Omega(D_j^{(k+1)})\right)$ by a union bound over at most $U_j^{(k)}$ vertices $v \in S_j\setminus V_{\bad}$ that are uncolored at the beginning of the $k$th iteration.

We show that $\Prob[\mathcal{E}_v^e] \leq \exp\left(-\Omega\mathopen{}\left(D_j^{(k+1)}\mathclose{}\right)\right)$.
We choose $T$ as the set of uncolored vertices in  
$\Nstar(v) \setminus (S_j\cup V_{\bad})$ 
at the beginning of the $k$th iteration, and set $t = D_j^{(k+1)}$.
Since the layer number of each vertex in $\Nstar(v) \setminus (S_j\cup V_{\bad})$ is smaller than or equal to the layer number of $S_j$, our requirement about the shrinking rates implies that $\delta_j^{(k)}  \geq \max_{j': S_{j'} \cap T \neq \emptyset} \delta_{j'}^{(k)}$.

We have $t = D_j^{(k+1)} = \beta \delta_j^{(k)}  \cdot D_j^{(k)} \geq \beta \delta_j^{(k)} |T|$, and this implies
$\delta_j^{(k)} |T| / t \leq 1/\beta$.
If we select $\beta$ to be a large enough constant, then
\[
\Prob[\mathcal{E}_v^e] \leq {|T| \choose t} \cdot \left( O\mathopen{}\left(\delta_j^{(k)}\mathclose{}\right)\right)^{t}
\leq \left( O\mathopen{}\left(\delta_j^{(k)}\mathclose{}\right) \cdot e |T| / t\right)^{t}
\leq \left( O(1/\beta) \right)^{t}
 = \exp\left(-\Omega\mathopen{}\left(D_j^{(k+1)}\mathclose{}\right)\right).
\]
The bound $\Prob[\mathcal{E}_v^a] \leq \exp\left(-\Omega\mathopen{}\left(D_j^{(k+1)}\mathclose{}\right)\right)$ is proved in the same way.
Based on the probability calculations above, we are now prepared to
prove Lemmas~\ref{lem:color-lg} and \ref{lem:color-top-lg}.

\medskip 

\begin{proof}[Proof of Lemma~\ref{lem:color-lg}]
We perform $6$ iterations of \dense\ (version 2) using the default settings of all parameters.
Recall that the shrinking rate for the $k$th iteration is 
$\delta_j^{(k)} = \frac{D_j^{(k)} \log\left(L_j^{(k)} / D_j^{(k)}\right)}{L_j^{(k)}}$ for each cluster $S_j$.
If $S_j$ is a layer-$i$ block, we have 
$\delta_j^{(k)} = O\mathopen{}\left(\epsilon_i \log^2 (1/\epsilon_i)\mathclose{}\right)$ 
for each $k\in [1,6]$ since $D_j^{(\cdot)}$ and $L_j^{(\cdot)}$ 
decay at the same rate, asymptotically.

Consider any vertex $v \in V^\star$, and a layer number $i\in[2,\ell]$.
Let $T$ be the set of layer-$i$ neighbors of $v$ in $S$.
To prove Lemma~\ref{lem:color-lg}, it suffices to show that after 6 iterations of \dense~(version 2), with probability $1 - \exp(-\Omega(\poly(\Delta)))$, the number of uncolored vertices in 
$T$ is at most $\epsilon_i^5 \Delta$.

Define $(t_k)$ as in the proof of Lemma~\ref{lem:color-sm}.
\begin{align*}
    t_1 &= |T|,\\
\mbox{ and }\,  t_k &= \max\left\{\beta \delta_j^{(k-1)} t_{k-1},\; \epsilon_i^5 \Delta\right\}.
\end{align*}
Here $\delta_j^{(k)}$ is the common shrinking rate of any layer-$i$
cluster $S_j$.  We have $t_7 = \epsilon_i^5 \Delta$ since 
$\epsilon_i \leq \epsilon_\ell$ is sufficiently small.

Assume that at the beginning of the $k$th iteration, the number of uncolored vertices in $T\setminus V_{\bad}$ is at most $t_k$, and the invariants $\mathcal{D}_j^{(k)}$, $\mathcal{L}_j^{(k)}$, and  $\mathcal{U}_j^{(k)}$ are met for each cluster $S_j$ such that $S_j \cap T \neq \emptyset$.
By Lemma~\ref{lem:dense-101}, after the $k$th iteration, the 
probability that the number of uncolored vertices in $T\setminus V_{\bad}$ is more than $t_{k+1}$ is
\[
{t_{k} \choose t_{k+1}} \cdot \left( O\mathopen{}\left(\delta_j^{(k)}\mathclose{}\right)\right)^{t_{k+1}}
\leq \left( O\mathopen{}\left(\delta_j^{(k)}\mathclose{}\right) \cdot e t_{k} / t_{k+1}\right)^{t_{k+1}}
\leq \left( O(1/\beta) \right)^{t_{k+1}}
 = \exp(-\Omega(t_{k+1})).
\]
Notice that $\exp(-\Omega(t_{k+1})) \leq \exp(-\Omega(\epsilon_i^5 \Delta)) = \exp(-\Omega(\poly(\Delta)))$.
For the maintenance of the invariants, 
$\mathcal{L}_j^{(k+1)}$ holds with probability 1; 
the probability that the invariants $\mathcal{D}_j^{(k+1)}$ and  $\mathcal{U}_j^{(k+1)}$ are met for all clusters $S_j$ such that $S_j \cap T \neq \emptyset$ is at least $1 - O(|T|) \exp(-\Omega(\poly(\Delta))) = 1 - \exp(-\Omega(\poly(\Delta)))$.
By a union bound over all six iterations, with probability $1 - \exp(-\Omega(\poly(\Delta)))$, the number of uncolored layer-$i$ neighbors of $v$ in $S\setminus V_{\bad}$ is at most $t_7 = \epsilon_i^5 \Delta$.
\end{proof}

\medskip

\begin{proof}[Proof of Lemma~\ref{lem:color-top-lg}]
In the setting of Lemma~\ref{lem:color-top-lg}, we deal with only layer-1 large blocks, and so
$D_1^{(k)} = \cdots = D_g^{(k)}$, $U_1^{(k)} = \cdots = U_g^{(k)}$,
$L_1^{(k)} = \cdots = L_g^{(k)}$, $\delta_1^{(k)} = \cdots = \delta_g^{(k)}$, for each iteration $k$.
For this reason we drop the subscripts.
Our algorithm consists of three phases, as follows. Recall that $c$ is a large enough constant related to the failure probability specified in the statement of Lemma~\ref{lem:color-top-lg}.

\paragraph{The Low Degree Case.} 
    The following algorithm and analysis apply to all values of $\Delta$.  The conclusion
    is that we can color most of $V_1^{\LG}$ such that the probability 
    that any vertex joins $V_{\bad}$ is $\Delta^{-\Omega(c)}$ and all remaining
    uncolored vertices (i.e., $R$) induce a graph with maximum degree $O(c^2)$.  Since the guarantee on $V_{\bad}$ is that it induces components with size $\poly(\Delta)\log n$, this analysis is only appropriate when $\Delta$ is, itself, $\poly\log n$.
    We deal with larger $\Delta$ in \emph{The High Degree Case} and prove that
    the uncolored vertices can be partitioned into $R$ and $X$ with the same guarantee on $R$, and the stronger guarantee that $X$ induces $\poly\log n$-size components, regardless of $\Delta$.

\begin{description}
\item[Phase 1.] The first phase consists of 9 iterations of \dense\ (version 2), using 
the default settings of all 
parameters.
Due to the fact that $\epsilon_1 = \Delta^{-1/10}$, we have 
$\delta^{(k)} = O(\Delta^{-1/10} \log^2 \Delta)$ for each $k\in [1,9]$. 
Therefore, at the end of the 9th iteration, we have the parameters
\begin{align*}
D^{(10)} &= \Theta(\log^{18} \Delta),\\
L^{(10)} &= \Theta(\Delta^{1/10} \log^{17} \Delta),\\ 
\mbox{ and } U^{(10)} &= \Theta(\Delta^{1/10} \log^{18} \Delta).
\end{align*}
In view of the previous calculations,
the probability that all invariants hold
for a specific cluster $S_j$ and all $k\in[1,10]$
is at least $1 - \exp(-\Omega(\log^{18} \Delta))$. 
If a cluster $S_j$ does not satisfy an invariant for some $k$, 
then \underline{\emph{all}} 
vertices in $S_j$ halt and join $V_{\bad}$.
They do not participate in the $k$th iteration
or subsequent steps.
\item[Phase 2.] For the 10th iteration, 
we switch to a \emph{non}-default shrinking rate
\begin{align*}
\delta^{(10)} &= \Delta^{-1/20}.
\intertext{However, we still define}
U^{(11)} &=  \beta \delta^{(10)}  \cdot U^{(10)} = \Theta(\Delta^{1/20} \log^{18} \Delta)\\
\mbox{ and }L^{(11)} &=   \delta^{(10)}  \cdot L^{(10)} = \Theta({\Delta^{1/20}} \log^{17} \Delta)
\intertext{
according to their default setting. 
    Since $\beta \delta^{(10)}  \cdot D^{(10)} = o(1)$, we should not adopt the default definition of $D^{(11)}$. Instead, we fix it
    to be the sufficiently large constant $c$.}
D^{(11)} &= c. 
\end{align*}
Using the previous probability calculations, for each cluster $S_j$ the invariant $\mathcal{U}^{(11)}$ holds with probability at least $1 - \exp(-\Omega({\Delta^{1/20}} \poly \log \Delta))$, and the invariant  $\mathcal{L}^{(11)}$ holds with certainty. We will show that for a given cluster $S_j$, the probability that $D^{(11)}$ is a valid degree bound (i.e.,  $\mathcal{D}^{(11)}$ holds) is at least $1 - \Delta^{-\Omega(c)}$. If a cluster $S_j$ does not meet at least one of $\mathcal{U}^{(11)}$, $\mathcal{L}^{(11)}$, or $\mathcal{D}^{(11)}$, then all vertices in $S_j$ halt and join $V_{\bad}$.
\item[Phase 3.] For the 11th iteration, we use the default shrinking rate
\[
\delta^{(11)} = \frac{D^{(11)} \log(L^{(11)} / D^{(11)})}{L^{(11)}} = \Theta\mathopen{}\left(\frac{1}{{\Delta^{1/20}} \log^{16} \Delta}\mathclose{}\right). 
\]
We will show that after the 11th iteration, for each cluster $S_j$, with probability at least $1 - \Delta^{-\Omega(c)}$, there are at most $c^2$ uncolored vertices $v \in S_j$ such that there is at least one uncolored vertex in $\Nout(v) \setminus S_j$. If $S_j$ does not satisfy this property, we put all remaining uncolored vertices in $S_j$ to $V_{\bad}$. For each cluster $S_j$ satisfying this property, in $O(1)$ additional rounds we color all vertices in $S_j$ but $c^2$ of them since at most $c^2$ have potential conflicts outside of $S_j$. 
At this point, the remaining uncolored vertices 
$R$ induce a subgraph of maximum degree at most 
$c^2 + D^{(10)} = c^2 + c = O(c^2)$.
\end{description}

The choice of parameters are summarized as follows. Note that we use the default shrinking rate $\delta^{(i)} = \frac{D^{(i)} \log(L^{(i)} / D^{(i)})}{L^{(i)}}$ for all $i$ except $i = 10$.

\medskip

\begin{center}
\begin{tabular}{ l | l | l | l | l }
   & $D^{(i)}$ & $L^{(i)}$ & $U^{(i)}$ & $\delta^{(i)}$\\\hline\hline
  $i \in [9]$ & $\Theta\mathopen{}\left(\Delta^{\frac{10-i}{10}}\log^{2i-2}\Delta\mathclose{}\right)$ & 
  $\Theta\mathopen{}\left(\Delta^{\frac{11-i}{10}}\log^{2i-3}\Delta\mathclose{}\right)$ & 
  $\Theta\mathopen{}\left(\Delta^{\frac{11-i}{10}}\log^{2i-2}\Delta\mathclose{}\right)$ & 
  $\Theta\mathopen{}\left(\Delta^{-\frac{1}{10}}\log^2 \Delta\mathclose{}\right)$ \istrut[3]{5}\\\hline
  $i = 10$ & $\Theta\mathopen{}\left(\log^{18}\Delta\mathclose{}\right)$ & 
  $\Theta\mathopen{}\left(\Delta^{\frac{1}{10}}\log^{17}\Delta\mathclose{}\right)$ & 
  $\Theta\mathopen{}\left(\Delta^{\frac{1}{10}}\log^{18}\Delta\mathclose{}\right)$ & 
  $\Delta^{-\frac{1}{20}}$ \istrut[3]{5}\\\hline
  $i = 11$  & 
  $c$ & 
  $\Theta\mathopen{}\left(\Delta^{\frac{1}{20}}\log^{17}\Delta\mathclose{}\right)$ & 
  $\Theta\mathopen{}\left(\Delta^{\frac{1}{20}}\log^{18}\Delta\mathclose{}\right)$ & 
  $\Theta\mathopen{}\left(\Delta^{-\frac{1}{20}}\log^{-16} \Delta\mathclose{}\right)$ \istrut[3]{5}\\\hline\hline
\end{tabular}
\end{center}

\paragraph{Analysis of Phase 2.}
Recall $\delta^{(10)} = {\Delta^{-1/20}}$ 
and $D^{(10)} = \Theta(\log^{18} \Delta)$.
By Lemma~\ref{lem:dense-101}, the probability that
the external degree or anti-degree of $v \in S_j$ is at most $c$ is:
\[
1 - {D^{(10)} \choose c} \left(O\mathopen{}\left(\delta^{(10)}\mathclose{}\right)\right)^c \geq
 1 - {{O\mathopen{}\left(\log^{18} \Delta\mathclose{}\right)} \choose c} \left(O\mathopen{}\left(\Delta^{-1/20}\mathclose{}\right)\right)^c \geq
   1 - \Delta^{-\Omega(c)}.
\]
By a union bound over at most $U^{(10)} = \Theta(\Delta^{1/10} \log^{18} \Delta)$ vertices $v \in S_j$ that are uncolored at the beginning of the 10th iteration, the parameter setting $D^{(11)} = c$ is a valid upper bound of  external degree and anti-degree for $S_j$ after the 
10th iteration with probability at least $1 - \Delta^{-\Omega(c)}$.

\paragraph{Analysis of Phase 3.}
Consider a vertex $v \in S_j$ that is uncolored at the beginning of the 11th iteration.
Define the event $\mathcal{E}_v$ as follows. The event $\mathcal{E}_v$ occurs if, 
after the 11th iteration, $v$ is still uncolored, 
and there is at least one uncolored vertex in $\Nout(v) \setminus (S_j\cup V_{\bad})$.
Our goal is to show that the number of vertices $v \in S_j$ such that  $\mathcal{E}_v$ occurs is at most $c^2$ with probability at least
$1 - \Delta^{-\Omega(c)}$.

Consider any size-$c^2$ subset $Y$ of $S_j$.
As a consequence of Lemma~\ref{lem:dense-101}, 
we argue that the probability that 
$\mathcal{E}_v$ occurs for all $v \in Y$ 
is at most
\[
\left(D^{(11)}\right)^{c^2}  \cdot \left( O\mathopen{}\left(\delta^{(11)}\mathclose{}\right)\right)^{c^2\left(1+ 1/D^{(11)}\right)}.
\]
The reason is as follows.
Pick some $v \in Y$.
If $\mathcal{E}_v$ occurs, then there must exist a neighbor 
$v' \in \Nout(v) \setminus (S_j\cup V_{\bad})$ that is uncolored.
The number of uncolored vertices in $\Nout(v) \setminus (S_j\cup V_{\bad})$ 
at the beginning of the 11th iteration is at most $D^{(11)}$,
so there are at most $(D^{(11)})^{c^2}$ ways of mapping each 
$v \in Y$ to a vertex $v' \in \Nout(v) \setminus (S_j\cup V_{\bad})$ of $v$.
Let $T = \bigcup_{v \in Y} \{v, v'\}$.
A vertex outside of $S_j$ can be adjacent to at most 
$D^{(11)}$ vertices in $S_j$, and so 
$|T| \geq c^2(1 + 1/D^{(11)})$.
The probability that all vertices in $T$ are uncolored is 
$\left( O(\delta^{(11)})\right)^{c^2\left(1+ 1/D^{(11)}\right)}$ by Lemma~\ref{lem:dense-101}.
By a union bound over at most $\left(D^{(11)}\right)^{c^2}$ choices of $T$, we obtain the desired probabilistic bound.

Recall that  
$U^{(11)} =  \Theta(\Delta^{1/20} \log^{18} \Delta) 
= L^{(11)} \cdot \Theta(\log \Delta)$ and 
$L^{(11)} = \Theta({\Delta^{1/20}} \log^{17} \Delta)$
are the cluster size upper bound and lower bound at the beginning of the 11th iteration.
By a union bound over at most $\left(U^{(11)}\right)^{c^2}$ choices of a size-$c^2$ subset of $S_j$,
the probability $f$ that there exists $c^2$ vertices $v \in S_j$ such that  $\mathcal{E}_v$ occurs is 
\[
f = \left(U^{(11)}\right)^{c^2} \cdot \left(D^{(11)}\right)^{c^2} \cdot \left( O\mathopen{}\left(\delta^{(11)}\mathclose{}\right)\right)^{c^2\left(1+ 1/D^{(11)}\right)}.
\]
Recall that $D^{(11)} = c$ is sufficiently large. We have 
\begin{align}
\left(U^{(11)}\right)^{c^2} &=  \left( O\mathopen{}\left(L^{(11)} \log \Delta\mathclose{}\right)\right)^{c^2}, 
\label{eqn:f1}\\
\left(D^{(11)}\right)^{c^2} &= O(1), \label{eqn:f2}\\
\left( O\mathopen{}\left(\delta^{(11)}\mathclose{}\right)\right)^{c^2\left(1+ 1/D^{(11)}\right)} &= 
 \left( O\mathopen{}\left(\fr{\log (L^{(11)})}{L^{(11)}}\mathclose{}\right)\right)^{c^2 + c}.\label{eqn:f3}
\end{align}
where $L^{(11)} = \Theta({\Delta^{1/20}} \log^{8} \Delta)$.
Taking the product of (\ref{eqn:f1}), (\ref{eqn:f2}), and (\ref{eqn:f3}), we have:
\[
f = \Theta(\log\Delta)^{O(c^2)} \cdot \Theta\mathopen{}\left(\Delta^{-1/20}\mathclose{}\right)^c = \Delta^{-\Omega(c)},
\]
as required.

\begin{remark}
The analysis of Phase 2 would proceed in the same way if we
had chosen $\delta^{(10)}$ according to its default setting of $\Theta(\Delta^{-1/10}\log^2\Delta)$.  We choose a larger 
value of $\delta^{(10)}$ in order to keep $L^{(11)}$ artificially
large ($\Delta^{\Omega(1)}$), and thereby allow Phase 3 to fail
with smaller probability $\Delta^{-\Omega(c)}$.
\end{remark}

\paragraph{The High Degree Case.} The Low Degree Case handles all $\Delta$ that
are $\poly\log n$.  We now assume $\Delta$ is sufficiently large, i.e., $\Delta > \log^{\alpha c} n$, where $\alpha$ is some large universal constant, and we want to design an algorithm such that no 
vertex joins $V_{\bad}$, and all uncolored vertices are partitioned into $R$ and $X$,
with $R$ having the same $O(c^2)$-degree guarantee as before, and the components
induced by $X$ have size $\log^{O(c)} n = \poly\log n$, regardless of $\Delta$.
Intuitively, the proof follows the same lines as the Low Degree Case, 
but in Phase 1 we first reduce the maximum degree to $\Delta' = \log^{O(c)} n$
then put any bad vertices that fail to satisfy an invariant into $X$ (rather than $V_{\bad}$).  According to the shattering lemma 
(Lemma~\ref{lem:shatter}), the components induced by $X$ have size
$\poly(\Delta')\log n = \log^{O(c)} n$. 
The High Degree Case consists of 13 iterations of \dense{} (version 2)
with the following parameter settings.

\begin{center}
\begin{tabular}{ l | l | l | l | l }
   & $D^{(i)}$ & $L^{(i)}$ & $U^{(i)}$ & $\delta^{(i)}$\\\hline\hline
  $i \in [9]$ & $\Theta\mathopen{}\left(\Delta^{\frac{10-i}{10}}\log^{2i-2}\Delta\mathclose{}\right)$ & 
  $\Theta\mathopen{}\left(\Delta^{\frac{11-i}{10}}\log^{2i-3}\Delta\mathclose{}\right)$ & 
  $\Theta\mathopen{}\left(\Delta^{\frac{11-i}{10}}\log^{2i-2}\Delta\mathclose{}\right)$ & 
  $\Theta\mathopen{}\left(\Delta^{-\frac{1}{10}}\log^2 \Delta\mathclose{}\right)$ \istrut[3]{5}\\\hline
  $i = 10$ & $\Theta\mathopen{}\left( \max\{\log^{18} \Delta, \log n\}\mathclose{}\right)$ & 
  $\Theta\mathopen{}\left(\Delta^{\frac{1}{10}}\log^{17}\Delta\mathclose{}\right)$ & 
  $\Theta\mathopen{}\left(\Delta^{\frac{1}{10}}\log^{18}\Delta\mathclose{}\right)$ & 
  $\Delta^{-\frac{1}{20}}\log^{-18}\Delta$ \istrut[3]{5}\\\hline
  $i = 11$ & $\Theta\mathopen{}\left(\log n\mathclose{}\right)$ & 
  $\Theta\mathopen{}\left(\Delta^{\frac{1}{20}} / \log \Delta\mathclose{}\right)$ & 
  $\Theta\mathopen{}\left(\Delta^{\frac{1}{20}}\mathclose{}\right)$ & 
  $\Delta^{-\frac{1}{20}} \log^{5c} n$ \istrut[3]{5}\\\hline
  $i = 12$ & $\Theta\mathopen{}\left(\log n\mathclose{}\right)$ & 
  $\displaystyle\Theta\mathopen{}\left(\frac{\log^{5c} n}{\log \Delta}\mathclose{}\right)$ & 
  $\Theta\mathopen{}\left(\log^{5c} n\mathclose{}\right)$ & 
  ${\log^{-3c}n}$ \istrut[4]{6}\\\hline
  $i = 13$  & 
  $c$ & 
  $\displaystyle\Theta\mathopen{}\left(\frac{\log^{2c} n}{\log \Delta}\mathclose{}\right)$ & 
  $\Theta\mathopen{}\left(\log^{2c} n\mathclose{}\right)$ & 
  $\displaystyle\Theta\mathopen{}\left(\frac{\log \Delta\log\log n}{\log^{2c} n} \mathclose{}\right)$ \istrut[4]{6}\\\hline\hline
\end{tabular}
\end{center}

We use the default shrinking rate $\delta^{(i)} = \frac{D^{(i)} \log(L^{(i)} / D^{(i)})}{L^{(i)}}$ for all $i$ except $i \in \{10, 11, 12\}$. 
Phase~1 consists of all iterations $i \in [11]$; 
Phase~2 consists of iteration $i = 12$; 
Phase~3 consists of iteration $i = 13$. 
The algorithm and the analysis are similar to the small degree case, so in subsequent discussion we only point out the differences.
In order to have all $\delta^{(i)} \ll 1$, 
we need to have $\Delta^{1/20} \gg \log^{5c} n$. 
We proceed under the assumption that 
$\Delta > \log^{\alpha c} n$ ($\alpha$ is some large universal constant), so this condition is met.

\paragraph{Phase~1.}   In view of previous calculations, 
all invariants hold for a cluster $S_j$ ($\mathcal{U}^{(i)}$, $\mathcal{L}^{(i)}$, and $\mathcal{D}^{(i)}$, 
for $i\in[1,12]$) with probability at least 
$1 - \exp(-\Omega(\log n)) = 1 - 1/\poly(n)$, 
since all parameters $D^{(i)}$, $L^{(i)}$, and $U^{(i)}$  
are chosen to be $\Omega( \log n )$. Therefore, 
no cluster  $S_j$  is put in $V_{\bad}$ 
due to an invariant violation, w.h.p.

\paragraph{Phase~2.} 
Consider iteration $i = 12$. It is straightforward that the invariants $\mathcal{U}^{(13)}$ and $\mathcal{L}^{(13)}$ hold w.h.p., since  $L^{(13)} = \Omega(\log n)$ and $U^{(13)} = \Omega(\log n)$.
Now we consider the invariant $\mathcal{D}^{(13)}$. By Lemma~\ref{lem:dense-101}, the probability that
the external degree or anti-degree of $v \in S_j$ is at most $c$ is:
\[
1 - {D^{(12)} \choose c} \left(O\mathopen{}\left(\delta^{(12)}\mathclose{}\right)\right)^c \geq
 1 - {O(\log n) \choose c} \left(O\mathopen{}\left(\log^{-3c} n\mathclose{}\right)\right)^c \geq
   1 - \left(\log n \right)^{-\Omega(c^2)}.
\]
This failure probability is \emph{not} small enough to guarantee that $\mathcal{D}^{(13)}$ holds everywhere w.h.p. 
In the high degree case, 
if a vertex $v$ belongs to a cluster $S_j$ such that  
$\mathcal{D}^{(13)}$ does not hold, we add
the remaining uncolored vertices
in $S_j$ (at most $U^{(12)}=O(\log^{5c} n)$ of them) to $X$.

\paragraph{Phase~3.} 
Similarly, we will show that after the 13th iteration, for each cluster $S_j$, with probability at least $1 -  \left(\log n \right)^{-\Omega(c^2)}$, there are at most $c^2$ uncolored vertices $v \in S_j$ such that there is at least one uncolored vertex in $\Nout(v) \setminus (S_j\cup X)$. If $S_j$ does not satisfy this property, we put all remaining uncolored vertices in $S_j$ to $X$. For each cluster $S_j$ satisfying this property, in one additional round we can color all vertices in $S_j$ but $c^2$ of them. At this point, the remaining uncolored vertices induce a subgraph $R$ of maximum degree at most $c^2 + D^{(13)} = c^2 + c = O(c^2)$.
Following the analysis in the small degree case, the probability that a vertex $v$ is added to $X$ in the $13$th iteration is 
\begin{align*}
    f &= \left(U^{(13)}\right)^{c^2} \cdot \left(D^{(13)}\right)^{c^2} \cdot \left( O\mathopen{}\left(\delta^{(13)}\mathclose{}\right)\right)^{c^2\left(1+  1/D^{(13)}\right)}\\
    &= O\mathopen{}\left(\log^{2c} n\mathclose{}\right)^{c^2} \cdot O(1) \cdot O\mathopen{}\left(\frac{\log \Delta\log\log n}{\log^{2c} n}\mathclose{}\right)^{c^2+c}\\
    &= O\mathopen{}\left((\log n)^{-2c^2} \cdot (\log \Delta\log\log n)^{c^2+c}\mathclose{}\right) \\
    &= \left(\log n \right)^{-\Omega(c^2)}.
\end{align*}

\paragraph{Size of Components in $X$.} To bound the size of each connected component of $X$, we use the shattering lemma (Lemma~\ref{lem:shatter}).
Define $G'=(V', E')$ as follows. The vertex set $V'$ consists of all vertices in $S$ that remains uncolored at the beginning of iteration $12$. Two vertices $u$ and $v$ are linked by an edge in $E'$ if (i) $u$ and $v$ belong to the same cluster, or (ii) $u$ and $v$ are adjacent in the original graph $G$.
It is clear that the maximum degree $\Delta'$ of $G'$ is  
$U^{(12)}+ D^{(12)} = O(\log^{5c} n)$.  
In view of the above analysis, the probability of $v \in X$ is $1 -  \left(\log n \right)^{-\Omega(c^2)} =  1 - \left(\Delta'\right)^{-\Omega(c)}$, and this is true even if the random bits outside of a constant-radius neighborhood of $v$ in $G'$ are determined adversarially. Applying Lemma~\ref{lem:shatter} to the graph $G'$, the size of each connected component of $X$ is $O(\poly(\Delta')\log n) =  \log^{O(c)}n$, w.h.p., both in $G'$ and in the original graph $G$, since $G'$ is the result of adding some additional edges to the subgraph of $G$ induced by $V'$.
\end{proof}

\medskip

The reader may recall that the proofs of Lemmas~\ref{lem:color-lg} and \ref{lem:color-top-lg} were based on the veracity of 
Lemma~\ref{lem:dense-101}.  
The remainder of this section is devoted to proving Lemma~\ref{lem:dense-101}, which bounds the probability
that a certain number of vertices remain uncolored by
\dense{} (version 2).  By inspection of the \dense{} (version 2) pseudocode, a vertex in $S_j$ can remain uncolored for two different reasons:
\begin{itemize}
\item it never selects a color, because it is not among the $(1 - \delta_{j})|S_j|$ participating vertices in Step 1, or
\item it selects a color in Step 1, but is later decolored in Step 2 
because of a conflict with some
vertex in $S_{j'}$ with $j' < j$.
\end{itemize}

Lemmas~\ref{lem:color-prob}--\ref{lem:uncolor1} analyze different properties of \dense\ (version 2),
which are then applied to prove Lemma~\ref{lem:dense-101}.
Throughout we make use of the property that every 
$\delta_j < 1/K$ for some sufficiently large $K$.

\begin{lemma}\label{lem:color-prob}
Let $T=\{v_1, \ldots, v_k\}$ be any subset of $S_j$ and
$c_1, \ldots, c_k$ be any sequence of colors.
The probability that $v_i$ selects $c_i$ in \dense\ (version 2), for all $i\in[1,k]$,
is $\left( O\mathopen{}\left(\frac{\log (|S_j|/D_j)}{|S_j|}\mathclose{}\right)\right)^{|T|}$.
\end{lemma}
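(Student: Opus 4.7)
The plan is to model the randomness of \dense{} (version 2) inside $S_j$ as a single uniformly random permutation $\sigma$ of $S_j$, where the \emph{participating} vertices in processing order are $\sigma(1),\ldots,\sigma(M)$ with $M=(1-\delta_j)|S_j|$. For every $v_i\in T$ to select $c_i$, each $v_i$ must occupy some position $p_i\in[1,M]$ of $\sigma$ and, when processed, sample $c_i$ from its remaining palette. I would decompose the target probability by summing over ordered assignments of $v_1,\ldots,v_k$ to distinct positions of $[1,M]$:
\[
\Prob[\forall i,\ v_i\text{ picks }c_i]=\sum_{\substack{p_1,\ldots,p_k\in[1,M]\\ \text{distinct}}}\Prob[\sigma^{-1}(v_i)=p_i\ \forall i]\cdot\Prob[\forall i,\ v_i\text{ picks }c_i\mid \text{positions}].
\]
The first factor equals $(|S_j|-k)!/|S_j|!$.

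For the conditional second factor I would expose the color choices in position order. When $v_i$ is processed, at most $p_i-1$ earlier neighbors could have removed colors from $\Psi(v_i)$; the $(\Delta+1)$-list coloring invariant combined with the anti-degree hypothesis $|S_j\setminus(N(v_i)\cup\{v_i\})|\leq D_j$ yields $|\Psi(v_i)|\geq |N(v_i)\cap S_j|+1\geq |S_j|-D_j$. Hence the available palette at $v_i$'s turn has at least $|S_j|-D_j-p_i+1$ colors, \emph{independently of the actual earlier choices}, so
\[
\Prob[\forall i,\ v_i\text{ picks }c_i\mid\text{positions}]\leq\prod_{i=1}^{k}\frac{1}{|S_j|-D_j-p_i+1}.
\]

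The next step is to relax the distinctness constraint on the $p_i$'s and use the product structure to bound the positional sum by a $k$-th power of the one-dimensional harmonic sum $\sum_{p=1}^{M}1/(|S_j|-D_j-p+1)=\sum_{r=\delta_j|S_j|-D_j+1}^{|S_j|-D_j}1/r$. The hypothesis $\delta_j\geq D_j\log(|S_j|/D_j)/|S_j|$ is calibrated exactly to force the tail index to remain $\Omega(D_j\log(|S_j|/D_j))$, keeping this harmonic sum at $O(\log(|S_j|/D_j))$. Combined with the uniform-in-$k$ factorial bound $(|S_j|-k)!/|S_j|!\leq (e/|S_j|)^k$ (from $\binom{|S_j|}{k}\geq(|S_j|/k)^k$ together with $k!\geq(k/e)^k$, valid for all $k\in[0,|S_j|]$), this yields the claimed $(O(\log(|S_j|/D_j)/|S_j|))^{|T|}$ bound.

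The main obstacle I expect is maintaining a useful palette lower bound uniformly across all positions. The naive inequality $|\Psi(v_i)|-(p_i-1)$ works only because the anti-degree hypothesis guarantees a large starting palette of size $|S_j|-D_j$, and because the shrinking-rate hypothesis ensures that even the worst-case position $p_i=M$ still leaves $\Omega(D_j\log(|S_j|/D_j))$ residual colors. Without the latter the harmonic tail could blow up near the end of the permutation, and the per-position probabilities $1/(|S_j|-D_j-p_i+1)$ would fail to average down to the target $O(\log(|S_j|/D_j)/|S_j|)$.
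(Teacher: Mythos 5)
Your proposal is correct and follows a route that shares the paper's core calculation but packages it differently. Both proofs rest on the same engine: model the in-cluster randomness by a uniform permutation, observe that when $v_i$ lands at rank $p_i$ its residual palette has at least $|S_j|-D_j-p_i+1$ colors (the anti-degree hypothesis gives $|\Psi(v_i)| \geq |S_j|-D_j$, and at most $p_i-1$ earlier selections can erode it), and control the resulting product with the harmonic sum $\sum_{p\le M} 1/(|S_j|-D_j-p+1) = O(\log(|S_j|/D_j))$, which is finite precisely because the shrinking-rate hypothesis keeps the lower index at $\Omega(D_j\log(|S_j|/D_j))$. Where you differ is in how the rank randomness is aggregated. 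The paper first conditions on all $v_i$'s participating, then splits on $k\geq M/2$ versus $k<M/2$: for large $k$ it bounds the product by $1/k!$ via the fact that a product of $k$ distinct positive integers is at least $k!$, and for small $k$ it exposes the ranks sequentially and bounds each conditional expectation by an averaged harmonic sum. You instead avoid the case split entirely by decomposing over rank tuples, pulling out the exact uniform-permutation tuple probability $(|S_j|-k)!/|S_j|!$, relaxing the distinctness constraint on the positions (valid since all summands are nonnegative), and absorbing the resulting slack with the uniform estimate $(|S_j|-k)!/|S_j|! \leq (e/|S_j|)^k$, proved from $\binom{|S_j|}{k}\geq(|S_j|/k)^k$ and $k!\geq(k/e)^k$. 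This yields a single unified computation; it is slightly lossier than the paper for $k$ near $M$ (the paper's $1/k!$ trick yields $(O(1/|S_j|))^k$ without the logarithmic factor in that regime), but both bounds land comfortably inside the claimed $(O(\log(|S_j|/D_j)/|S_j|))^{|T|}$. The chaining step in your conditional factor — exposing colors in position order and using the worst-case palette bound — is also what makes the paper's rank-wise exposure work; you have stated it at the right level of care.
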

\begin{proof}
Let $p^\star$ be the probability that, for all $i\in [1,k]$, $v_i$ selects $c_i$.
Let $M = (1 - \delta_{j})|S_j|$ be the number of participating vertices in Step 1.
Notice that if $v_i$ is not among the participating vertices, 
then $v_i$ will not select any color, and thus cannot select $c_i$.
Since we are upper bounding $p^\star$, it is harmless to condition on the event that $v_i$ is a participating vertex.
We write $p_i$ to denote the rank of $v_i \in T$ in the random permutation of $S_j$.

Suppose that the ranks $p_1, \ldots, p_k$ were fixed. Recall that each vertex $v_i \in S_j$ is adjacent to all but at most $D_j$ vertices in $S_j$.
Thus, at the time $v_i$ is considered it must have at least
\begin{align*}
\lefteqn{M - p_i + \delta_{j}|S_j| - D_j}\\
&\geq M - p_i + D_j \log(|S_j|/D_j) - D_j   & \mbox{(constraint on $\delta_j$)}\\
&= (M - p_i) + D_j(\log(|S_j|/D_j) - 1)
\end{align*}
available colors to choose from,
at most one of which is $c_i$.
Thus,
\[
p^\star \leq \mathop{\mathrm{E}}_{p_1, \ldots, p_k}
\left[\prod_{i=1}^k \frac{1}{(M - p_i) + D_j(\log(|S_j|/D_j) - 1)}\right].
\]
We divide the analysis into two cases: (i) $k \geq M/2$ and (ii) $k < M/2$. 
For the case $k \geq M/2$, regardless of the choices of $p_1, \ldots, p_k$, we always have
\[
\prod_{i=1}^k \frac{1}{(M - p_i) + D_j(\log(|S_j|/D_j) - 1)} \leq \frac{1}{k!} 
= \left(O(1/k)\right)^{k}
 \leq  \left(O(1/|S_j|)\right)^{|T|}.
\]

We now turn to the case $k < M/2$.
We imagine choosing the rank vector $(p_1,\ldots,p_k)$ one element at a time.
Regardless of the values of $(p_1, \ldots, p_{i-1})$, we always have
\begin{align*}
\lefteqn{\Expect\left[\frac{1}{((M - p_i) + D_j(\log(|S_j|/D_j) - 1)} \;\bigg|\; p_1,\ldots,p_{i-1}\right]}\\
&\leq \frac{1}{M - (i-1)}\sum_{x=0}^{M-i} \frac{1}{x + D_j(\log(|S_j|/D_j) - 1)},
\intertext{since there are $M-(i-1)$ choices for $p_i$ and the worst case
is when $\{p_1,\ldots,p_{i-1}\}=\{1,\ldots,i-1\}$.  Observe that the terms
in the sum are strictly decreasing,
which means the average is maximized when $i=k<M/2$ is maximized.  Continuing,}
&\leq \frac{1}{M/2} \sum_{x=0}^{M/2} \frac{1}{x + D_j(\log(|S_j|/D_j) - 1)}
\intertext{The sum is the difference between two harmonic sums, hence}
&= O\paren{\frac{1}{M}\cdot\big(\log M - \log (D_j(\log(|S_j|/D_j) - 1))\big)} \\
&= O\paren{\frac{\log (|S_j|/D_j)}{|S_j|}}, 
& \mbox{since $M=\Theta(|S_j|)$.}
\end{align*}
Therefore, regardless of $k$, 
$p^\star \leq \left( O\mathopen{}\left(\frac{\log (|S_j|/D_j)}{|S_j|}\mathclose{}\right)\right)^{|T|}$, 
as claimed.
\end{proof}

\begin{lemma}\label{lem:uncolor2}
Let $T$ be any subset of $S_j$.
The probability that all vertices in $T$ are decolored in \dense{} (version 2) is $\left( O\mathopen{}\left(\frac{D_j \log (|S_j|/D_j)}{|S_j|}\mathclose{}\right)\right)^{|T|}$, even allowing the colors selected in $S_1, \ldots, S_{j-1}$ to be determined adversarially.
\end{lemma}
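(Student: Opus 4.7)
The plan is to reduce the event ``all of $T$ is decolored'' to a bounded union over the adversary's color choices and then invoke Lemma~\ref{lem:color-prob}. First I would observe that a participating vertex $v\in S_j$ cannot be decolored by any $u\in \Nout(v)\cap S_j$: Step~1 of \dense\ (version 2) is a sequential greedy procedure that, for any two adjacent participating vertices in $S_j$, forces the later one to avoid the earlier one's color. Hence every color conflict for $v$ in Step~2 must come from an external out-neighbor, and by the hypothesis we have $|\Nout(v)\setminus S_j|\leq |\Nstar(v)\setminus S_j|\leq D_j$. By the identifier convention all of these external out-neighbors lie in $S_1\cup\cdots\cup S_{j-1}$, which is exactly what the adversary controls.

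Now fix any adversarial choice of the colors assigned to vertices of $S_1,\ldots,S_{j-1}$. For each $v\in T$ this determines a \emph{bad set} $A_v$ of at most $D_j$ colors, namely the set of colors that the adversary has placed on $\Nout(v)\setminus S_j$. Under this fixed adversarial configuration, the event that every $v\in T$ is decolored is precisely the event that every $v\in T$ participates in Step~1 and selects some $c(v)\in A_v$. Therefore
\[
\Prob[\text{all $v\in T$ decolored}] \;\le\; \sum_{(c_v)_{v\in T}:\, c_v\in A_v} \Prob\!\left[\text{$v$ selects $c_v$ for every $v\in T$}\right].
\]

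Each summand is bounded using Lemma~\ref{lem:color-prob} by $\left(O\!\left(\frac{\log(|S_j|/D_j)}{|S_j|}\right)\right)^{|T|}$, and the number of summands is $\prod_{v\in T}|A_v|\leq D_j^{|T|}$. Multiplying yields
\[
\Prob[\text{all $v\in T$ decolored}] \;\le\; D_j^{|T|}\cdot\left(O\!\left(\frac{\log(|S_j|/D_j)}{|S_j|}\right)\right)^{|T|} \;=\; \left(O\!\left(\frac{D_j\log(|S_j|/D_j)}{|S_j|}\right)\right)^{|T|},
\]
which is the stated bound. Since the computation held for an arbitrary but fixed adversarial coloring of $S_1,\ldots,S_{j-1}$, it holds uniformly over all such adversarial strategies.

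The only delicate step is the ``no internal conflict'' claim: it rests on the fact that Step~1 processes the participating vertices of $S_j$ in a single random order and, for each new vertex, excludes colors of already-processed \emph{adjacent} vertices in $S_j$. Given this, the remainder of the proof is a transparent two-level union bound (over adversarial bad sets, then over specific color tuples) matched against the joint selection probability from Lemma~\ref{lem:color-prob}.
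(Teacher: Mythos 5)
Your proposal is correct and follows essentially the same argument as the paper's proof: a union bound over the at most $D_j^{|T|}$ color assignments to $T$ that could cause decoloring, each bounded by Lemma~\ref{lem:color-prob}, giving $D_j^{|T|}\cdot\bigl(O(\log(|S_j|/D_j)/|S_j|)\bigr)^{|T|}$. You helpfully make explicit a step the paper leaves implicit — that Step~1's sequential greedy processing of the random permutation guarantees no two participating vertices inside $S_j$ ever pick the same color, so every decoloring event must be caused by an \emph{external} out-neighbor, which is why the bad set $A_v$ has size at most $D_j$ — but this is a clarification of the same argument, not a different route.
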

\begin{proof}
There are in total at most $D_j^{|T|}$ different color assignments to $T$ that can result in decoloring all vertices in $T$, since each vertex $v \in T \subseteq S_j$ satisfies $|\Nout(v) \setminus S_j| \leq |\Nstar(v) \setminus S_j| \leq D_j$.
By Lemma~\ref{lem:color-prob} (and a union bound over $D_j^{|T|}$ color assignments to $T$) the probability that all vertices in $T$ are 
decolored is 
$D_j^{|T|} \cdot \left( O\mathopen{}\left(\frac{\log (|S_j|/D_j)}{|S_j|}\mathclose{}\right)\right)^{|T|} = \left( O\mathopen{}\left(\frac{D_j \log (|S_j|/D_j)}{|S_j|}\mathclose{}\right)\right)^{|T|}$.
Recall that for each  $v \in T \subseteq S_j$, we have $\Nout(v) \setminus S_j \subseteq \bigcup_{k=1}^{j-1} S_k$, and so whether $v$ is decolored is independent of the random bits in $S_{j+1}, \ldots, S_{g}$. The above analysis (which is based on Lemma~\ref{lem:color-prob}) holds even allowing the colors selected in $S_1, \ldots, S_{j-1}$ to be determined adversarially.
\end{proof}

\begin{lemma}\label{lem:uncolor1}
Let $T$ be any subset of $S_j$.
The probability that all vertices in $T$ do not select a color in Step 1 of \dense{} (version 2) is  $\left( O(\delta_j)\right)^{|T|}$.
The probability only depends on the random bits within $S_j$.
\end{lemma}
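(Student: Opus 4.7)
The plan is to reduce Lemma~\ref{lem:uncolor1} to an elementary calculation about sampling a uniform random subset of $S_j$ without replacement. In Step~1 of \dense{} (version~2), the complement $Q_j = S_j \setminus P_j$ of the participating set $P_j$ is, by construction, a uniform random subset of $S_j$ of size exactly $\delta_j|S_j|$, and a vertex of $S_j$ fails to select a color in Step~1 if and only if it lies in $Q_j$. The selection of $Q_j$ uses only random bits local to cluster $S_j$, so the final sentence of the lemma is immediate.

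It therefore suffices to upper bound $\Prob[T \subseteq Q_j]$. If $|T| > \delta_j|S_j|$ this probability is $0$ and there is nothing to prove. Otherwise, exposing the elements of $Q_j$ one at a time without replacement yields
\[
\Prob[T \subseteq Q_j] \;=\; \prod_{i=0}^{|T|-1} \frac{\delta_j|S_j|-i}{|S_j|-i}.
\]
Each factor is bounded by $\delta_j$, since $\delta_j(|S_j|-i) - (\delta_j|S_j|-i) = (1-\delta_j)\,i \ge 0$. Hence $\Prob[T \subseteq Q_j] \le \delta_j^{|T|}$, which is certainly $(O(\delta_j))^{|T|}$ as claimed.

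There is essentially no technical obstacle to this argument; the only minor care needed is the $|T| \le \delta_j|S_j|$ case distinction handled above, and the observation that (as in Lemma~\ref{lem:color-prob}) the sampling in Step~1 is independent of the random permutations and color choices used in Step~2, as well as of the random bits of any other cluster $S_{j'}$ with $j' \ne j$.
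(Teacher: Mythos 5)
Your proof is correct and takes essentially the same approach as the paper: the paper simply asserts that the non-participation events are negatively correlated (a standard fact about sampling without replacement) and concludes $\Prob[T\subseteq Q_j]\le\delta_j^{|T|}$, whereas you prove that inequality directly by writing out the hypergeometric product $\prod_{i=0}^{|T|-1}\frac{\delta_j|S_j|-i}{|S_j|-i}$ and bounding each factor by $\delta_j$. Your version is more explicit and self-contained, but it is the same underlying argument.
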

\begin{proof}
The lemma follows from the fact that in \dense{} (version 2) a vertex $v \in S_j$ does not participate in Step 1 with probability $\delta_j$,
and the events for two vertices $u,v\in S_j$ to not participate in Step 1 
are negatively correlated.
\end{proof}

\begin{proof}[Proof of Lemma~\ref{lem:dense-101}]
Recall that we assume the clusters $S = \{S_1, \ldots, S_g\}$ 
are ordered in such a way that for any $u \in S_j$, 
we have $\Nout(u) \subseteq \Nstar(u) \subseteq \bigcup_{k=1}^{j} S_k$.
In the proof we expose the random bits of the clusters in
the order $(S_1, \ldots, S_g)$.

Consider any subset $T \subseteq S$.
Let $U = U_1 \cup U_2$ be a size-$t$ subset $U \subseteq T$.
We calculate the probability that all vertices in $U_1$ do not participate in Step 1, and all vertices in $U_2$ are decolored in Step 2.
Notice that there are at most $2^t$ ways of partitioning $U$ into $U_1 \cup U_2$.

We write $U_1^{(j)} = U_1 \cap S_j$.
Whether a vertex $v \in U_1^{(j)}$ fails to select a color only depends on the random bits in $S_j$.
Thus, by Lemma~\ref{lem:uncolor1},
the probability that all vertices in $U_1$ fail to select a color is at most
$\prod_{j=1}^k \left( O(\delta_j)\right)^{\left|U_1^{(j)}\right|} \leq \left( O(\delta)\right)^{\left|U_1\right|}$.
Recall $\delta = \max_{j: S_j \cap T \neq \emptyset} \delta_j$.

We write $U_2^{(j)} = U_2 \cap S_j$.
Whether a vertex $v \in U_2^{(j)}$ is decolored only depends the random bits in $S_1, \ldots, S_j$.
However, regardless of the  random bits in $S_1, \ldots, S_{j-1}$, the probability that all vertices in $U_2^{(j)}$ are decolored is $ \left( O(\delta_j)\right)^{\left|U_1^{(j)}\right|}$ by Lemma~\ref{lem:uncolor2}.
Recall $\delta \geq \delta_j \geq \frac{D_j \log (|S_j|/D_j)}{|S_j|}$.
Thus, the probability that all vertices in $U_2$ are decolored is at most
$\prod_{j=1}^k \left( O(\delta_j)\right)^{\left|U_2^{(j)}\right|} \leq \left( O(\delta)\right)^{|U_2|}$.

Therefore, by a union bound over at most $|T| \choose t$ choices of $U$
and at most $2^t$ ways of partitioning $U$ into $U_1 \cup U_2$, the probability that the number of uncolored vertices in $T$ is at least $t$ is at most
$2^t \cdot {|T| \choose t} \cdot \left( O(\delta)\right)^{t} =  {|T| \choose t} \cdot \left( O(\delta)\right)^{t}$.
This concludes the analysis of \dense{} (version 2).  
\end{proof}

\section{Conclusion}\label{sect:conclusion}

We have presented a randomized
$(\Delta+1)$-list coloring algorithm that requires 
$O(\Detd(\polylog n))$ rounds
of communication, which is \emph{syntactically} close to the 
$\Omega(\Det(\polylog n))$ lower bound implied
by Chang, Kopelowitz, and Pettie~\cite{ChangKP19}. 
Recall that
$\Det$ and $\Detd$ are the deterministic complexities of $(\Delta+1)$-list coloring and $(\deg+1)$-list coloring.
When $\Delta$ is unbounded (relative to $n$), the best known algorithms for $(\Delta+1)$- and $(\deg+1)$-list coloring are the same:
they use Panconesi and Srinivasan's~\cite{PanconesiS96} $2^{O(\sqrt{\log n})}$-time construction of network decompositions.
Even if optimal $(O(\log n), O(\log n))$-network decompositions could be computed \emph{for free}, we still do not know
how to solve $(\Delta+1)$-list coloring faster than $O(\log^2 n)$ time.  Thus, reducing the $\Detd(\polylog n)$ term in our running
time below $O((\log\log n)^2)$ will require a radically new approach to the problem.

It is an open problem to generalize our algorithm to solve the 
$(\deg+1)$-list coloring problem, and here it may be useful to think about
a problem of intermediate difficulty, at least conceptually.  
Define $(\deg+1)$-coloring to be the coloring problem
when $v$'s palette is $\{1,\ldots,\deg(v)+1\}$ (rather than an arbitrary 
set of $\deg(v)+1$ colors).\footnote{We are aware of one application~\cite{AmirKKNP16} in distributed computing where the palettes are fixed in this way.}  
Whether the problem is $(\deg+1)$-coloring
or $(\deg+1)$-\emph{list} coloring, the difficulty is generalizing 
the notion of ``$\epsilon$-friend edge'' and ``$\epsilon$-sparse vertex''
to graphs with irregular degrees.  
See Figure~\ref{fig:degplusone} for an extreme example illustrating the 
difficulty of $(\deg+1)$-list coloring.
Suppose $N(v)$ is partitioned into
sets $S_1,S_2$ with $|S_1|=|S_2|=|N(v)|/2 = s$. 
The graphs induced by $S_1\cup\{v\}$ and $S_2\cup\{v\}$ are $(s+1)$-cliques
and there are no edges joining $S_1$ and $S_2$.  
The palettes of vertices in $S_1$ 
and $S_2$ are, respectively, $[1,s+1]$ and $[s+1,2s+1]$.

\begin{figure}[h]
\begin{center}
\includegraphics[width=.37\linewidth]{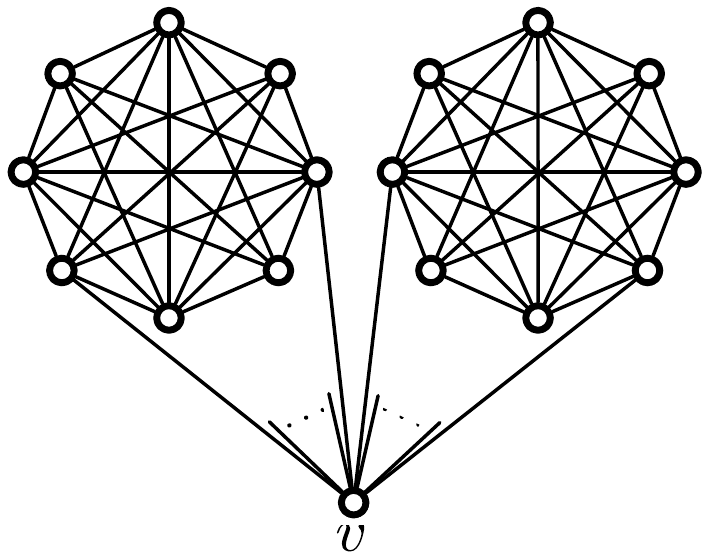}
\end{center}
\caption{An example illustrating the difficulty of $(\deg + 1)$-list coloring.}
\label{fig:degplusone}
\end{figure}


Notice that $v$ is $\epsilon$-sparse according to our definition 
(for any $\epsilon < 1/2$)
and yet regardless of how we design the initial coloring step,
we cannot hope to create more than \emph{one} excess color at $v$
since the two palettes $[1,s+1]\cap [s+1,2s+1]=\{s+1\}$ 
only intersect at one color. 
Thus, it must be wrong to classify $v$
as ``$\epsilon$-sparse'' since it does not satisfy
key properties of $\epsilon$-sparse vertices.
On the other hand, if $v$ is to be classified
as ``$\epsilon$-dense'' then it is not clear whether we can recover any of
the useful properties of $\epsilon$-dense vertices from Lemma~\ref{lem:cluster-property}, e.g., 
that they form almost cliques with $O(1)$ weak diameter and have external degrees bounded by $O(\epsilon\Delta)$.
This particular issue does not arise
in instances of the $(\deg+1)$-coloring problem, 
which suggests that attacking this problem may be a useful 
conceptual stepping stone on the way to solving
$(\deg+1)$-\emph{list} coloring.

\bibliographystyle{abbrv}
\bibliography{references}


\appendix

\section{Concentration Bounds}\label{sect:tools}
We make use of some standard tail bounds~\cite{DubhashiPanconesi09}.
Let $X$ be binomially distributed with parameters $(n,p)$, i.e., it is the sum of $n$ independent 0-1 variables with mean $p$.
We have the following bound on the lower tail of $X$:
\[
\Prob[X \leq t] \leq \exp\left(\frac{-(\mu - t)^2}{2\mu}\right), \; \; \; \text{where } t < \mu = np.
\]

Multiplicative Chernoff bounds give the following tail bounds of $X$ with mean $\mu = np$.
\begin{align*}
\Prob[X \geq (1+\delta)\mu] &\leq
\exp\left(\frac{- \delta^2 \mu}{3}\right)    & & & & \text{if } \delta \in [0,1]\\
\Prob[X \geq (1+\delta)\mu] &\leq \exp\left(\frac{- \delta \mu}{3}\right)    & & & & \text{if } \delta > 1\\
\Prob[X \leq (1-\delta)\mu] &\leq \exp\left(\frac{- \delta^2 \mu}{2}\right)    &  & & & \text{if } \delta \in [0,1]
\end{align*}
Note that Chernoff bounds hold even when $X$ is the summation of $n$ \emph{negatively correlated}
0-1 random variables~\cite{DubhashiR98,DubhashiPanconesi09} with mean $p$,
i.e., total independent is not required.
The bounds for $\Prob[X \geq (1+\delta)\mu]$ also hold when $\mu > np$ is an overestimate of $\Expect[X]$. Similarly, the bound for $\Prob[X \leq (1+\delta)\mu]$ also holds when $\mu <  np$ is an underestimate of $\Expect[X]$.


\medskip

Consider the scenario where $X = \sum_{i=1}^n X_i$, and each $X_i$ is an independent random variable bounded by the interval $[a_i, b_i]$.
Let $\mu = \Expect[X]$.
Hoeffding's inequality~\cite{Hoeffding63}
states that
\[
\Prob[X \geq (1+\delta)\mu] \leq  \exp\left(\frac{-2(\delta \mu)^2}{\sum_{i=1}^n (b_i - a_i)^2}\right).
\]

\section{Proof of Lemma~\ref{lem:initial-color}}\label{sect:oneshot-detail}

In this section, we prove Lemma~\ref{lem:initial-color}.
Fix a constant parameter $p \in (0,1/4)$.
The procedure \oneshot\ is a simple $O(1)$-round coloring procedure that breaks ties by ID.
We orient each edge $\{u,v\}$ towards the endpoint with lower ID, that is, $\Nout(v) = \{u\in N(v) \;|\; \ID(u) < \ID(v)\}$. We assume that each vertex $v$ is associated with a palette $\Psi(v)$ of size $\Delta + 1$, and this is used implicitly in the proofs of the lemmas in this section. 

\begin{framed}
\noindent {\bf Procedure} \oneshot.
\begin{enumerate}
\item Each uncolored vertex $v$ decides to participate independently with probability $p$.
\item Each participating vertex $v$ selects a color $c(v)$ from its palette $\Psi(v)$ uniformly at random.
\item A participating vertex $v$ successfully colors itself if $c(v)$ is not chosen by any vertex in $\Nout(v)$.
\end{enumerate}
\end{framed}

After \oneshot, each vertex $v$ removes all colors from $\Psi(v)$ that are taken by some neighbor $u \in N(v)$.
The number of {\em excess colors} at $v$ is the size of $v$'s remaining palette minus the number of uncolored neighbors of $v$.
We prove one part of Lemma~\ref{lem:initial-color} by showing that after a call to \oneshot, the number of excess colors at
any $\epsilon$-sparse $v$ is $\Omega(\epsilon^2\Delta)$, with probability $1-\exp(-\Omega(\epsilon^2 \Delta))$.
The rest of this section constitutes a proof of Lemma~\ref{lem:initial-color}.

Consider an execution of \oneshot\ with any constant $p\in(0,1/4)$.
Recall that we assume $1/\epsilon \geq K$, for some large enough constant $K$.
Let $v$ be an $\epsilon$-sparse vertex. Define the following two numbers.
\begin{description}
\item $f_1(v)\, :$  the number of vertices $u \in N(v)$ that successfully color themselves by some $c \notin \Psi(v)$.
\item $f_2(v)\, :$  the number of colors $c \in \Psi(v)$ such that at least two vertices in $N(v)$ successfully color themselves  $c$.
\end{description}
It is clear that $f_1(v) + f_2(v)$ is a lower bound on the number of excess colors at $v$ after \oneshot.
Our first goal is to show that $f_1(v) + f_2(v) = \Omega(\epsilon^2 \Delta)$ with probability at least
$1 - \exp(-\Omega(\epsilon^2 \Delta))$.
We divide the analysis into two cases (Lemma~\ref{lem:part1} and Lemma~\ref{lem:part2}), depending on whether
$f_1(v)$ or $f_2(v)$ is likely to be the dominant term.
For any $v$, the preconditions of either Lemma~\ref{lem:part1} or Lemma~\ref{lem:part2} are satisfied.
Our second goal is to show that for each vertex $v$ of degree at least $(5/6)\Delta$, with
high probability, at least $(1 - 1.5p)|N(v)| > (1 - (1.5)/4)\cdot (5/6) \Delta > \Delta/2$ neighbors of $v$ remain uncolored after after \oneshot.
This is done in Lemma~\ref{lem:num-uncolored}.

Lemmas~\ref{lem:aux} and \ref{lem:aux2} establish some generally useful facts about \oneshot,
which are used in the proofs of Lemma~\ref{lem:part1} and \ref{lem:part2}.

\begin{lemma}\label{lem:aux}
Let $Q$ be any set of colors, and let $S$ be any set of vertices with size at most $2\Delta$.
The number of colors in $Q$ that are selected in Step 2
of \oneshot{} by some vertices in $S$ is less than $|Q|/2$ with probability at least $1 - \exp(-\Omega(|Q|))$.
\end{lemma}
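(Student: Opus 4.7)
The plan is to bound the number of distinct colors in $Q$ selected by vertices in $S$ by the (larger) count of vertices in $S$ that happen to select a color in $Q$, and then apply a standard Chernoff bound using the slack afforded by $p<1/4$.

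Concretely, for each vertex $v\in S$ let $Z_v\in\{0,1\}$ be the indicator of the event that $v$ participates in Step~1 of \oneshot{} \emph{and} the color $c(v)$ it picks in Step~2 lies in $Q$. Let $Y$ be the number of distinct colors in $Q$ that are selected by some vertex of $S$. Then trivially
\[
Y \;\le\; \sum_{v\in S} Z_v,
\]
since each contributing color must have at least one vertex in $S$ choosing it. The key structural observation is that the variables $\{Z_v\}_{v\in S}$ depend only on the private randomness of the individual vertex $v$, hence they are mutually \emph{independent}. For each $v\in S$,
\[
\Prob[Z_v=1] \;=\; p\cdot \frac{|\Psi(v)\cap Q|}{|\Psi(v)|} \;\le\; \frac{p\,|Q|}{\Delta+1}.
\]
Since $|S|\le 2\Delta$, the mean is
\[
\mu \;\bydef\; \Expect\!\sqbrack{\sum_{v\in S} Z_v} \;\le\; \frac{2\Delta\, p\,|Q|}{\Delta+1} \;<\; 2p\,|Q|.
\]
Because $p<1/4$ is a fixed constant, we have $\mu \le (1-\eta)\cdot |Q|/2$ for a constant $\eta = 1 - 4p > 0$ of slack.

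Now invoke a multiplicative Chernoff bound on the independent 0-1 sum $\sum_{v\in S} Z_v$ with expectation at most $2p|Q|$ and threshold $|Q|/2 = (1+\delta)\cdot 2p|Q|$ for the constant $\delta = 1/(4p)-1 > 0$:
\[
\Prob\!\sqbrack{Y \ge |Q|/2} \;\le\; \Prob\!\sqbrack{\sum_{v\in S} Z_v \ge (1+\delta)\,\mu} \;\le\; \exp\!\paren{-\Omega(|Q|)}.
\]
This yields the claimed probability $1 - \exp(-\Omega(|Q|))$ that fewer than $|Q|/2$ colors of $Q$ are selected.

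There is essentially no obstacle here; the only thing to be mindful of is that one must (i) pass from the ``distinct colors'' count $Y$ to an overcount $\sum Z_v$ so that independence across vertices can be used, and (ii) exploit the strict inequality $p<1/4$ to get the constant-factor gap between $\mu$ and the threshold $|Q|/2$ needed for a Chernoff-style exponential tail bound. No finer analysis of color collisions or inclusion--exclusion is necessary.
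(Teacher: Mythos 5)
Your proof is correct, and it takes a genuinely different route from the paper's. The paper indexes by \emph{colors}: it defines, for each $c\in Q$, the indicator $E_c$ that $c$ is selected by some vertex of $S$, shows $\Prob[E_c]<2p$, and then must invoke the fact that the collection $\{E_c\}$ is \emph{negatively correlated} (a nontrivial balls-and-bins lemma, which the paper cites) in order to apply a Chernoff bound to $X=\sum_c E_c$. You instead index by \emph{vertices}: you define $Z_v$ for $v\in S$ to be the indicator that $v$ selects a color in $Q$, observe that the distinct-color count $Y$ is stochastically dominated by the overcount $\sum_{v\in S}Z_v$, and then exploit the genuine mutual independence of the $Z_v$ (each depends only on $v$'s private randomness) to apply the standard Chernoff bound. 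What your route buys is that you sidestep the negative-correlation machinery entirely; what it costs is a looser intermediate bound (overcounting colors selected more than once), but since $\Expect[\sum_v Z_v]<2p|Q|$ still sits a constant factor below the threshold $|Q|/2$ (thanks to $p<1/4$), the final tail estimate $\exp(-\Omega(|Q|))$ is unaffected. Both arguments are sound; yours is arguably the more elementary of the two.
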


\begin{proof}
Let $E_c$ denote the event that color $c$  is selected by at least one vertex in $S$.
Then $\Prob[E_c] \leq \frac{p |S|}{\Delta+1} < 2p < 1/2$, since $p < 1/4$ and $|S| \leq 2\Delta$.
Moreover, the collection of events $\{E_c\}$ are negatively correlated~\cite{DubhashiR98}.

Let $X$ denote the number of colors in $Q$ that are selected by some vertices in $S$.
By linearity of expectation, $ \Expect[X] < 2p \cdot |Q|$.
We apply a Chernoff bound with $\delta = \frac{(1/2) - 2p}{2p}$ and $\mu = 2p \cdot |Q|$.
Recall that $0 < p < 1/4$, and so $\delta > 0$.
For any constant $\delta > 0$, we have:
\[
\Prob[X \geq (1+\delta)\mu = |Q|/2] = \exp(-\Omega(|Q|)). \qedhere
\]
\end{proof}

\begin{lemma}\label{lem:aux2}
Fix a sufficiently small $\epsilon>0$.
Consider a set of vertices $S = \{u_1, \ldots, u_k\}$ with cardinality $\epsilon \Delta/2$.
Let $Q$ be a set of colors such that each $u_i \in S$ satisfies $|\Psi(u_i) \cap Q| \geq (1 - \epsilon/2)(\Delta+1)$.
Moreover, each $u_i \in S$ is associated with a vertex set $R_i$ such that (i) $S \cap R_i = \emptyset$, and (ii) $|R_i| \leq 2\Delta$.
Then, with probability at least $1 - \exp(-\Omega(\epsilon^2 \Delta))$, there are at least $p \epsilon(\Delta+1) / 8$
vertices $u_i \in S$ such that the color $c$ selected by $u_i$ satisfies (i) $c \in Q$, and (ii) $c$ is not selected by any vertex in $R_i \cup S \setminus \{u_i\}$.
\end{lemma}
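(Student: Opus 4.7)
}
The plan is to first condition on the randomness of vertices outside $S$ (which determines the colors taken by each $R_i$), lower bound the conditional expectation of the number of good $u_i$'s, then use McDiarmid on the remaining randomness inside $S$.

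First, I would apply Lemma~\ref{lem:aux} for each $i$ with $S' = R_i$ (of size at most $2\Delta$) and color set $Q' = \Psi(u_i)\cap Q$ (of size at least $(1-\epsilon/2)(\Delta+1)$). This yields that, with probability at least $1 - \exp(-\Omega(\Delta))$, fewer than $|\Psi(u_i)\cap Q|/2$ colors of $\Psi(u_i)\cap Q$ are selected by $R_i$. Union-bounding over the $k = \epsilon\Delta/2$ values of $i$, with probability at least $1 - \exp(-\Omega(\Delta))$ the set $Q_i' \bydef (\Psi(u_i)\cap Q)\setminus\{c(v): v\in R_i\}$ satisfies $|Q_i'|\ge (1-\epsilon/2)(\Delta+1)/2$ for every $i$; call this event $\mathcal{G}$. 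Since $R_i\cap S = \emptyset$ for all $i$, $\mathcal{G}$ depends only on the randomness outside $S$, so conditioning on $\mathcal{G}$ does not change the distribution of the participation bits and color choices of the vertices in $S$.

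Second, let $Y_i$ be the indicator that $u_i$ participates, picks a color $c \in Q_i'$, and no other $u_j \in S$ picks $c$, and set $X \bydef \sum_i Y_i$. Given $\mathcal{G}$, the $u_i$'s act independently, and
\[
\Expect[Y_i \mid \mathcal{G}] \;\ge\; p\cdot \frac{|Q_i'|}{\Delta+1} \cdot \left(1-\frac{p}{\Delta+1}\right)^{k-1} \;\ge\; \frac{p(1-\epsilon/2)}{2}\cdot\bigl(1-p\epsilon/2 - O(1/\Delta)\bigr) \;\ge\; \alpha p
\]
for some constant $\alpha > 1/3$, provided $\epsilon$ is sufficiently small. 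Summing over $i$ gives $\Expect[X \mid \mathcal{G}] \ge \alpha p\cdot k = \Theta(p\epsilon\Delta)$, which exceeds the target $p\epsilon(\Delta+1)/8$ by a constant factor $\Theta(p\epsilon\Delta)$.

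Third, I would apply McDiarmid's inequality to $X$, treating it as a function of the $k$ independent random bundles $(r_{u_1},\ldots, r_{u_k})$, where each $r_{u_i}$ records $u_i$'s participation bit and color choice. The crucial claim is a bounded-differences constant of $O(1)$: changing $r_{u_i}$ can alter $Y_i$ by at most $1$, and can flip $Y_j$ only for those $u_j$ whose color equals $u_i$'s old or new color; moreover, by a ``crowd-protection'' argument, if two or more $u_j$'s share a color they already collide with each other, so $u_i$'s switch can flip at most \emph{one} $Y_j$ on the old-color side and \emph{one} on the new-color side. Hence $|\Delta X| \le 3$ per coordinate, and McDiarmid gives
\[
\Prob\bigl[X < p\epsilon(\Delta+1)/8 \,\big|\, \mathcal{G}\bigr] \;\le\; \exp\!\left(-\Omega\!\left(\frac{(p\epsilon\Delta)^2}{k}\right)\right) \;=\; \exp(-\Omega(p^2\epsilon\Delta)) \;=\; \exp(-\Omega(\epsilon\Delta)).
\]
Combining with $\Prob[\overline{\mathcal{G}}] \le \exp(-\Omega(\Delta))$ and noting that $\exp(-\Omega(\epsilon\Delta)) \le \exp(-\Omega(\epsilon^2\Delta))$ for $\epsilon < 1$ completes the proof.

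The main obstacle will be justifying the constant bounded-differences constant in Step~3: a naive accounting could blow it up to $\Omega(k)$, which would render the concentration too weak. The ``crowd-protection'' observation is what keeps it at $3$, and it should be stated and checked carefully across all four sub-cases of how $r_{u_i}$ can change (participation on/off, color change, both).
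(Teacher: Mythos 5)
Your proposal is correct, but it takes a genuinely different route from the paper's own proof. The paper decomposes the count of ``happy'' vertices into a lower bound $X - 2Y$, where $X_i$ is the indicator that $u_i$ selects a color $c\in Q_i \bydef \Psi(u_i)\cap Q$ not selected by any vertex in $R_i$ (lumped with the rare bad event $E_i^{\text{bad}}$), and $Y_i$ is the indicator of a color collision between $u_i$ and an earlier $u_j\in S$; it then bounds $X$ from below and $Y$ from above via Chernoff-type bounds on binomial dominators, using the fact that $\Prob[X_i=1\,|\,\cdot\,]\ge p/3$ and $\Prob[Y_i=1\,|\,\cdot\,]\le p\epsilon/2$ hold under any conditioning on the external randomness and the previously revealed $u_j$'s. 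Your approach skips this decomposition entirely: you define $Y_i$ to be precisely the ``happy'' indicator, pull the intra-$S$ collision penalty into the factor $(1-p/(\Delta+1))^{k-1} = 1-O(p\epsilon)$ of the conditional expectation, and then apply McDiarmid to the sum, with the crowd-protection observation doing the work of keeping the bounded-differences constant at $3$ rather than $\Omega(k)$. Both proofs are valid, and both require the same preliminary conditioning on the event $\mathcal{G}$ (that no $E_i^{\text{bad}}$ occurs) with the same $\exp(-\Omega(\Delta))$ cost. Two things the different routes buy: your argument avoids the slightly awkward $X-2Y$ bookkeeping (where each collision can destroy two candidate vertices), and, because you never separately upper-bound a small variable $Y$ with expectation $\Theta(p\epsilon^2\Delta)$, your final failure probability $\exp(-\Omega(p^2\epsilon\Delta)) = \exp(-\Omega(\epsilon\Delta))$ is actually stronger than the paper's $\exp(-\Omega(\epsilon^2\Delta))$, which is driven entirely by the concentration of $Y$. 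Since $\exp(-\Omega(\epsilon\Delta)) \le \exp(-\Omega(\epsilon^2\Delta))$ for $\epsilon<1$, this comfortably implies the stated bound. You correctly flagged the bounded-differences constant as the delicate step; your crowd-protection argument is right (if two or more of $\{u_j : j\neq i\}$ share the old or new color of $u_i$, they already kill each other regardless of $u_i$, so only the unique claimant on each side can flip), and the exhaustive case-check you promised in the last paragraph is the appropriate level of care, but the observation as stated is already complete.
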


\begin{proof}
Define $Q_i = \Psi(u_i) \cap Q$. We call a vertex $u_i$ {\em happy} if
$u_i$ selects some color $c\in Q$ and $c$ is not selected by any vertex in $R_i \cup S \setminus \{u_i\}$.
Define the following events.
\begin{description}
\item $E_i^{\text{good}}$:  $u_i$ selects a color $c \in Q_i$ such that $c$ is not selected by any vertices in $R_i$.
\item $E_i^{\text{bad}}$: the number of colors in $Q_i$ that are selected by some vertices in $R_i$ is at least $|Q_i|/2$.
\item $E_i^{\text{repeat}}$: the color selected by $u_i$ is also selected by some vertices in $\{u_1, \ldots, u_{i-1}\}$.
\end{description}
Let $X_i$ be the indicator random variable that \emph{either} $E_i^{\text{good}}$ or $E_i^{\text{bad}}$ occurs, and let $X = \sum_{i=1}^k X_i$.
Let $Y_i$ be the indicator random variable that $E_i^{\text{repeat}}$ occurs, and let $Y = \sum_{i=1}^k Y_i$.
\emph{Assuming that $E_i^{\operatorname{bad}}$ does not occur} for each $i\in [1,k]$,
it follows that $X - 2Y$ is a lower bound on the number of happy vertices.
Notice that by Lemma~\ref{lem:aux}, $\Prob[E_i^{\text{bad}}] = \exp(-\Omega(|Q_i|)) = \exp(-\Omega(\Delta))$.
Thus, assuming that no $E_i^{\text{bad}}$ occurs merely distorts our probability estimates by a negligible $\exp(-\Omega(\Delta))$.
We prove concentration bounds on $X$ and $Y$, which together imply the lemma.

We show that $X \geq p \epsilon \Delta / 7$ with probability $1 - \exp(-\Omega(\epsilon \Delta))$.
It is clear that
\[
\Prob[X_i = 1] \ge \Prob\mathopen{}\left[E_i^{\text{good}} \;|\; \overline{E_i^{\text{bad}}}\mathclose{}\right] \geq \frac{p \cdot |Q_i|/2}{\Delta+1} \geq \frac{p (1 - \epsilon/2)}{2} > \frac{p}{3}.
\]
Moreover, since $\Prob[X_i = 1 \;|\; E_i^{\text{bad}}] = 1$,
the above inequality also holds, when conditioned on \emph{any} colors selected by vertices in $R_i$.
Thus, $\Prob[X \leq t]$ is upper bounded by
$\Prob[\text{Binomial}(n',p') \leq t]$ with $n' = |S| = \epsilon \Delta/2$ and $p' = \frac{p}{3}$.
We set $t = p \epsilon \Delta / 7$. Notice that $n' p' = p \epsilon \Delta / 6 > t$.
Thus, according to a Chernoff bound on the binomial distribution,
$\Prob[X \leq t] \leq \exp(\frac{-(n' p' - t)^2}{2 n'  p'}) = \exp(-\Omega(\epsilon \Delta))$.


We show that $Y \leq p \epsilon^2 \Delta/2$ with probability $1 - \exp(-\Omega(\epsilon^2 \Delta))$. It is clear that $\Prob[Y_i = 1] \leq \frac{p (i-1)}{\Delta+1} \leq \frac{p \epsilon}{2}$,
even if we condition on arbitrary colors selected by vertices in $\{u_1, \ldots, u_{i-1}\}$.
We have $\mu = \Expect[Y] \leq \frac{p \epsilon}{2} \cdot |S| =  \frac{p \epsilon^2 \Delta}{4}$.
Thus, by a Chernoff bound (with $\delta = 1$),
$\Prob[Y \geq p \epsilon^2 \Delta/2] \leq
\Prob[Y \geq (1+\delta)\mu] \leq \exp(-\delta^2 \mu / 3) = \exp(-\Omega(\epsilon^2 \Delta))$.

To summarize, with probability at least $1 - \exp(-\Omega(\epsilon^2 \Delta))$,
we have $X - 2Y \geq p \epsilon \Delta / 7 - 2p\epsilon^2 \Delta/2 > p \epsilon(\Delta+1) / 8$.
\end{proof}

Lemma~\ref{lem:part1} considers the case when a large fraction of $v$'s neighbors are likely to color themselves
with colors outside the palette of $v$, and therefore be counted by $f_1(v)$.
This lemma holds regardless of whether $v$ is $\epsilon$-sparse or not.

\begin{lemma}\label{lem:part1}
Suppose that there is a subset $S \subseteq N(v)$ such that
$|S| = \epsilon \Delta/5$, and
for each $u \in S$, $|\Psi(u) \setminus \Psi(v)| \geq \epsilon (\Delta+1)/5$.
Then $f_1(v) \geq \frac{p \epsilon^2 \Delta}{100}$ with probability at least $1 - \exp(-\Omega(\epsilon^2 \Delta))$.
\end{lemma}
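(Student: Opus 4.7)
The plan is to lower-bound $f_1(v)$ by counting vertices in $S$ that successfully color themselves with a color in $Q_u \bydef \Psi(u)\setminus\Psi(v)$; by hypothesis each $|Q_u|\ge\epsilon(\Delta+1)/5$. The argument closely mirrors the proof of Lemma~\ref{lem:aux2}, but is adapted to the present setting where $Q$ varies across vertices and occupies only an $\epsilon$-fraction of each palette; this change necessitates a more careful handling of ``internal'' collisions within $S$.

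For each $u\in S$, define the bad event $E_u^{\text{bad}}$ that more than $|Q_u|/2$ colors of $Q_u$ are taken by vertices in $\Nout(u)\setminus S$, and the good event $E_u^{\text{good}}$ that $u$ participates and picks some $c\in Q_u$ unblocked by $\Nout(u)\setminus S$. Lemma~\ref{lem:aux} applied with $Q = Q_u$ gives $\Pr[E_u^{\text{bad}}] \le \exp(-\Omega(\epsilon\Delta))$, and conditional on $\overline{E_u^{\text{bad}}}$ at least $|Q_u|/2$ colors of $Q_u$ remain externally free, so $\Pr[E_u^{\text{good}} \mid \overline{E_u^{\text{bad}}}] \ge p\epsilon/10$.

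I would first expose the random bits of $V\setminus S$; each $E_u^{\text{bad}}$ is then determined, and the indicators $X_u = \mathbf{1}[E_u^{\text{good}} \text{ or } E_u^{\text{bad}}]$ depend only on $u$'s own bits, so they are mutually independent Bernoullis with conditional success rate $\ge p\epsilon/10$. Combined with a Markov argument bounding $|\{u : E_u^{\text{bad}}\}|$ (whose expectation is a vanishing fraction of $|S|$), a Chernoff bound yields $\sum_u X_u = \Omega(p\epsilon^2\Delta)$ with probability $1-\exp(-\Omega(\epsilon^2\Delta))$.

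Finally, to convert $\sum X_u$ into a bound on $f_1(v)$, subtract failures due to internal collisions. Order $S = \{u_1,\ldots,u_k\}$ by ID and let $Y_i$ indicate that $u_i$'s color coincides with that of some $u_j \in \Nout(u_i)\cap S$ with $j < i$. Exposing the bits of $S$ in this order, $\Pr[Y_i = 1 \mid \text{prev.\ bits}] \le p\cdot|T_i|/(\Delta+1)$, where $|T_i|$ is the number of participating predecessors in $\Nout(u_i)\cap S$. After conditioning on the high-probability event $|P\cap S| \le 2p|S|$ (with $P$ the participating set), $|T_i| \le 2p\epsilon\Delta/5$, so the $Y_i$'s are stochastically dominated by i.i.d.\ Bernoullis with parameter $O(p^2\epsilon)$; a Chernoff bound then gives $\sum Y_i = O(p^2\epsilon^2\Delta)$ w.h.p. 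Since $p<1/4$, this is a constant-factor smaller than $\sum X_u$, so $f_1(v) \ge \sum X_u - \sum Y_i - |\{u:E_u^{\text{bad}}\}| \ge p\epsilon^2\Delta/100$ holds with probability $1 - \exp(-\Omega(\epsilon^2\Delta))$. The main obstacle is obtaining exponential concentration for $\sum Y_i$ despite the shared color randomness across ordered vertices; the sequential-exposure trick combined with conditioning on $|P\cap S|$ is what enables stochastic dominance and makes Chernoff applicable.
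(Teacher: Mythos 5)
Your route differs from the paper's and is workable in outline, but it is more roundabout than necessary and has two loose ends. The paper defines the good/bad events with respect to the \emph{full} out-neighborhood $R_i=\Nout(u_i)$ (not $\Nout(u_i)\setminus S$). With that choice $E_i^{\mathrm{good}}$ already implies that $u_i$ successfully colors itself, so no $Y_i$ correction term is needed. What you lose is independence after exposing $V\setminus S$, but the paper recovers it in the only form required via the ID ordering: since $u_i\notin R_j$ for all $j<i$, the bound $\Prob\bigl[X_i=1\mid\text{colors of }R_i\cup\bigcup_{j<i}(R_j\cup\{u_j\})\bigr]\ge p\epsilon/10$ holds with $u_i$'s own randomness still unexposed, so $X$ stochastically dominates $\operatorname{Binomial}(|S|,p\epsilon/10)$ from below and one lower-tail Chernoff bound finishes the proof. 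Your approach pays for the (artificial) independence of the $X_u$'s by reintroducing the within-$S$ conflicts as a separate $\sum Y_i$ term.

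As written, two steps do not close. First, a ``Markov argument'' on $|\{u:E_u^{\mathrm{bad}}\}|$ cannot produce an $\exp(-\Omega(\epsilon^2\Delta))$ failure probability; the right tool is a union bound $\Prob[\exists u:E_u^{\mathrm{bad}}]\le|S|\cdot\exp(-\Omega(\epsilon\Delta))$, which is what the paper implicitly uses. Second, your constants do not cover the full range $p\in(0,1/4)$. With the crude bound $|T_i|\le|P\cap S|\le 2p|S|$, the mean of $\sum Y_i$ is about $\frac{2p^2\epsilon^2\Delta}{25}$, while the slack you have between $\sum X_u\approx\frac{p\epsilon^2\Delta}{50}$ and the target $\frac{p\epsilon^2\Delta}{100}$ is only $\frac{p\epsilon^2\Delta}{100}$; these are compatible only for $p\lesssim 1/8$. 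This is fixable — e.g., bound $\sum Y_i$ by the number of participating colliding pairs inside $S$, whose mean is $\le\binom{|S|}{2}p^2/(\Delta+1)\approx\frac{p^2\epsilon^2\Delta}{50}$, leaving a margin of order $(1-p)\frac{p\epsilon^2\Delta}{50}$ — but the proposal as stated does not establish the claimed constant for all $p<1/4$.
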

\begin{proof}
Let $S = (u_1, \ldots, u_k)$ be sorted in increasing order by ID.
Define $R_i = \Nout(u_i)$, and $Q_i = \Psi(u_i) \setminus \Psi(v)$.
Notice that $|Q_i| \geq \epsilon \Delta / 5$. Define the following events.
\begin{description}
\item $E_i^{\text{good}}$: $u_i$ selects a color $c \in Q_i$ and $c$ is not selected by any vertex in $R_i$.
\item $E_i^{\text{bad}}$: the number of colors in $Q_i$ that are selected by vertices in $R_i$ is more than $|Q_i|/2$.
\end{description}

Let $X_i$ be the indicator random variable that either $E_i^{\text{good}}$ or $E_i^{\text{bad}}$ occurs, and let $X = \sum_{i=1}^k X_i$.
Given that the events $E_i^{\text{bad}}$ for all $i\in[1,k]$ do not occur, we have $X \leq f_1(v)$,\footnote{In general, $X$ does not necessarily equal $f_1(v)$, since in the calculation of $X$ we only consider the vertices in $S$, which is a subset of $N(v)$.} since
if $E_i^{\text{good}}$ occurs, then $u_i$ successfully colors itself by some color $c \notin \Psi(v)$.
By  Lemma~\ref{lem:aux}, $\Prob[E_i^{\text{bad}}] = \exp(-\Omega(|Q_i|)) = \exp(-\Omega(\epsilon \Delta))$.
Thus, up to this negligible error, we can assume that $E_i^{\text{bad}}$ does not occur, for each $i\in[1,k]$.

We show that $X \geq \epsilon^2 \Delta / 100$ with probability $1 - \exp(-\Omega(\epsilon^2 \Delta))$.
It is clear that $\Prob[X_i = 1] \geq \Prob[E_i^{\text{good}} \;|\; \overline{E_i^{\text{bad}}}] \geq \frac{p |Q_i|/2}{\Delta+1} \geq \frac{p \epsilon}{10}$, and this inequality holds even when conditioning on {\em any} colors selected by vertices in $R_i$ and $\bigcup_{1 \leq j < i} R_j \cup \{u_j\}$. 
Since $S = (u_1, \ldots, u_k)$ is sorted in increasing order by ID, $u_i \notin R_j = \Nout(u_j)$ for any $j \in [1,i)$.
Thus, $\Prob[X \leq t]$ is upper bounded by $\Prob[\text{Binomial}(n',p') \leq t]$ with $n' = |S| = \epsilon \Delta/5$ and $p' = \frac{p \epsilon}{10}$.
We set $t = \frac{n' p'}{2} = \frac{p \epsilon^2 \Delta}{100}$.
Thus, according to a lower tail of the binomial distribution,
$\Prob[X \leq t] \leq \exp\left(\frac{-(n' p' - t)^2}{2 n'  p'}\right) = \exp(-\Omega(\epsilon^2 \Delta))$.
\end{proof}

Lemma~\ref{lem:part2} considers the case that many pairs of neighbors of $v$ are likely to color
themselves the same color, and contribute to $f_2(v)$.   Notice that any $\epsilon$-sparse vertex
that does not satisfy the preconditions of Lemma~\ref{lem:part1} \emph{does} satisfy the
preconditions of Lemma~\ref{lem:part2}.

\begin{lemma}\label{lem:part2}
Let $v$ be an $\epsilon$-sparse vertex.
Suppose that there is a subset $S \subseteq N(v)$ such that
$|S| \geq  (1-\epsilon/5) \Delta$, and
for each $u \in S$, $|\Psi(u) \cap \Psi(v)| \geq (1-\epsilon/5)(\Delta+1)$.
Then $f_2(v) \geq p^3 \epsilon^2 \Delta / 2000$ with probability at least $1 - \exp(-\Omega(\epsilon^2 \Delta))$.
\end{lemma}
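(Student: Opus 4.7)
The plan is to mirror Lemma~\ref{lem:part1}: identify a collection of ``doubling'' events whose total count equals $f_2(v)$, then lower-bound this count by a sequential Chernoff argument driven by the non-edge structure implied by $\epsilon$-sparsity.

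Step 1 (Non-edges in $N(v)\cap S$): Since $v$ is $\epsilon$-sparse and $|N(v)|\geq|S|\geq(1-\epsilon/5)\Delta$, the number of $\epsilon$-non-friends of $v$ lying in $N(v)\cap S$ is at least $(1-\epsilon/5)\Delta-(1-\epsilon)\Delta-\epsilon\Delta/5\geq (3\epsilon/5)\Delta$; call this anchor set $A$. Each anchor $u\in A$ has $|N(u)\cap N(v)|<(1-\epsilon)\Delta$, hence at least $(\epsilon/2)\Delta$ non-neighbors inside $N(v)\cap S$. Summing, the number of non-edges inside $N(v)\cap S$ is $\Omega(\epsilon^2\Delta^2)$.

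Step 2 (Doubling indicator): List the vertices of $N(v)$ in increasing ID order as $u_1,\ldots,u_k$ and expose the \oneshot{} random bits in this order. Define $Z_i=1$ iff $u_i$ participates, picks a color $c_i\in\Psi(v)\cap\Psi(u_i)$ for which exactly one $u_j$ with $j<i$ has already successfully colored itself $c_i$, and $u_i$ itself succeeds (i.e., no $u_{j'}\in N_{\text{out}}(u_i)$ has picked $c_i$). Two facts: (a)~each color $c$ with final $X_c\geq 2$ has a unique ``doubling step'' at which its count first reaches $2$, so the identity $f_2(v)=\sum_i Z_i$ holds; (b)~the event $Z_i=1$ forces the unique earlier picker $u_j$ of $c_i$ to be non-adjacent to $u_i$, since otherwise $u_j\in N_{\text{out}}(u_i)$ would cause $u_i$ to fail on $c_i$. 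Thus the non-edge structure from Step~1 is precisely what makes $\sum_iZ_i$ nontrivial.

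Step 3 (Conditional expectation lower bound): Let $\mathcal{F}_{i-1}$ be the $\sigma$-algebra of bits exposed before step $i$. Then $\Prob[Z_i=1\mid\mathcal{F}_{i-1}]=\frac{p}{\Delta+1}|B^{(i-1)}|$, where $B^{(i-1)}$ is the set of colors $c\in\Psi(v)\cap\Psi(u_i)$ with $X_c^{(i-1)}=1$, with unique previous picker outside $N(u_i)$, and with no vertex of $N_{\text{out}}(u_i)$ having picked $c$. I would focus on anchor indices $u_i\in A$ in a middle range of $i$ (so that a constant fraction of $N(v)$ has already been processed). A Poisson-style computation shows $\Omega(p\Delta)$ colors in $\Psi(v)\cap\Psi(u_i)$ have $X_c^{(i-1)}=1$ in expectation; since $u_i$ has $\Omega(\epsilon\Delta)$ non-neighbors in $N(v)$, an $\Omega(\epsilon)$-fraction of those singly-picked colors have their unique picker outside $N(u_i)$; and the out-neighbor-interference condition holds with probability $1-O(p)$ by an application of Lemma~\ref{lem:aux}. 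A union bound of Lemma~\ref{lem:aux}-style concentration bounds over the past random bits shows that, except on a bad-past event of probability $\exp(-\Omega(\epsilon^2\Delta))$, $|B^{(i-1)}|=\Omega(\epsilon p\Delta)$ holds simultaneously for every middle-range anchor, so $\Prob[Z_i=1\mid\mathcal{F}_{i-1}]\geq\Omega(\epsilon p^2)$ on this good event.

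Step 4 (Martingale Chernoff): Conditioned on the good-past event, the predictable sum $Q=\sum_{u_i\in A,\text{ middle}}\Expect[Z_i\mid\mathcal{F}_{i-1}]$ satisfies $Q\geq|A|/2\cdot\Omega(\epsilon p^2)=\Omega(\epsilon^2 p^2\Delta)$. A Freedman-type multiplicative Chernoff bound applied to the Doob submartingale for $\sum_i Z_i$ then yields $\sum_i Z_i\geq Q/2$ with failure probability $\exp(-\Omega(Q))=\exp(-\Omega(\epsilon^2 p^2\Delta))$, which equals $\exp(-\Omega(\epsilon^2\Delta))$ since $p\in(0,1/4)$ is a fixed constant. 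Since $Q/2=\Omega(\epsilon^2 p^2\Delta)\geq p^3\epsilon^2\Delta/2000$ (the slack is a factor of $p$), the lemma follows. The hard part is Step~3: controlling the joint distribution of past random bits finely enough to get a uniform lower bound on $|B^{(i-1)}|$ across all middle-range anchors, which requires layering three Lemma~\ref{lem:aux}-style concentration bounds---on the total picker count per color, on the non-adjacency of the unique picker, and on the out-neighbor interference.
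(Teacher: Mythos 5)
Your decomposition $f_2(v)=\sum_i Z_i$ (defining $Z_i$ as the ``doubling step'' indicator when $N(v)$ is processed in ID order) is a genuinely different and arguably nicer combinatorial identity than the paper's one-sided bound $X - Y \le f_2(v)$, and the skeleton of Steps 1, 2, and 4 is sound once the ``middle range'' is taken to be, say, the top-half-by-ID anchors (of which there are still $\Omega(\epsilon\Delta)$).  However, Step 3 --- which you yourself flag as the hard part --- is not a routine application of Lemma~\ref{lem:aux}-type bounds, and as sketched it has a real gap.  The quantity $|B^{(i-1)}|$ is the number of colors in $\Psi(v)\cap\Psi(u_i)$ that are (a) singly-and-successfully selected among the already-processed prefix of $N(v)$, (b) whose unique picker avoids $N(u_i)$, and (c) not picked by $N_{\mathrm{out}}(u_i)$.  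Condition (a) alone is a functional of the joint success pattern of roughly $\Delta$ vertices with correlated, palette-dependent choices; it is not the number of selected colors in a fixed set (which is what Lemma~\ref{lem:aux} controls), but the number of colors selected \emph{exactly once by a successful picker}.  Establishing a uniform $\Omega(\epsilon p \Delta)$ lower bound on this, simultaneously over $\Omega(\epsilon\Delta)$ anchors, with failure probability $\exp(-\Omega(\epsilon^2\Delta))$, is itself a lemma-sized piece of work that your proposal defers.  The ``Poisson-style computation'' gives you expectation, not concentration, and the three layered aux-bounds are not actually exhibited.

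The paper sidesteps this entirely by a structural decoupling you did not use: it carves out a \emph{small} anchor set $S'$ of size $p\epsilon\Delta/100$ and, for each $u_i\in S'$, a companion set $S_i\subseteq S\setminus(S'\cup N(u_i))$ of size $\epsilon\Delta/2$, then \emph{reveals all random bits of $V\setminus S'$ first}.  Because $Q_i^{\text{good}}$ is defined solely in terms of selections by $S_i$ and $(\Nout(w)\cup\Nout(u_i))\setminus S'$, it is a fixed set of colors once $V\setminus S'$ is revealed, and Lemma~\ref{lem:aux2} gives the needed $\Omega(p\epsilon\Delta)$ lower bound with the right failure probability directly, without any running-process bookkeeping.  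The anchors $S'$ are then processed sequentially against this frozen background, with $Y$ absorbing the (rare) anchor-anchor collisions.  Your martingale Step 4 would also need to be formalized through a stopping-time version of Freedman's inequality, since ``$Q\ge\Omega(\epsilon^2 p^2\Delta)$'' is only revealed at the end of the process; this is standard but is another piece you are waving at rather than doing.  In short: different and interesting route, correct identity, but the core concentration claim is asserted rather than proved, and it is precisely the step the paper's companion-set construction is designed to render trivial.
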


\begin{proof}
Let $S'=\{u_1, \ldots, u_k\}$ be any subset of $S$ such that
(i) $|S'| = \frac{p \epsilon \Delta}{100}$,
(ii) for each $u_i \in S'$, there exists a set $S_i \subseteq S \setminus (S' \cup N(u_i))$ of size $\frac{\epsilon \Delta}{2}$.
The existence of $S',S_1,\ldots,S_{k}$ is guaranteed by the $\epsilon$-sparseness of $v$.  In particular, $S$ must contain
at least $\epsilon\Delta - \epsilon\Delta/5 > p\epsilon\Delta/100 = |S'|$ \emph{non}-$\epsilon$-friends of $v$, and for each such non-friend $u_i\in S'$, we have
$|S \setminus (S'\cup N(u_i))| \ge
|S| - |S'| - |N(u_i)| \geq \Delta((1-\epsilon/5) - p\epsilon/100 - (1-\epsilon)) > \epsilon\Delta/2$.

Order the set $S' = \{u_1, \ldots, u_k\}$ in increasing order by vertex ID.
Define 
$Q_i = \Psi(u_i) \cap \Psi(v)$.
Define $Q_i^{\text{good}}$ as the subset of colors $c\in Q_i$ such that
$c$ is selected by some vertex $w \in S_i$, but $c$ is not selected by any vertex in $(\Nout(w) \cup \Nout(u_i)) \setminus S'$.
Define the following events.
\begin{description}
\item $E_i^{\text{good}}$:  $u_i$ selects a color $c \in Q_i^{\text{good}}$.
\item $E_i^{\text{bad}}$:  the number of colors in $Q_i^{\text{good}}$ is less than $p \epsilon(\Delta+1) / 8$.
\item $E_i^{\text{repeat}}$: the color selected by $u_i$ is also selected by some vertices in $\{u_1, \ldots, u_{i-1}\}$.
\end{description}

Let $X_i$ be the indicator random variable that either $E_i^{\text{good}}$ or $E_i^{\text{bad}}$ occurs, and let $X = \sum_{i=1}^k X_i$.
Let $Y_i$ be the indicator random variable that $E_i^{\text{repeat}}$ occurs, and let $Y = \sum_{i=1}^k Y_i$.
Suppose that $E_i^{\text{good}}$ occurs.
Then there must exist a vertex $w \in S_i$ such that both $u_i$ and $w$ successfully color themselves $c$. Notice that $w$ and $u_i$ are not adjacent.
Thus, $X - Y \leq f_2(v)$, given that $E_i^{\text{bad}}$ does not occur, for each $i\in [1,k]$.
Notice that $\Prob[E_i^{\text{bad}}] =  \exp(-\Omega(\epsilon^2 \Delta))$ (by Lemma~\ref{lem:aux2} and the definition of $Q_i^{\text{good}}$), and up to this negligible error we can assume that $E_i^{\text{bad}}$ does not occur. 
In what follows, we prove concentration bounds on $X$ and $Y$, which together imply the lemma.

We show that $X \geq \frac{p^3 \epsilon^2 \Delta}{1000}$ with probability $1 - \exp(-\Omega(\epsilon^2 \Delta))$.
It is clear that $\Prob[X_i = 1] \geq p \cdot \frac{p \epsilon (\Delta+1) / 8}{\Delta+1} = \frac{p^2 \epsilon}{8}$.\footnote{In the calculation of $X$, we first reveal all colors selected by vertices in $V \setminus S'$, and then we reveal the colors selected by $u_1, \ldots, u_k$ in this order. The value of $X_i$ is determined when the color selected by $u_i$ is revealed. Regardless of the colors selected by vertices in $V \setminus S'$ and $\{u_1, \ldots, u_{i-1}\}$, we have $\Prob[X_i = 1] \geq \frac{p^2 \epsilon}{8}$. }  
Thus, $\Prob[X \leq t]$ is upper bounded by $\Prob[\text{Binomial}(n',p') \leq t]$ with $n' = |S'| = \frac{p\epsilon \Delta}{100}$ and $p' = \frac{p^2 \epsilon}{8}$.
We set $t = \frac{p^3 \epsilon^2 \Delta}{1000} < n' p'$.
According to a tail bound of binomial distribution,
$\Prob[X \leq t] \leq \exp(\frac{-(n' p' - t)^2}{2 n'  p'}) = \exp(-\Omega(\epsilon^2 \Delta))$.

We show that $Y \leq \frac{p^3 \epsilon^2 \Delta}{2000}$ with probability $1 - \exp(-\Omega(\epsilon^2 \Delta))$.
It is clear that $\Prob[Y_i = 1] \leq p \cdot \frac{(i-1)}{\Delta+1} \leq \frac{p^2 \epsilon}{100}$ regardless of the colors selected by vertices in $\{u_1, \ldots, u_{i-1}\}$.
We have $\mu = \Expect[Y] \leq \frac{p^2 \epsilon}{100} \cdot |S'| =  \frac{p^3 \epsilon^2 \Delta}{10,000}$.
Thus, by a Chernoff bound (with $\delta = 4$),
$\Prob[Y \geq \frac{p^3 \epsilon^2 \Delta}{2000}] \leq \Prob[Y \geq (1+\delta)\mu] \leq \exp(-\delta \mu / 3) = \exp(-\Omega(\epsilon^2 \Delta))$.

To summarize, with probability at least $1 - \exp(-\Omega(\epsilon^2 \Delta))$, we have 
$X - Y \geq p^3 \epsilon^2 \Delta / 1000 
 - p^3\epsilon^2 \Delta / 2000 
= p^3 \epsilon^2 \Delta / 2000$.
\end{proof}

\begin{lemma}\label{lem:num-uncolored}
The number of  vertices in $N(v)$ that remain uncolored after \oneshot\ is at least $(1 - 1.5 p)|N(v)|$, with probability at least $1 - \exp(-\Omega(|N(v)|))$.
\end{lemma}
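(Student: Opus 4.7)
The proof will be essentially immediate: a neighbor of $v$ can only become colored if it first chooses to participate in Step~1 of \oneshot, and participation is a simple Bernoulli event with constant probability $p$. So the whole statement reduces to an upper-tail Chernoff bound on the number of \emph{participating} neighbors.

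Concretely, the plan is as follows. Let $X = \sum_{u \in N(v)} X_u$, where $X_u$ is the indicator that $u$ decides to participate in Step~1. The variables $\{X_u\}_{u\in N(v)}$ are mutually independent with mean $p$, so $X$ is a binomial random variable with parameters $(|N(v)|, p)$ and mean $\mu = p|N(v)|$. Since any neighbor $u$ that does not participate certainly remains uncolored after \oneshot, the number of uncolored vertices in $N(v)$ is at least $|N(v)| - X$. Therefore it suffices to show that $X \le 1.5\, p\, |N(v)|$ with probability at least $1 - \exp(-\Omega(|N(v)|))$.

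Applying the multiplicative Chernoff bound from Appendix~\ref{sect:tools} with $\delta = 1/2$, we obtain
\[
\Prob[X \geq 1.5\mu] \leq \exp\!\left(-\frac{\mu}{12}\right) = \exp(-\Omega(p\,|N(v)|)) = \exp(-\Omega(|N(v)|)),
\]
where the last equality uses that $p \in (0, 1/4)$ is a fixed constant. On this event we have $|N(v)| - X \ge (1 - 1.5p)|N(v)|$, completing the proof.

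There is no real obstacle here; the only mild subtlety worth stating explicitly is the observation that \oneshot\ can only color vertices that opt in during Step~1, so stochastically dominating the colored set by the participating set is tight enough to yield the claim. Every other aspect of \oneshot\ (the random color choice in Step~2 and the conflict test in Step~3) is irrelevant for this lemma, so we do not need to analyze them at all.
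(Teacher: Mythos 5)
Your proof is correct and follows essentially the same argument as the paper: bound the number of participating neighbors by a Chernoff bound with $\delta = 1/2$ and observe that non-participants remain uncolored. The paper's proof is identical in substance, writing the exponent as $(1/2)^2 p |N(v)|/3$, which matches your $\mu/12$.
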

\begin{proof}
Let $X$ be the number of vertices in $N(v)$ participating in \oneshot. It suffices to show that $X \leq  1.5 p |N(v)|$ with probability $1 - \exp(-\Omega(|N(v)|))$. Since a vertex participates with probability $p$,
\begin{align*}
\Prob[X \geq (1 + 1/2)p |N(v)|] &\leq \exp\left(-\frac{(1/2)^2 p |N(v)|}{3}\right) = \exp(-\Omega(|N(v)|))
\end{align*}
by Chernoff bound with $\delta = 1/2$.
\end{proof}

\end{document}